\documentclass[11pt]{article}

\usepackage[margin=0.85in]{geometry} 
\usepackage{graphicx} 
\usepackage{amsfonts}
\usepackage[round]{natbib}
\usepackage{amssymb,amsmath,amsthm}
\usepackage{bm}
\usepackage{algorithm,algpseudocode} 
\algnewcommand{\Inputs}[1]{%
  \State \textbf{Inputs:}
  \Statex \hspace*{\algorithmicindent}\parbox[t]{.8\linewidth}{\raggedright #1}
}
\algnewcommand{\Initialize}[1]{%
  \State \textbf{Initialize:}
  \Statex \hspace*{\algorithmicindent}\parbox[t]{.8\linewidth}{\raggedright #1}
}

\usepackage{todonotes}

\usepackage{xcolor}

\newcommand{\prog}[1]{\textsf{#1}}

\newcommand{\code}[1]{\texttt{#1}}

\usepackage{enumitem}
\usepackage[colorlinks=true, linkcolor=blue, citecolor=blue, urlcolor=blue]{hyperref}
\usepackage{ctable}

\allowdisplaybreaks

\newtheorem{proposition}{Proposition}

\def\bb{{\boldsymbol{b}}}

\def\bx{{\boldsymbol{x}}}
\def\by{{\boldsymbol{y}}}

\def\bz{{\boldsymbol{z}}}

\def\bB{{\boldsymbol{B}}}

\def\bX{{\boldsymbol{X}}}
\def\bY{{\boldsymbol{Y}}}

\def\bZ{{\boldsymbol{Z}}}

\def\bbeta{\boldsymbol{\beta}}

\def\btheta{\boldsymbol{\theta}}

\def\bmu{\boldsymbol{\mu}}

\def\bpi{\boldsymbol{\pi}}

\def\bvartheta{\boldsymbol{\vartheta}}

\def\bDelta{\boldsymbol{\Delta}}

\def\bSigma{\boldsymbol{\Sigma}}

\def\bOmega{\boldsymbol{\Omega}}

\title{
A New Look at Gaussian Mixtures in the Presence of Missing-at-Random Responses and Covariates
}
\author{Hung Tong$^1$, Antonio Punzo$^2$, Cristina Tortora$^3$}
\date{$^1$ Rowan University, USA \quad $^2$ University of Catania, Italy \quad $^3$ San Jos\'e State University, USA}

\begin{document}

\maketitle

\begin{abstract}
Missing values present a common challenge in statistical modeling, so handling them properly is an important research direction. Among the various mechanisms that can generate missing values, the most common is the missing-at-random (MAR) mechanism, where the probability of missingness depends only on observed and not on unobserved data.
This paper addresses the problem of estimating a multivariate linear regression model with multiple random covariates in the presence of MAR values in both the response and covariate spaces, using a maximum likelihood (ML) framework.
The proposed methodology models the joint distribution of responses and covariates through a conditional–marginal factorization of a multivariate Gaussian distribution, which can be interpreted as a reparameterization of the multivariate normal distribution when the variables can be naturally partitioned into responses and covariates.
Parameter estimation is performed via the expectation–maximization (EM) algorithm, which facilitates the imputation of missing values while preserving the distinct roles of responses and covariates.
We extend this framework to the model-based clustering setting by considering a mixture of multivariate linear regressions with multiple random covariates. This extension enables soft clustering under incomplete data and accommodates MAR values in both the multivariate responses and covariates. Hence, it represents one of the most general model-based clustering solutions for regression data available in the current literature.
The effectiveness of the methodology is demonstrated through a simulation study, and the advantages of the proposed reparameterization are illustrated on the Automobile dataset, which contains missing values.
\end{abstract}

\vspace{1em}
\noindent\textbf{Keywords:} Missing data, Multivariate regression, EM algorithm, Model-based clustering, Random covariates

\section{Introduction}
\label{sec:Introduction}

Missing values are commonly encountered in real-world data analyses.  They are particularly prevalent in psychology \citep{allison2009missing}, medical studies \citep{carpenter2021missing}, and survey research \citep{brick1996handling, mirzaei2022missing}, but they can occur in any application, and their treatment presents a significant challenge.
When the probability of missingness depends on observed data but remains independent of unobserved data, the missing-data mechanism is referred to as missing at random \citep[MAR;][Section 1.3]{little2019statistical}.
Addressing MAR values effectively is crucial for ensuring the validity and reliability of statistical procedures.

In this paper, we adopt the MAR assumption and consider incomplete datasets in which each row and each column contains at least one observed value---a condition representing the most general unconstrained scenario. Furthermore, we study the linear regression model in its most general multivariate and multiple form, where the $d$ available real-valued variables can be reasonably partitioned into $d_{\bY}$ responses $\bY$ and $d_{\bX}$ covariates $\bX$, with $d=d_{\bX}+d_{\bY}$. 
This partitioning is based on either the intrinsic meaning of the variables or the researcher's specific analytical interests. Therefore, it is important to preserve this partitioning to ensure the effectiveness of the analysis.
In this context, handling MAR values becomes more complex because missing values may occur in $\bY$, $\bX$, or both. 

For regression analysis, least-squares methods and maximum likelihood estimation are common approaches to address MAR values in $\bY$, $\bX$, or both \citep{little1992regression}. However, least squares methods are valid only under the missing completely at random (MCAR) mechanism---a special case of the MAR mechanism that is often unrealistic in practice. In contrast, maximum likelihood (ML) estimation offers several advantages under the MAR mechanism. Specifically, when data are missing at random, the missing data mechanism becomes ignorable \citep[][Section 6.2]{little2019statistical}. ML estimation also yields estimates that are consistent and asymptotically efficient \citep{little1992regression, schafer1997analysis}. Furthermore, it utilizes all available information, thereby reducing potential biases and improving efficiency compared to complete-case analysis \citep{rubin1976inference}. For these reasons, we adopt ML estimation using the expectation-maximization (EM) algorithm (\citealp{orchard1972missing}, \citealp{DemLai77}, and \citealp{little2019statistical}) for our proposed methodology. As thoroughly documented in \citet{little1992regression}, ML estimation for a general missing data pattern necessitates iterative methods. Accordingly, the EM algorithm offers a useful tool, as it is specifically designed to handle such estimation processes in the presence of missing values.
Building on this foundation, the EM algorithm has, in fact, been extensively employed in more recent literature to address datasets with MAR mechanisms in the context of cluster analysis, where no distinction is made between responses and covariates—the goal being to identify homogeneous groups of observations. Within this line of work, \citet{ghahramani_learning_1994} investigated the normal mixture, \citet{wang_robust_2004} the $t$ mixture, \citet{tong_model-based_2022} the contaminated normal mixture, \citet{wei_mixtures_2019} the mixtures of generalized hyperbolic and skew-$t$ distributions, and \citet{PillaySN} the mixture of scale-mixtures of skew-normal distributions, to cite a few. The results of these recent developments have already been applied in fields such as medicine and biology \citep{vellido2005handling, scheres2009introducing}.

In our proposed methodology, we employ a multivariate linear regression model with random covariates, in which the joint distribution of the responses $\bY$ and covariates $\bX$ is factorized into the product of the marginal distribution of $\bX$ and the conditional distribution of $\bY \mid \bX = \bx$. Additionally, both $\bX$ and $\bY \mid \bX = \bx$ are assumed to follow multivariate Gaussian distributions. As described in later sections, the normality assumption lays an important theoretical foundation for handling MAR values in model fitting. Although the factorization corresponds to modeling $\bY$ and $\bX$ jointly as a $d$-dimensional Gaussian distribution, it aligns with the distinct roles of the variables as responses and covariates, thus enhancing parameter interpretability. 
Beyond this, reparameterizing the joint Gaussian distribution through the conditional–marginal factorization provides a novel perspective on the implicit model-based imputation carried out during the EM iterations, elucidating how observed covariates contribute to the conditional expectations of missing responses and vice versa.
In addition to enabling MAR mechanisms within a regression framework with random covariates, a further innovation arises when this framework is extended to the cluster analysis context. In practice, incompleteness due to MAR values often coexists with unobserved heterogeneity induced by latent clusters, resulting in additional missing information regarding cluster membership. The proposed approach naturally accommodates this scenario under a model-based clustering framework by adopting a mixture of linear regressions with random covariates, assuming Gaussian distributions for both $\bY|\bX=\bx$ and $\bX$ within each component \citep{dang2017multivariate}. Unlike most existing mixture-of-regressions models (with fixed covariates), which restrict missingness to $\bY$, our formulation allows MAR values to occur in $\bY$, $\bX$, or both—an aspect seldom addressed in the current literature.

Since this formulation is a reparameterization of the joint $d$-variate Gaussian distribution (and, in the clustering extension, of a finite mixture of $d$-variate Gaussians), all classical asymptotic properties of ML estimation are automatically inherited. 
Under standard regularity conditions \citep{wald1949, lehmanncasella}, the ML estimator is consistent, asymptotically normal, and asymptotically efficient in Gaussian models, while for Gaussian mixtures, the established theory \citep{redner1984, mclachlan2000} ensures local consistency and characterizes the limiting distribution of estimators. 
The EM algorithm also preserves its monotonicity and convergence to stationary points \citep{wu1983}. 
Thus, although conceptually simple, the proposed parametrization provides full theoretical soundness while offering the interpretative advantages discussed above.

Within this framework, the EM algorithm facilitates ML estimation in the presence of MAR data, extending the advantages of the non-clustering case to a soft-clustering paradigm where each observation is assigned a probabilistic cluster membership. Hence, the proposed methodology unifies the treatment of MAR data and latent heterogeneity within a coherent ML framework that preserves interpretability and enhances robustness.

In this context, by treating covariates as random variables, they actively participate in the clustering process, thus influencing cluster formation. This property, referred to in the literature as \textit{assignment dependence}, stands in contrast to the \textit{assignment independence} assumption inherent in fixed-covariate approaches, which can be a significant limitation in many real-world data applications \citep{hennig2000identifiablity}. The need for simultaneous identification of homogeneous groups and clusterwise regression arises in many applied fields, including welding processes \citep{ganjigatti2007global, lee2019global}, transportation \citep{luo2006pavement, veeramisti2021clusterwise}, medicine \citep{miller2012clustering, kato2020clustering, zamaninasab2022cluster}, and environmental sciences \citep{bagirov2017prediction}, among others. Incorporating random covariates and MAR treatment further broadens the applicability of such methods.

To illustrate our model-based clustering method in the presence of MAR values, we consider the Automobile dataset \citep{schlimmer1985automobile} available from the UCI Machine Learning Repository. 
This dataset includes various automobile characteristics, such as risk ratings, normalized losses, and prices, with missing values in several attributes. 
Our analysis highlights the effectiveness of our approach in handling missing data while providing insightful interpretations and parameter estimates that incorporate the variability due to imputation.

The rest of the article is organized as follows. 
Section~\ref{sec:backgraund} contains some background information on multivariate linear regression with multiple random covariates, including some necessary propositions. Section~\ref{sec:MLR mar} introduces multivariate linear regression with multiple random covariates with missing values at random. 
Cluster-wise multivariate linear regression with multiple random covariates and missing values is introduced in Section \ref{sec:clusterwise_missing}. 
In Section~\ref{sec:Simulation_Study}, we describe the results of studies on simulated data to evaluate the performance of our model-based clustering approach. 
The assessment focuses on parameter recovery and clustering accuracy across varying levels of missingness in both responses and covariates, as well as the method’s robustness to deviations from Gaussian assumptions.
Section~\ref{sec:real data} illustrates the application to the Automobile data set. 
Finally, Section~\ref{sec:conclusion} concludes the paper.

\section{Multivariate linear regression with multiple random covariates}
\label{sec:backgraund}

This section provides an overview of the paper's essential background. Specifically, we briefly introduce multivariate linear regression with random covariates. 
We also include some propositions, with the corresponding proofs, that will be used in subsequent sections.
All vectors in this paper are assumed to be column vectors.

Let $p (\bx, \by; \bvartheta)$ be the parametric model, parameterized by $\bvartheta$, considered for the joint density of $\left(\bX^\top, \bY^\top\right)^\top$.
Under a multivariate linear regression with multiple random covariates, the joint density can be written as
\begin{align}\label{eq:densreg}
    p (\bx, \by; \bvartheta) = p (\by \mid \bx; \bvartheta_\bY) \, p (\bx; \bvartheta_\bX),
\end{align}
where $p (\by \mid \bx; \bvartheta_\bY)$ is the conditional density of $\bY|\bX=\bx$ with parameters $\bvartheta_\bY$ and $p (\bx; \bvartheta_\bX)$ is the marginal density of $\bX$ with parameters $\bvartheta_\bX$. 
We assume multivariate Gaussian distributions for both marginal and conditional densities such that
\begin{align*}
    \bX \sim \mathcal{G}_{d_\bX} \left( \bmu_\bX, \bSigma_\bX \right) \quad \text{and} \quad \bY \mid \bX=\bx \sim \mathcal{G}_{d_\bY} \left( \bmu_\bY (\bx; \bbeta), \bSigma_\bY \right),
\end{align*}
that is, $\bX$ follows a $d_\bX$-dimensional Gaussian distribution with mean $\bmu_\bX$ and covariance $\bSigma_\bX$, while $\bY|\bX=\bx$ follows a $d_\bY$-dimensional Gaussian distribution with covariance $\bSigma_\bY$ and mean, under a multivariate linearity assumption, given by
\begin{align*}
    \underset{d_\bY \times 1}{\bmu_\bY (\bx; \bbeta)} &= \bbeta^\top \begin{bmatrix} 1 \\ \bx \end{bmatrix} =  \bb_0 + \bB \bx.
\end{align*}
In the above, $\bbeta$ is a $(1 + d_\bX) \times d_\bY$ matrix of regression coefficients such that
\begin{align*}
    \underset{d_\bY \times (1 + d_\bX)}{\bbeta^\top} = \begin{bmatrix}
        \underset{d_\bY \times 1}{\bb_{0}}  & \underset{d_\bY \times d_\bX}{\bB}
    \end{bmatrix},
\end{align*}
where
\begin{align*}
    \underset{d_\bY \times 1}{\bb_{0}} = \begin{bmatrix} b_{01} \\ b_{02} \\ \vdots \\ b_{0d_\bY} \end{bmatrix} \quad \text{and} \quad \underset{d_\bY \times d_\bX}{\bB} = \begin{bmatrix} 
b_{11} & b_{21} & \cdots & b_{ d_\bX 1} \\
b_{12} & b_{22} & \cdots & b_{ d_\bX 2} \\
\vdots & \vdots  & \ddots & \vdots \\
b_{1d_\bY} & b_{2d_\bY} & \cdots & b_{ d_\bX d_\bY}
\end{bmatrix} 
\end{align*}
are the vector of intercepts and the matrix of slopes, respectively. The joint density in \eqref{eq:densreg} becomes
\begin{align}
    p (\bx, \by; \bvartheta) &= \phi \left( \by; \bmu_\bY (\bx; \bbeta), \bSigma_\bY \right) \, \phi \left( \bx; \bmu_\bX, \bSigma_\bX \right),\label{eq:jointreg}
\end{align}
where $\phi (\cdot; \bmu, \bSigma)$ denotes the density of a multivariate Gaussian random vector with mean vector $\bmu$ and covariance matrix $\bSigma$. According to \citet[Ch.~2.3]{bishop2006pattern}, these distributional assumptions lead to
\begin{align*}
    \begin{bmatrix} \bX \\[1ex] \bY \end{bmatrix} \sim \mathcal{G}_{d_\bX + d_\bY} \left( \begin{bmatrix} \bmu_{\bX} \\[1ex] \tilde{\bmu}_{\bY} \end{bmatrix}, \begin{bmatrix} \bSigma_{\bX \bX} & \bSigma_{\bX \bY} \\[1ex] \bSigma_{\bY \bX} & \bSigma_{\bY \bY} \end{bmatrix} \right),
\end{align*}
where
\begin{align*}
    \underset{(d_\bY \times 1)}{\tilde{\bmu}_{\bY}} &= \bbeta^\top \begin{bmatrix} 1 \\ \bmu_\bX \end{bmatrix} = \bb_0 + \bB \bmu_{\bX}, \\[1ex]
    \underset{(d_\bX \times d_\bX)}{\bSigma_{\bX \bX}} &= \bSigma_{\bX}, \quad \underset{(d_\bX \times d_\bY)}{\bSigma_{\bX \bY}} = \bSigma_{\bX} \bB^\top, \quad \underset{(d_\bY \times d_\bX)}{\bSigma_{\bY \bX}} = \bB \bSigma_{\bX}, \quad \text{and} \quad \underset{(d_\bY \times d_\bY)}{\bSigma_{\bY \bY}} = \bB \bSigma_{\bX} \bB^\top + \bSigma_{\bY}.
\end{align*}



Now, we further suppose that $\bX$ is partitioned as $\bX = \left( \bX^\top_1, \bX^\top_2 \right)^\top$, where $\bX_1$ and $\bX_2$ are sub-vectors with dimensions $d_{\bX_1}$ and $d_{\bX_2}$, respectively. Moreover, $\bY$ is partitioned as $\bY = \left( \bY^\top_1, \bY^\top_2 \right)^\top$, where $\bY_1$ and $\bY_2$ are sub-vectors with dimensions $d_{\bY_1}$ and $d_{\bY_2}$, respectively. Then, we have the following decompositions
of the mean vectors

\begin{align*}
    \bmu_\bX &= \begin{bmatrix} \underset{(d_{\bX_1} \times 1)}{\bmu_{\bX_1}} \\[3ex] \underset{(d_{\bX_2} \times 1)}{\bmu_{\bX_2}} \end{bmatrix}, \qquad \tilde{\bmu}_{\bY} = \begin{bmatrix} \underset{(d_{\bY_1} \times 1)}{\tilde{\bmu}_{\bY_1}} \\[3ex] \underset{(d_{\bY_2} \times 1)}{\tilde{\bmu}_{\bY_2}} \end{bmatrix}
\end{align*}
and the following decompositions of the covariance matrices
\begin{align*}
    \bSigma_{\bX \bX} &= \begin{bmatrix} \underset{(d_{\bX_1} \times d_{\bX_1})}{\bSigma_{\bX_1 \bX_1}} & \underset{(d_{\bX_1} \times d_{\bX_2})}{\bSigma_{\bX_1 \bX_2}} \\[4ex] \underset{(d_{\bX_2} \times d_{\bX_1})}{\bSigma_{\bX_2 \bX_1}} & \underset{(d_{\bX_2} \times d_{\bX_2})}{\bSigma_{\bX_2 \bX_2}} \end{bmatrix}, & \bSigma_{\bX \bY} &= \begin{bmatrix} \underset{(d_{\bX_1} \times d_{\bY_1})}{\bSigma_{\bX_1 \bY_1}} & \underset{(d_{\bX_1} \times d_{\bY_2})}{\bSigma_{\bX_1 \bY_2}} \\[4ex] \underset{(d_{\bX_2} \times d_{\bY_1})}{\bSigma_{\bX_2 \bY_1}} & \underset{(d_{\bX_2} \times d_{\bY_2})}{\bSigma_{\bX_2 \bY_2}} \end{bmatrix}, \\[2ex]
    \bSigma_{\bY \bX} &= \begin{bmatrix} \underset{(d_{\bY_1} \times d_{\bX_1})}{\bSigma_{\bY_1 \bX_1}} & \underset{(d_{\bY_1} \times d_{\bX_2})}{\bSigma_{\bY_1 \bX_2}} \\[4ex] \underset{(d_{\bY_2} \times d_{\bX_1})}{\bSigma_{\bY_2 \bX_1}} & \underset{(d_{\bY_2} \times d_{\bX_2})}{\bSigma_{\bY_2 \bX_2}} \end{bmatrix}, & \bSigma_{\bY \bY} &= \begin{bmatrix} \underset{(d_{\bY_1} \times d_{\bY_1})}{\bSigma_{\bY_1 \bY_1}} & \underset{(d_{\bY_1} \times d_{\bY_2})}{\bSigma_{\bY_1 \bY_2}} \\[4ex] \underset{(d_{\bY_2} \times d_{\bY_1})}{\bSigma_{\bY_2 \bY_1}} & \underset{(d_{\bY_2} \times d_{\bY_2})}{\bSigma_{\bY_2 \bY_2}} \end{bmatrix}.
\end{align*}
Again, following \citet[Ch.~2.3]{bishop2006pattern}, based on the properties of the multivariate Gaussian distribution, we can write the overall joint distribution as
\begin{align}
\begin{bmatrix} \bX_1 \\ \bX_2 \\ \bY_1 \\ \bY_2 \end{bmatrix}  \sim  \mathcal{G}_{d_\bX + d_\bY} \left( \begin{bmatrix} \bmu_{\bX_1} \\ \bmu_{\bX_2} \\ \tilde{\bmu}_{\bY_1} \\ \tilde{\bmu}_{\bY_2} \end{bmatrix}, \begin{bmatrix}
\bSigma_{\bX_1 \bX_1} & \bSigma_{\bX_1 \bX_2} & \bSigma_{\bX_1 \bY_1} & \bSigma_{\bX_1 \bY_2} \\
\bSigma_{\bX_2 \bX_1} & \bSigma_{\bX_2 \bX_2} & \bSigma_{\bX_2 \bY_1} & \bSigma_{\bX_2 \bY_2} \\
\bSigma_{\bY_1 \bX_1} & \bSigma_{\bY_1 \bX_2} & \bSigma_{\bY_1 \bY_1} & \bSigma_{\bY_1 \bY_2} \\
\bSigma_{\bY_2 \bX_1} & \bSigma_{\bY_2 \bX_2} & \bSigma_{\bY_2 \bY_1} & \bSigma_{\bY_2 \bY_2} \\
\end{bmatrix} \right). \label{eq:joint_X1_X2_Y1_Y2}
\end{align}

This definition is necessary to derive some conditional and joint distributions that will be used in the following sections. 
\begin{proposition} \label{prop:joint_X1_Y1}
The joint distribution of $\bX_1$ and $\bY_1$ is a multivariate Gaussian distribution as follows
\begin{align}
    \begin{bmatrix} \bX_1 \\ \bY_1 \end{bmatrix}  \sim  \mathcal{G}_{d_{\bX_1} + d_{\bY_1}} \left( \begin{bmatrix} \bmu_{\bX_1} \\ \tilde{\bmu}_{\bY_1} \end{bmatrix}, \begin{bmatrix}
\bSigma_{\bX_1 \bX_1} & \bSigma_{\bX_1 \bY_1}  \\
\bSigma_{\bY_1 \bX_1} & \bSigma_{\bY_1 \bY_1}
\end{bmatrix} \right). \label{eq:joint_X1_Y1}
\end{align}

\end{proposition}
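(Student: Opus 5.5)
The plan is to obtain \eqref{eq:joint_X1_Y1} as a linear transformation of the full Gaussian vector in \eqref{eq:joint_X1_X2_Y1_Y2}. Write $\bZ = (\bX_1^\top, \bX_2^\top, \bY_1^\top, \bY_2^\top)^\top$ and define the $(d_{\bX_1}+d_{\bY_1}) \times (d_\bX + d_\bY)$ selection matrix
\begin{align*}
\bOmega = \begin{bmatrix} \mathbf{I}_{d_{\bX_1}} & \mathbf{0} & \mathbf{0} & \mathbf{0} \\ \mathbf{0} & \mathbf{0} & \mathbf{I}_{d_{\bY_1}} & \mathbf{0} \end{bmatrix},
\end{align*}
where the zero blocks have the column widths $d_{\bX_1}, d_{\bX_2}, d_{\bY_1}, d_{\bY_2}$ in that order. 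By construction, $\bOmega \bZ = (\bX_1^\top, \bY_1^\top)^\top$.

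First, I will invoke the closure of the multivariate Gaussian family under affine maps. If $\bZ \sim \mathcal{G}(\bmu, \bSigma)$, then $\bOmega\bZ \sim \mathcal{G}(\bOmega\bmu, \bOmega\bSigma\bOmega^\top)$. To keep things self-contained, I can verify this with the characteristic function. We have $E[\exp(i\bm{t}^\top \bOmega \bZ)] = \exp\bigl(i(\bOmega^\top\bm{t})^\top\bmu - \tfrac12 (\bOmega^\top \bm{t})^\top \bSigma\, \bOmega^\top\bm{t}\bigr)$. This is precisely the characteristic function of $\mathcal{G}(\bOmega\bmu, \bOmega\bSigma\bOmega^\top)$. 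Uniqueness of characteristic functions then gives the claim, and this argument holds even if the covariance were singular. The joint Gaussianity of $\bZ$ itself is taken from \eqref{eq:joint_X1_X2_Y1_Y2}, which the paper already established via the conditional–marginal factorization.

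Second, I will compute the mean and covariance block by block. The product $\bOmega\bmu$ picks out the first and third blocks of the mean in \eqref{eq:joint_X1_X2_Y1_Y2}, giving $(\bmu_{\bX_1}^\top, \tilde{\bmu}_{\bY_1}^\top)^\top$. For $\bOmega\bSigma\bOmega^\top$, left multiplication by $\bOmega$ keeps block-rows 1 and 3. Right multiplication by $\bOmega^\top$ then keeps block-columns 1 and 3. The result is the $2\times 2$ block matrix with entries $\bSigma_{\bX_1\bX_1}, \bSigma_{\bX_1\bY_1}, \bSigma_{\bY_1\bX_1}, \bSigma_{\bY_1\bY_1}$, which is exactly \eqref{eq:joint_X1_Y1}.

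None of these steps is a genuine obstacle. The only point that needs care is the bookkeeping in the block multiplication: one must check that the identity blocks in $\bOmega$ sit at the correct column offsets, so that the surviving sub-blocks of $\bSigma$ are the $(1,1), (1,3), (3,1), (3,3)$ blocks and not some other combination. Optionally, I could also substitute the explicit forms in terms of $\bB$ and $\bSigma_\bX$. These are $\bSigma_{\bX_1\bY_1}$, the upper-left $d_{\bX_1}\times d_{\bY_1}$ sub-block of $\bSigma_\bX\bB^\top$, and $\bSigma_{\bY_1\bY_1}$, the corresponding sub-block of $\bB\bSigma_\bX\bB^\top + \bSigma_\bY$. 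This substitution is not needed for the statement as written.
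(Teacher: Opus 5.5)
Your proposal is correct and follows essentially the same route as the paper. The paper applies Result 4.4 of \citet{johnson_applied_2007} to \eqref{eq:joint_X1_X2_Y1_Y2}: any subvector of a jointly Gaussian vector is Gaussian, with the corresponding sub-blocks of the mean and covariance. Your selection-matrix and characteristic-function argument is a self-contained proof of that cited result, so nothing is missing.
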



\begin{proof}
According to~\eqref{eq:joint_X1_X2_Y1_Y2}, $\bX_1, \bX_2, \bY_1,$ and $\bY_2$ are jointly distributed as a multivariate Gaussian distribution. From Result 4.4 of \citet{johnson_applied_2007}, their subset formed by $\bX_1$ and $\bY_1$ also follows a multivariate Gaussian distribution with the mean vector and covariance matrix to be
\begin{align*}
    \begin{bmatrix} \bmu_{\bX_1} \\ \tilde{\bmu}_{\bY_1} \end{bmatrix}\quad \text{and}\quad \begin{bmatrix}
\bSigma_{\bX_1 \bX_1} & \bSigma_{\bX_1 \bY_1}  \\
\bSigma_{\bY_1 \bX_1} & \bSigma_{\bY_1 \bY_1}
\end{bmatrix}.
\end{align*}

\end{proof}


\begin{proposition} \label{prop:joint_X2_X1_Y1}
The joint distribution of $\bX_2, \bX_1$, and $\bY_1$ is a multivariate Gaussian distribution as follows
\begin{align}
    \begin{bmatrix} \bX_2 \\ \bX_1 \\ \bY_1 \end{bmatrix}  \sim \mathcal{G}_{d_{\bX_2} + d_{\bX_1} + d_{\bY_1}} \left( \begin{bmatrix} \bmu_{\bX_2} \\ \bmu_{\bX_1} \\ \tilde{\bmu}_{\bY_1} \end{bmatrix}, \begin{bmatrix}
\bSigma_{\bX_2 \bX_2} & \bSigma_{\bX_2 \bX_1} & \bSigma_{\bX_2 \bY_1} \\
\bSigma_{\bX_1 \bX_2} & \bSigma_{\bX_1 \bX_1} & \bSigma_{\bX_1 \bY_1} \\
\bSigma_{\bY_1 \bX_2} & \bSigma_{\bY_1 \bX_1} & \bSigma_{\bY_1 \bY_1}
\end{bmatrix} \right). \label{eq:joint_X2_X1_Y1}
\end{align}
\end{proposition}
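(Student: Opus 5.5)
The plan is to argue exactly as in Proposition~\ref{prop:joint_X1_Y1}, adding one step: a permutation of the blocks. By~\eqref{eq:joint_X1_X2_Y1_Y2}, the stacked vector $(\bX_1^\top,\bX_2^\top,\bY_1^\top,\bY_2^\top)^\top$ is multivariate Gaussian. I will obtain $(\bX_2^\top,\bX_1^\top,\bY_1^\top)^\top$ from it by a linear map and then apply the standard closure of the Gaussian family under linear transformations (Result 4.3 of \citet{johnson_applied_2007}).

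The first step is to write down the selection-and-permutation matrix. Let $\bm{A}$ be the $(d_{\bX_2}+d_{\bX_1}+d_{\bY_1})\times(d_\bX+d_\bY)$ block matrix
\begin{align*}
\bm{A}=\begin{bmatrix} \bm{0} & \bm{I}_{d_{\bX_2}} & \bm{0} & \bm{0}\\ \bm{I}_{d_{\bX_1}} & \bm{0} & \bm{0} & \bm{0}\\ \bm{0} & \bm{0} & \bm{I}_{d_{\bY_1}} & \bm{0}\end{bmatrix},
\end{align*}
with column blocks of widths $d_{\bX_1},d_{\bX_2},d_{\bY_1},d_{\bY_2}$. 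Multiplying the stacked vector by $\bm{A}$ gives exactly $(\bX_2^\top,\bX_1^\top,\bY_1^\top)^\top$.

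The second step is to apply the linear-transformation result. It says that $\bm{A}$ times the stacked vector is Gaussian with mean $\bm{A}$ times the mean vector and covariance $\bm{A}\bSigma\bm{A}^\top$. Here $\bSigma$ denotes the full covariance matrix in~\eqref{eq:joint_X1_X2_Y1_Y2}, used only informally in the argument since the paper has no macro for it.

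The third step is to compute these two quantities blockwise. The mean becomes $(\bmu_{\bX_2}^\top,\bmu_{\bX_1}^\top,\tilde{\bmu}_{\bY_1}^\top)^\top$. For the covariance, the $(i,j)$ block of $\bm{A}\bSigma\bm{A}^\top$ is the covariance block indexed by the $i$-th and $j$-th selected components. For example, the $(1,2)$ block is $\bSigma_{\bX_2\bX_1}$ and the $(2,3)$ block is $\bSigma_{\bX_1\bY_1}$. This reproduces the matrix in~\eqref{eq:joint_X2_X1_Y1}.

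Alternatively, one could first use Proposition~\ref{prop:joint_X1_Y1}'s argument to extract the subvector $(\bX_1,\bX_2,\bY_1)$ and then permute its first two blocks. The permutation step is still needed either way, because the marginalization result as usually stated takes a leading subvector.

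There is no real obstacle here. The only care needed is bookkeeping: the off-diagonal blocks must appear with the correct orientation under the reordering (for instance $\bSigma_{\bX_2\bX_1}$ rather than $\bSigma_{\bX_1\bX_2}$ in the first row). Symmetry of the full covariance matrix guarantees that $\bSigma_{\bX_2\bX_1}=\bSigma_{\bX_1\bX_2}^\top$, so the result is consistent.
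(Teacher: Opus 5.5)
Your proof is correct and is essentially the paper's argument. The paper reuses the reasoning of Proposition~\ref{prop:joint_X1_Y1}: it marginalizes the joint Gaussian in \eqref{eq:joint_X1_X2_Y1_Y2} via Result 4.4 of \citet{johnson_applied_2007}, namely that subsets of a Gaussian vector are Gaussian with the corresponding sub-blocks. You make that same step explicit with a selection-and-permutation matrix and Result 4.3, which is how Result 4.4 is itself proved, and you spell out the reordering of $\bX_1$ and $\bX_2$ that the paper leaves implicit.
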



\begin{proof}
The result can be shown using the same reasoning as in Proposition~\ref{prop:joint_X1_Y1}.
\end{proof}


\begin{proposition} \label{prop:cond_X2_x1_y1}
The conditional distribution of $\bX_2$ given $\bX_1 = \bx_1$ and $\bY_1 = \by_1$ is a multivariate Gaussian distribution as follows
\begin{align}
    \bX_2 \mid \bX_1 = \bx_1, \bY_1 = \by_1 \sim  \mathcal{G}_{d_{\bX_2}} \left( \bmu_{\bX_2 \mid \bx_1, \by_1}, \bSigma_{\bX_2 \mid \bx_1, \by_1} \right), \label{eq:cond_X2_x1_y1}
\end{align}
where
\begin{align*}
    \bmu_{\bX_2 \mid \bx_1, \by_1} &= \bmu_{\bX_2} + \begin{bmatrix} \bSigma_{\bX_2 \bX_1} & \bSigma_{\bX_2 \bY_1} \end{bmatrix} \begin{bmatrix} \bSigma_{\bX_1 \bX_1} & \bSigma_{\bX_1 \bY_1} \\ \bSigma_{\bY_1 \bX_1} & \bSigma_{\bY_1 \bY_1} \end{bmatrix}^{-1} \left( \begin{bmatrix} \bx_1 \\ \by_1 \end{bmatrix} - \begin{bmatrix} \bmu_{\bX_1} \\ \tilde{\bmu}_{\bY_1} \end{bmatrix} \right) \\[1ex]
    \text{and} \quad \bSigma_{\bX_2 \mid \bx_1, \by_1} &= \bSigma_{\bX_2 \bX_2} - \begin{bmatrix} \bSigma_{\bX_2 \bX_1} & \bSigma_{\bX_2 \bY_1} \end{bmatrix} \begin{bmatrix} \bSigma_{\bX_1 \bX_1} & \bSigma_{\bX_1 \bY_1} \\ \bSigma_{\bY_1 \bX_1} & \bSigma_{\bY_1 \bY_1} \end{bmatrix}^{-1} \begin{bmatrix} \bSigma_{\bX_1 \bX_2} \\ \bSigma_{\bY_1 \bX_2} \end{bmatrix}.
\end{align*}
\end{proposition}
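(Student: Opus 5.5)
The plan is to reduce the statement to the standard conditioning formula for a partitioned multivariate Gaussian, applied to the joint law obtained in Proposition~\ref{prop:joint_X2_X1_Y1}.

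By Proposition~\ref{prop:joint_X2_X1_Y1}, the vector $\left(\bX_2^\top, \bX_1^\top, \bY_1^\top\right)^\top$ is multivariate Gaussian with mean and covariance given in \eqref{eq:joint_X2_X1_Y1}. I will regroup this vector into two blocks: the first block is $\bZ_a = \bX_2$, and the second is $\bZ_b = \left(\bX_1^\top, \bY_1^\top\right)^\top$. Under this grouping, the mean splits as $\bmu_a = \bmu_{\bX_2}$ and $\bmu_b = \left(\bmu_{\bX_1}^\top, \tilde{\bmu}_{\bY_1}^\top\right)^\top$. The covariance splits into the following blocks:
\begin{align*}
\bSigma_{aa} &= \bSigma_{\bX_2\bX_2}, \\
\bSigma_{ab} &= \begin{bmatrix} \bSigma_{\bX_2 \bX_1} & \bSigma_{\bX_2 \bY_1} \end{bmatrix}, \\
\bSigma_{ba} &= \bSigma_{ab}^\top, \\
\bSigma_{bb} &= \begin{bmatrix} \bSigma_{\bX_1 \bX_1} & \bSigma_{\bX_1 \bY_1} \\ \bSigma_{\bY_1 \bX_1} & \bSigma_{\bY_1 \bY_1} \end{bmatrix}.
\end{align*}
By Proposition~\ref{prop:joint_X1_Y1}, $\bSigma_{bb}$ is exactly the covariance of $\left(\bX_1^\top,\bY_1^\top\right)^\top$.

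Next I will invoke the classical result on conditional distributions of a partitioned Gaussian vector (Result 4.6 of \citet{johnson_applied_2007}, or \citet[Ch.~2.3]{bishop2006pattern}). It states that $\bZ_a \mid \bZ_b = \bz_b$ is Gaussian with mean $\bmu_a + \bSigma_{ab}\bSigma_{bb}^{-1}(\bz_b - \bmu_b)$ and covariance $\bSigma_{aa} - \bSigma_{ab}\bSigma_{bb}^{-1}\bSigma_{ba}$. Substituting the blocks above with $\bz_b = \left(\bx_1^\top, \by_1^\top\right)^\top$ yields exactly the expressions for $\bmu_{\bX_2\mid\bx_1,\by_1}$ and $\bSigma_{\bX_2\mid\bx_1,\by_1}$ in the statement.

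If a self-contained argument is preferred, I would verify the conditioning formula directly. Define $\bZ_a - \bSigma_{ab}\bSigma_{bb}^{-1}\bZ_b$. This is a linear transformation of a Gaussian vector, so it is jointly Gaussian with $\bZ_b$. Its cross-covariance with $\bZ_b$ is zero, so it is independent of $\bZ_b$. Hence conditioning on $\bZ_b$ merely shifts its mean, and its covariance is the Schur complement.

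The only genuine obstacle is justifying that $\bSigma_{bb}$ is invertible, since the statement presupposes this inverse. I would argue from the model structure. $\bSigma_{bb}$ is a principal submatrix of the full joint covariance, and that covariance is positive definite whenever $\bSigma_\bX$ and $\bSigma_\bY$ are. To see this, factor it as $\bm{L}\,\mathrm{diag}(\bSigma_\bX,\bSigma_\bY)\,\bm{L}^\top$, where
\begin{align*}
\bm{L} = \begin{bmatrix} \bm{I} & \bm{0} \\ \bB & \bm{I} \end{bmatrix}
\end{align*}
is invertible. Principal submatrices of a positive definite matrix are positive definite, so $\bSigma_{bb}$ is invertible.
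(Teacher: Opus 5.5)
Your proposal is correct and follows the same route as the paper, which simply applies Result 4.6 of \citet{johnson_applied_2007} to the joint Gaussian law of $(\bX_2, \bX_1, \bY_1)$ from Proposition~\ref{prop:joint_X2_X1_Y1} with $(\bX_1, \bY_1)$ as the conditioning block. Your extra justification that the $(\bX_1,\bY_1)$ covariance block is invertible, via the factorization of the joint covariance through $\mathrm{diag}(\bSigma_\bX,\bSigma_\bY)$ and a unit lower block-triangular factor, fills in a point the paper leaves implicit.
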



\begin{proof}
The result follows by applying Proposition~\ref{prop:joint_X2_X1_Y1} and Result 4.6 of \citet{johnson_applied_2007} regarding the conditional distribution of a multivariate Gaussian distribution. 
The mean vector and covariance matrix are determined using the formulas given in Result 4.6 of \citet{johnson_applied_2007}.
\end{proof}


\begin{proposition} \label{prop:joint_Y2_X1_Y1}
The joint distribution of $\bY_2, \bX_1$, and $\bY_1$ is a multivariate Gaussian distribution as follows
\begin{align}
    \begin{bmatrix} 
    \bY_2 \\ 
    \bX_1 \\ 
    \bY_1 
    \end{bmatrix} \sim  \mathcal{G}_{d_{\bY_2} + d_{\bX_1} + d_{\bY_1}} \left( \begin{bmatrix} \tilde{\bmu}_{\bY_2} \\ \bmu_{\bX_1} \\ \bmu_{\bY_1} \end{bmatrix}, \begin{bmatrix}
\bSigma_{\bY_2 \bX_2} & \bSigma_{\bY_2 \bX_1} & \bSigma_{\bY_2 \bY_1} \\
\bSigma_{\bX_1 \bX_2} & \bSigma_{\bX_1 \bX_1} & \bSigma_{\bX_1 \bY_1} \\
\bSigma_{\bY_1 \bX_2} & \bSigma_{\bY_1 \bX_1} & \bSigma_{\bY_1 \bY_1}
\end{bmatrix} \right). \label{eq:joint_Y2_X1_Y1}
\end{align}
    
\end{proposition}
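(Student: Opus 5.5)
The plan is to argue exactly as in Proposition~\ref{prop:joint_X1_Y1}, by marginalization of the joint Gaussian law \eqref{eq:joint_X1_X2_Y1_Y2}, followed by a reordering of components. I would first note that, read literally, the displayed parameters in \eqref{eq:joint_Y2_X1_Y1} contain what look like typographical slips. The mean of $\bY_1$ should be $\tilde{\bmu}_{\bY_1}$ rather than $\bmu_{\bY_1}$. The first row and first column of the covariance matrix should involve $\bY_2$ rather than $\bX_2$: the $(1,1)$ block should be $\bSigma_{\bY_2\bY_2}$, and the first column should be $(\bSigma_{\bY_2\bY_2}^\top, \bSigma_{\bX_1\bY_2}^\top, \bSigma_{\bY_1\bY_2}^\top)^\top$. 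The matrix as printed is in general not even symmetric, since $\bSigma_{\bY_2\bX_2}$ need not be square, so this corrected version is the one I would prove.

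\textbf{Step 1 (selection matrix).} Let $\bZ = (\bX_1^\top,\bX_2^\top,\bY_1^\top,\bY_2^\top)^\top$, which by \eqref{eq:joint_X1_X2_Y1_Y2} is Gaussian with mean $\bmu$ and covariance $\bSigma$ given there. Define the $(d_{\bY_2}+d_{\bX_1}+d_{\bY_1})\times d$ matrix $\bm{A}$ whose rows are the standard basis rows picking out, in order, the coordinates of $\bY_2$, then $\bX_1$, then $\bY_1$. Then $\bm{A}\bZ = (\bY_2^\top,\bX_1^\top,\bY_1^\top)^\top$.

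\textbf{Step 2 (linear transformation).} By the closure of the multivariate Gaussian under linear maps (Result 4.3 of \citet{johnson_applied_2007}), or equivalently Result 4.4 on subsets combined with a permutation, we get $\bm{A}\bZ \sim \mathcal{G}(\bm{A}\bmu, \bm{A}\bSigma\bm{A}^\top)$. Computing $\bm{A}\bmu$ just stacks $\tilde{\bmu}_{\bY_2}$, $\bmu_{\bX_1}$ and $\tilde{\bmu}_{\bY_1}$. The product $\bm{A}\bSigma\bm{A}^\top$ extracts the blocks of $\bSigma$ indexed by the pairs in $\{\bY_2,\bX_1,\bY_1\}^2$, placed in that order. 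This gives the $3\times3$ block matrix with $(i,j)$ block $\bSigma_{\bm{V}_i\bm{V}_j}$, where $(\bm{V}_1,\bm{V}_2,\bm{V}_3)=(\bY_2,\bX_1,\bY_1)$.

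\textbf{Main obstacle.} There is no real analytic difficulty here. The only care needed is the bookkeeping: the reordering puts $\bY_2$ first even though it is last in $\bZ$, so one must check that a permutation of coordinates preserves Gaussianity and permutes the blocks of the covariance matrix consistently. One must also reconcile the stated display with the correct blocks, as discussed above. If one prefers to avoid the selection-matrix argument, an alternative is to use the characteristic function $\exp(i\bm{t}^\top\bmu - \tfrac12\bm{t}^\top\bSigma\bm{t})$ of $\bZ$, set the entries of $\bm{t}$ corresponding to $\bX_2$ to zero, and reorder; this yields the same conclusion.
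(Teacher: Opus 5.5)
Your proposal is correct and follows essentially the paper's route. The paper just invokes the same subset-of-a-Gaussian argument (Result 4.4 of Johnson--Wichern) used for Proposition~\ref{prop:joint_X1_Y1}; your selection-matrix formulation makes the reordering that places $\bY_2$ first explicit. Your corrections to the display are also right: $\tilde{\bmu}_{\bY_1}$ for the mean, and $\bY_2$ in place of $\bX_2$ throughout the first block column. They match the blocks the paper actually uses in Proposition~\ref{prop:cond_Y2_X1_Y1}.
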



\begin{proof}
The result can be shown using the same reasoning as in Proposition~\ref{prop:joint_X1_Y1}.
\end{proof}


\begin{proposition}\label{prop:cond_Y2_X1_Y1}
The conditional distribution of $\bY_2$ given $\bX_1 = \bx_1$ and $\bY_1 = \by_1$ is a multivariate Gaussian distribution as follows
\begin{align*}
    \bY_2 \mid \bX_1 = \bx_1, \bY_1 = \by_1 \sim  \mathcal{G}_{d_{\bY_2}} \left( \bmu_{\bY_2 \mid \bx_1, \by_1}, \bSigma_{\bY_2 \mid \bx_1, \by_1} \right),
\end{align*}
where
\begin{align*}
    \bmu_{\bY_2 \mid \bx_1, \by_1} &= \tilde{\bmu}_{\bY_2} + \begin{bmatrix} \bSigma_{\bY_2 \bX_1} & \bSigma_{\bY_2 \bY_1} \end{bmatrix} \begin{bmatrix} \bSigma_{\bX_1 \bX_1} & \bSigma_{\bX_1 \bY_1} \\ \bSigma_{\bY_1 \bX_1} & \bSigma_{\bY_1 \bY_1} \end{bmatrix}^{-1} \left( \begin{bmatrix} \bx_1 \\ \by_1 \end{bmatrix} - \begin{bmatrix} \bmu_{\bX_1} \\ \tilde{\bmu}_{\bY_1} \end{bmatrix} \right) \\[1ex]
    \text{and} \quad \bSigma_{\bY_2 \mid \bx_1, \by_1} &= \bSigma_{\bY_2 \bY_2} - \begin{bmatrix} \bSigma_{\bY_2 \bX_1} & \bSigma_{\bY_2 \bY_1} \end{bmatrix} \begin{bmatrix} \bSigma_{\bX_1 \bX_1} & \bSigma_{\bX_1 \bY_1} \\ \bSigma_{\bY_1 \bX_1} & \bSigma_{\bY_1 \bY_1} \end{bmatrix}^{-1} \begin{bmatrix} \bSigma_{\bX_1 \bY_2} \\ \bSigma_{\bY_1 \bY_2} \end{bmatrix}.
\end{align*}

\end{proposition}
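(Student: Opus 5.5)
The argument mirrors the one for Proposition~\ref{prop:cond_X2_x1_y1}, with $\bY_2$ in place of $\bX_2$. First I obtain the joint law of $(\bY_2^\top, \bX_1^\top, \bY_1^\top)^\top$ from \eqref{eq:joint_X1_X2_Y1_Y2}. That law is Gaussian because the triple is a subvector of a jointly Gaussian vector (Result 4.4 of \citet{johnson_applied_2007}). Its mean is $(\tilde{\bmu}_{\bY_2}^\top, \bmu_{\bX_1}^\top, \tilde{\bmu}_{\bY_1}^\top)^\top$.

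Its covariance matrix is the corresponding $3\times 3$ block submatrix of \eqref{eq:joint_X1_X2_Y1_Y2}. The first block row is $[\bSigma_{\bY_2\bY_2}\ \bSigma_{\bY_2\bX_1}\ \bSigma_{\bY_2\bY_1}]$, and the first block column is the transpose of that row with the $\bY_2$ index moved to the second slot, i.e.\ $\bSigma_{\bX_1\bY_2}$ and $\bSigma_{\bY_1\bY_2}$. I would rely on this direct extraction rather than reading blocks off \eqref{eq:joint_Y2_X1_Y1}, since that display appears to carry typographical slips (blocks indexed by $\bX_2$, and $\bmu_{\bY_1}$ instead of $\tilde{\bmu}_{\bY_1}$).

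Next I partition the vector as $\bY_2$ versus the conditioning block $\bZ_1 = (\bX_1^\top, \bY_1^\top)^\top$. The conditioning block has mean $(\bmu_{\bX_1}^\top, \tilde{\bmu}_{\bY_1}^\top)^\top$ and covariance given in Proposition~\ref{prop:joint_X1_Y1}. The cross-covariance is $\mathrm{Cov}(\bY_2, \bZ_1) = [\bSigma_{\bY_2\bX_1}\ \bSigma_{\bY_2\bY_1}]$, and its transpose is $[\bSigma_{\bX_1\bY_2}^\top\ \bSigma_{\bY_1\bY_2}^\top]^\top$.

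Applying Result 4.6 of \citet{johnson_applied_2007}, the conditional law of a Gaussian subvector given the complementary subvector, immediately yields a Gaussian $\bY_2 \mid \bZ_1 = \bz_1$. Its mean is $\tilde{\bmu}_{\bY_2} + \bSigma_{\bY_2\bZ_1}\bSigma_{\bZ_1\bZ_1}^{-1}(\bz_1 - \bmu_{\bZ_1})$ and its covariance is $\bSigma_{\bY_2\bY_2} - \bSigma_{\bY_2\bZ_1}\bSigma_{\bZ_1\bZ_1}^{-1}\bSigma_{\bZ_1\bY_2}$. Substituting the blocks gives exactly the stated formulas.

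The only real obstacle is justifying that $\bSigma_{\bZ_1\bZ_1}$ is invertible, which Result 4.6 requires. I would handle it as follows. The full covariance in \eqref{eq:joint_X1_X2_Y1_Y2} is positive definite, because its determinant equals $|\bSigma_\bX|\,|\bSigma_\bY| > 0$ by the conditional–marginal factorization, with $\bSigma_\bX$ and $\bSigma_\bY$ positive definite. Any principal submatrix of a positive definite matrix is positive definite, so $\bSigma_{\bZ_1\bZ_1}$ is invertible.
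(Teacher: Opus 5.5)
Your proposal is correct and follows the paper's route. The paper invokes Proposition~\ref{prop:joint_Y2_X1_Y1} for the joint Gaussian law of $(\bY_2,\bX_1,\bY_1)$ and then applies Result 4.6 of \citet{johnson_applied_2007}; reading the blocks directly off \eqref{eq:joint_X1_X2_Y1_Y2} is a sensible way around the typos you correctly spotted in \eqref{eq:joint_Y2_X1_Y1}. Your invertibility argument is an addition the paper omits, with one small fix: a positive determinant gives positive definiteness only because a covariance matrix is already positive semidefinite (or write the joint covariance as $\boldsymbol{L}\,\mathrm{diag}(\bSigma_\bX,\bSigma_\bY)\,\boldsymbol{L}^\top$ with $\boldsymbol{L}$ block unit lower triangular with off-diagonal block $\bB$).
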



\begin{proof}
The result follows by applying Proposition~\ref{prop:joint_Y2_X1_Y1} and Result 4.6 of \citet{johnson_applied_2007} regarding the conditional distribution of a multivariate Gaussian distribution. The mean vector and covariance matrix are determined using the formulas given in Result 4.6.
\end{proof}


\begin{proposition}\label{prop:cond_Y2_X2_X1_Y1}
The conditional distribution of the random vector $(\bY_2, \bX_2)$ given $\bX_1 = \bx_1$ and $\bY_1 = \by_1$ is a multivariate Gaussian distribution as follows
\begin{align*}
    \bY_2, \bX_2 \mid \bX_1 = \bx_1, \bY_1 = \by_1 \sim  \mathcal{G}_{d_{\bY_2} + d_{\bX_2}} \left( \bmu_{\bY_2, \bX_2 \mid \bx_1, \by_1}, \bSigma_{\bY_2, \bX_2 \mid \bx_1, \by_1} \right),
\end{align*}
where
\begin{align*}
    \bmu_{\bY_2, \bX_2 \mid \bx_1, \by_1} &= \begin{bmatrix} \tilde{\bmu}_{\bY_2} \\ \bmu_{\bX_2} \end{bmatrix} + \begin{bmatrix} \bSigma_{\bY_2 \bY_1} & \bSigma_{\bY_2 \bX_1} \\ \bSigma_{\bX_2 \bY_1} & \bSigma_{\bX_2 \bX_1} \end{bmatrix} \begin{bmatrix} \bSigma_{\bY_1 \bY_1} & \bSigma_{\bY_1 \bX_1} \\ \bSigma_{\bX_1 \bY_1} & \bSigma_{\bX_1 \bX_1} \end{bmatrix}^{-1} \left( \begin{bmatrix} \bx_1 \\ \by_1 \end{bmatrix} - \begin{bmatrix} \bmu_{\bX_1} \\ \tilde{\bmu}_{\bY_1} \end{bmatrix} \right) \\[1ex]
    \text{and} \quad \bSigma_{\bY_2, \bX_2 \mid \bx_1, \by_1} &= \begin{bmatrix} \bSigma_{\bY_2 \bY_2} & \bSigma_{\bY_2 \bX_2} \\ \bSigma_{\bX_2 \bY_2} & \bSigma_{\bX_2 \bX_2} \end{bmatrix} - \begin{bmatrix} \bSigma_{\bY_2 \bY_1} & \bSigma_{\bY_2 \bX_1} \\
\bSigma_{\bX_2 \bY_1} & \bSigma_{\bX_2 \bX_1} \end{bmatrix} \begin{bmatrix} \bSigma_{\bY_1 \bY_1} & \bSigma_{\bY_1 \bX_1} \\ \bSigma_{\bX_1 \bY_1} & \bSigma_{\bX_1 \bX_1} \end{bmatrix}^{-1} \begin{bmatrix} \bSigma_{\bY_1 \bY_2} & \bSigma_{\bY_1 \bX_2} \\ \bSigma_{\bX_1 \bY_2} & \bSigma_{\bX_1 \bX_2} \end{bmatrix}.
\end{align*}
\end{proposition}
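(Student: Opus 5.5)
The plan is to argue exactly as in Propositions~\ref{prop:cond_X2_x1_y1} and~\ref{prop:cond_Y2_X1_Y1}: first obtain a joint Gaussian law for the four blocks arranged as $(\bY_2^\top,\bX_2^\top,\bY_1^\top,\bX_1^\top)^\top$, and then apply Result 4.6 of \citet{johnson_applied_2007}. That result conditions the first group of coordinates on the second. Here the first group is the block $(\bY_2,\bX_2)$ and the second is $(\bY_1,\bX_1)$.

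\textbf{Step 1 (joint law).} By \eqref{eq:joint_X1_X2_Y1_Y2}, $(\bX_1^\top,\bX_2^\top,\bY_1^\top,\bY_2^\top)^\top$ is Gaussian. Reordering coordinates is a linear map given by a permutation matrix $\boldsymbol{P}$, so the reordered vector $(\bY_2^\top,\bX_2^\top,\bY_1^\top,\bX_1^\top)^\top$ is again Gaussian. Its mean is the correspondingly permuted mean $(\tilde{\bmu}_{\bY_2}^\top,\bmu_{\bX_2}^\top,\tilde{\bmu}_{\bY_1}^\top,\bmu_{\bX_1}^\top)^\top$. Its covariance is $\boldsymbol{P}\bSigma\boldsymbol{P}^\top$, which is just the $4\times 4$ block matrix of \eqref{eq:joint_X1_X2_Y1_Y2} with rows and columns rearranged into the order $(\bY_2,\bX_2,\bY_1,\bX_1)$. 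This is the same ``subset/rearrangement'' reasoning as in Proposition~\ref{prop:joint_X1_Y1} (Result 4.4 of \citet{johnson_applied_2007}), now keeping all four blocks.

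\textbf{Step 2 (conditioning).} Partition the permuted vector into the first part $(\bY_2,\bX_2)$ and the second part $(\bY_1,\bX_1)$. The $2\times2$ block partition of the covariance then reads off directly:
\begin{itemize}
\item the upper-left block is $\begin{bmatrix}\bSigma_{\bY_2\bY_2}&\bSigma_{\bY_2\bX_2}\\ \bSigma_{\bX_2\bY_2}&\bSigma_{\bX_2\bX_2}\end{bmatrix}$;
\item the off-diagonal block is $\begin{bmatrix}\bSigma_{\bY_2\bY_1}&\bSigma_{\bY_2\bX_1}\\ \bSigma_{\bX_2\bY_1}&\bSigma_{\bX_2\bX_1}\end{bmatrix}$;
\item the lower-right block is $\begin{bmatrix}\bSigma_{\bY_1\bY_1}&\bSigma_{\bY_1\bX_1}\\ \bSigma_{\bX_1\bY_1}&\bSigma_{\bX_1\bX_1}\end{bmatrix}$.
\end{itemize}
The lower-right block is invertible: it is a principal submatrix of the positive definite joint covariance, which is positive definite because $\bSigma_\bX$ and $\bSigma_\bY$ are. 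Result 4.6 then gives the conditional mean (first-part mean, plus the off-diagonal block times the inverse of the lower-right block times the centred conditioning vector) and the conditional covariance (the corresponding Schur complement). These are the displayed formulas.

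\textbf{Main obstacle: consistent ordering.} The only real difficulty is bookkeeping. As printed, the inverse is taken of the matrix ordered as $(\bY_1,\bX_1)$, while the centred vector is written in the order $(\bx_1,\by_1)$. For the product to be correct, the centred vector must follow the same ordering as the covariance blocks, namely $\bigl((\by_1-\tilde{\bmu}_{\bY_1})^\top,(\bx_1-\bmu_{\bX_1})^\top\bigr)^\top$.

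I would handle this by noting that the conditional law does not depend on the order in which the conditioning coordinates are listed. Let $\boldsymbol{Q}$ be the permutation swapping the $\bY_1$ and $\bX_1$ blocks, and write $\bSigma_{21}$, $\bSigma_{22}$ for the off-diagonal and conditioning blocks in the $(\bX_1,\bY_1)$ order. Since $\boldsymbol{Q}^\top\boldsymbol{Q}=\boldsymbol{I}$,
\begin{align*}
(\bSigma_{21}\boldsymbol{Q}^\top)(\boldsymbol{Q}\bSigma_{22}\boldsymbol{Q}^\top)^{-1}\boldsymbol{Q}\boldsymbol{v}=\bSigma_{21}\bSigma_{22}^{-1}\boldsymbol{v}
\end{align*}
for any centred vector $\boldsymbol{v}$, and the analogous identity holds for the Schur complement. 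The formula is therefore identical to the one obtained in the $(\bX_1,\bY_1)$ order of Proposition~\ref{prop:cond_Y2_X1_Y1}, provided the centred vector is ordered to match the blocks. I would state this explicitly in the proof so the displayed expression is read consistently.

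As a final check, the $\bY_2$ block of the result should reduce to Proposition~\ref{prop:cond_Y2_X1_Y1}, and the $\bX_2$ block to Proposition~\ref{prop:cond_X2_x1_y1}.
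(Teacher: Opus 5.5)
Your proposal is correct and follows the same route as the paper, which just invokes the joint Gaussian law \eqref{eq:joint_X1_X2_Y1_Y2} together with Result 4.6 of \citet{johnson_applied_2007}; your explicit permutation step and the positive-definiteness argument for invertibility fill in details the paper leaves implicit. Your bookkeeping remark is also right: as printed, the centred vector in the $(\bx_1,\by_1)$ order does not match the $(\bY_1,\bX_1)$ ordering of the covariance blocks, so it must be read as $\bigl((\by_1-\tilde{\bmu}_{\bY_1})^\top,(\bx_1-\bmu_{\bX_1})^\top\bigr)^\top$ for the displayed mean formula to hold.
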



\begin{proof}
The result follows by applying~\eqref{eq:joint_X1_X2_Y1_Y2} and Result 4.6 of \citet{johnson_applied_2007} regarding the conditional distribution of a multivariate Gaussian distribution. The mean vector and covariance matrix are determined using the formulas given in Result 4.6.
\end{proof}

All the mentioned properties will be used to achieve ML parameter estimates. 
Given a random sample $\left(\bx_1^\top,\by_1^\top\right)^\top,\ldots,\left(\bx_n^\top,\by_n^\top\right)^\top$ from $\left(\bX^\top,\bY^\top\right)^\top$ with density in \eqref{eq:jointreg}, it is, therefore, important to define the likelihood function
\begin{align*}
    L (\bvartheta) &= \prod_{i = 1}^n \phi \left( \bx_i; \bmu_\bX, \bSigma_\bX \right) \phi \left( \by_i; \bmu_\bY \left( \bx_i; \bbeta \right), \bSigma_\bY \right).
\end{align*}
 We can further extend the formula using the density of a multivariate Gaussian distribution, i.e., 
\begin{align*}
    \phi \left( \bx_i; \bmu_\bX, \bSigma_\bX \right) = (2 \pi)^{-d_\bX / 2} \lvert \bSigma_\bX \rvert^{-1/2} \exp \left[ -\dfrac{1}{2} \left( \bx_i - \bmu_\bX \right)^\top \bSigma^{-1}_\bX \left( \bx_i - \bmu_\bX \right) \right]
\end{align*}
and
\begin{align*}
    \phi \left( \by_i; \bmu_\bY \left( \bx_i; \bbeta \right), \bSigma_\bY \right) &= (2 \pi)^{-d_\bY / 2} \lvert \bSigma_\bY \rvert^{-1/2} \exp \left[ -\dfrac{1}{2} \Big( \by_i - \bmu_\bY \left( \bx_i; \bbeta \right) \Big)^\top \bSigma^{-1}_\bY \Big( \by_i - \bmu_\bY \left( \bx_i; \bbeta \right) \Big) \right] \\[1ex]
    &= (2 \pi)^{-d_\bY / 2} \lvert \bSigma_\bY \rvert^{-1/2} \exp \left[ -\dfrac{1}{2} \left( \by_i - \bbeta^\top \begin{bmatrix} 1 \\ \bx_i \end{bmatrix} \right)^\top \bSigma^{-1}_\bY \left( \by_i - \bbeta^\top \begin{bmatrix} 1 \\ \bx_i \end{bmatrix} \right) \right].
\end{align*}
The corresponding log-likelihood can be written as follows 
\begin{align}
    l (\bvartheta) = l_{1} (\bvartheta_\bX) + l_{2}(\bvartheta_\bY), \label{eq:lc_regr}
\end{align}
where $\bvartheta_\bX=\{\bmu_\bX, \bSigma_\bX\}$, $\bvartheta_\bY=\{\bbeta, \bSigma_\bY\}$, and
\begin{align*}
    l_{1} (\bvartheta_\bX) &= -\dfrac{1}{2} \sum_{i = 1}^n \left[ d_\bX \log (2 \pi) + \log \lvert \bSigma_\bX \rvert + \left( \bx_i - \bmu_\bX \right)^\top \bSigma^{-1}_\bX \left( \bx_i - \bmu_\bX \right) \right] \\[1ex]
    l_{2} (\bvartheta_\bY) &= -\dfrac{1}{2} \sum_{i = 1}^n \left[ d_\bY \log (2 \pi) + \log \lvert \bSigma_\bY \rvert + \left( \by_i - \bmu_\bY \left( \bx_i; \bbeta \right) \right)^\top \bSigma^{-1}_\bY \left( \by_i - \bmu_\bY \left( \bx_i; \bbeta \right) \right) \right] \\
    &= -\dfrac{1}{2} \sum_{i = 1}^n \left[ d_\bY \log (2 \pi) + \log \lvert \bSigma_\bY \rvert + \left( \by_i - \bbeta^\top \begin{bmatrix} 1 \\ \bx_i \end{bmatrix} \right)^\top \bSigma^{-1}_\bY \left( \by_i - \bbeta^\top \begin{bmatrix} 1 \\ \bx_i \end{bmatrix} \right) \right].
\end{align*}
ML parameter estimates can be obtained using the EM algorithm detailed in the next section.

\section{Multivariate linear regression with multiple random covariates and missing values}\label{sec:MLR mar}

Recall that in this work, we focus on a missing at random (MAR) mechanism where missingness depends only on observed values of the data matrix \citep{little_statistical_2020}. 
In this context, each observation ($\bX_i, \bY_i$) can be decomposed into the observed and missing sub-vectors $\bX^o_i$ of dimension $d_{\bX_i}^o$, $\bX^m_i$ of dimension $d_{\bX_i}^m$, $\bY^o_i$ of dimension $d_{\bY_i}^{o}$, and $\bY^m_i$ of dimension $d_{\bY_i}^{m}$. Specifically,
\begin{align*}
    \bX_i = \begin{bmatrix} \underset{(d_{\bX_i}^o \times 1)}{\bX^o_i} \\[4ex] \underset{(d_{\bX_i}^m \times 1)}{\bX^m_i} \end{bmatrix} \quad \text{and} \quad \bY_i = \begin{bmatrix} \underset{(d_{\bY_i}^o \times 1)}{\bY^o_i} \\[4ex] \underset{(d_{\bY_i}^m \times 1)}{\bY^m_i} \end{bmatrix}.
\end{align*}
Note that the notations $o$ and $m$ do not imply the same pattern of missingness for all observations; rather, they are used instead of $o_i$ and $m_i$ for the sake of simplicity. We will now work on the complete-data log-likelihood, where the complete data includes observed and missing values. Equation ~\eqref{eq:lc_regr} becomes:
\begin{align}
      l_c (\bvartheta) = l_{c1} (\bvartheta_\bX) + l_{c2}(\bvartheta_\bY), \label{eq:lc_regr_missing}
\end{align}
where
\begin{align*}
    l_{c1} (\bvartheta_\bX) &= -\dfrac{1}{2} \sum_{i = 1}^n  \left[ d_\bX \log (2 \pi) + \log \lvert \bSigma_{\bX} \rvert + \left( \begin{bmatrix} \bx^o_i \\[1ex] \bx^m_i \end{bmatrix} - \begin{bmatrix} {\bmu}^o_{\bX} \\[1ex] \bmu^m_{\bX} \end{bmatrix} \right)^\top \bSigma^{-1}_{\bX} \left( \begin{bmatrix} \bx^o_i \\[1ex] \bx^m_i \end{bmatrix} - \begin{bmatrix} {\bmu}^o_{\bX} \\[1ex] {\bmu}^m_{\bX} \end{bmatrix} \right) \right] \\[1ex]
    l_{c2} (\bvartheta_\bY)&= -\dfrac{1}{2} \sum_{i = 1}^n \left[ d_\bY \log (2 \pi) + \log \lvert \bSigma_{\bY} \rvert + \left( \begin{bmatrix} \by^o_i \\[1ex] \by^m_i \end{bmatrix} - \bbeta^\top \begin{bmatrix} 1 \\ \bx^o_i \\[1ex] \bx^m_i \end{bmatrix} \right)^\top \bSigma^{-1}_{\bY} \left( \begin{bmatrix} \by^o_i \\[1ex] \by^m_i \end{bmatrix} - \bbeta^\top \begin{bmatrix} 1 \\ \bx^o_i \\[1ex] \bx^m_i \end{bmatrix} \right) \right].
\end{align*}

ML estimates of $\bvartheta$ in the presence of MAR values can be obtained using the expectation-maximization (EM) algorithm introduced by \citet{DemLai77}. 
This algorithm iteratively alternates between an expectation (E) step and a maximization (M) step until convergence. 
Details on the two steps are as follows.
\begin{description}
    \item[E-Step:] Compute the expectation $Q \left( \bvartheta ; \dot{\bvartheta} \right) = E \left[ \, l_c \left(\bvartheta \right) \mid \bx^o_1, \ldots, \bx^o_n, \by^o_1, \ldots, \by^o_n , \dot{\bvartheta} \right]$, where $l_c \left( \bvartheta\right)$ is the complete-data log-likelihood function based on $\{ \bX^o_i, \bX^m_i, \bY^o_i, \bY^m_i \}_{i = 1}^n$ and $\dot{\bvartheta}$ is the update of $\bvartheta$ from the previous M-step.
    \item[M-Step:] Update the current parameters $\dot{\bvartheta}$ with the new parameters $\ddot{\bvartheta}$ that maximize $Q \left( \bvartheta ; \dot{\bvartheta} \right)$.
\end{description}
Herein, the superscripts single dots and double dots on top of the parameters stand for estimates at the previous and current iterations,
respectively. From \eqref{eq:lc_regr_missing}, we can see that in the E-step, we need to compute the conditional expected value of the following terms given the observed data $\bx^o_i$ and $\by^o_i$: 
  \begin{align}\label{eq:emalx} \left( \begin{bmatrix} \bx^o_i \\[1ex] \bX^m_i \end{bmatrix} - \begin{bmatrix} \dot{\bmu}^o_{\bX} \\[1ex] \dot{\bmu}^m_{\bX} \end{bmatrix} \right)^\top \dot{\bSigma}^{-1}_{\bX} \left( \begin{bmatrix} \bx^o_i \\[1ex] \bX^m_i \end{bmatrix} - \begin{bmatrix} \dot{\bmu}^o_{\bX} \\[1ex] \dot{\bmu}^m_{\bX} \end{bmatrix} \right), \\[2ex] \label{eq:emaly}
  \left( \begin{bmatrix} \by^o_i \\[1ex] \bY^m_i \end{bmatrix} - \dot{\bbeta}^\top \begin{bmatrix} 1 \\ \bx^o_i \\[1ex] \bX^m_i \end{bmatrix} \right)^\top \dot{\bSigma}^{-1}_{\bY} \left( \begin{bmatrix} \by^o_i \\[1ex] \bY^m_i \end{bmatrix} - \dot{\bbeta}^\top \begin{bmatrix} 1 \\ \bx^o_i \\[1ex] \bX^m_i \end{bmatrix} \right).
    \end{align}
The previously defined propositions are needed for this computation, specifically, using Proposition~\ref{prop:cond_X2_x1_y1}, 
setting $\bX_2=\bX^m_i$, $\bx_1=\bx^o_i$, and $\by_1=\by^o_i$, the first expectation involving \eqref{eq:emalx} can be simplified as follows
\begin{align*}
    & E \left( \left( \begin{bmatrix} \bx^o_i \\[1ex] \bX^m_i \end{bmatrix} - \begin{bmatrix} \dot{\bmu}^o_{\bX} \\[1ex] \dot{\bmu}^m_{\bX} \end{bmatrix} \right)^\top \dot{\bSigma}^{-1}_{\bX} \left( \begin{bmatrix} \bx^o_i \\[1ex] \bX^m_i \end{bmatrix} - \begin{bmatrix} \dot{\bmu}^o_{\bX} \\[1ex] \dot{\bmu}^m_{\bX} \end{bmatrix} \right) \, \middle| \, \bx^o_i, \by^o_i, \dot{\bvartheta} \right) \\[1ex]
    =& \, \text{Trace} \left( \dot{\bSigma}^{-1}_{\bX} \left[ \ddot{\bOmega}_{\bX \mid i} + \left( \begin{bmatrix} \bx^o_i \\[1ex] \ddot{\tilde{\bx}}_{i} \end{bmatrix} - \begin{bmatrix} \dot{\bmu}^o_{\bX} \\[1ex] \dot{\bmu}^m_{\bX} \end{bmatrix} \right) \left( \begin{bmatrix} \bx^o_i \\[1ex]\ddot{ \tilde{\bx}}_{i} \end{bmatrix} - \begin{bmatrix} \dot{\bmu}^o_{\bX} \\[1ex] \dot{\bmu}^m_{\bX} \end{bmatrix} \right)^\top \right] \right),
\end{align*}
where
\begin{align*}
    \ddot{\tilde{\bx}}_{i} &= E \left( \bX^m_i \mid \bx^o_i, \by^o_i, \dot{\bvartheta} \right)\\
    \ddot{\bOmega}_{\bX \mid i} &= \text{Cov} \left( \begin{bmatrix} \bx^o_i \\[1ex] \bX^m_i \end{bmatrix} \, \middle| \, \bx^o_i, \by^o_i, \dot{\bvartheta} \right) = \begin{bmatrix} \boldsymbol{0}_{d^o_\bX \times d^o_\bX} & \boldsymbol{0}_{d^o_\bX \times d^m_\bX} \\[2ex]  \boldsymbol{0}_{d^m_\bX \times d^o_\bX} & \text{Cov} \left( \bX^m_i \mid \bx^o_i, \by^o_i, \dot{\bvartheta} \right) \end{bmatrix}, \\
    \text{and} \quad \text{Cov} \left( \bX^m_i \mid \bx^o_i, \by^o_i, \dot{\bvartheta} \right) &= \, E \left( \bX^m_i \bX^{m \top}_i \mid \bx^o_i, \by^o_i, \dot{\bvartheta} \right) - E \left( \bX^m_i \mid \bx^o_i, \by^o_i, \dot{\bvartheta} \right) E \left( \bX^m_i \mid \bx^o_i, \by^o_i, \dot{\bvartheta} \right)^\top.
\end{align*}

\begin{proposition} \label{prop:xtildeR}For missing covariates, the conditional expectations of $\bX^m_{i}$ and $\bX^m_i \bX^{m\top}_i$ given observed covariates $\bx^o_i$, observed responses $\by^o_i$, and current parameter estimates $\dot{\bvartheta}$ are obtained by
\begin{align*}
    \ddot{\tilde{\bx}}_i &= E \left( \bX^m_i \mid \bx^o_i, \by^o_i, \dot{\bvartheta} \right) = \dot{\bmu}^m_{\bX  } + \begin{bmatrix} \dot{\bSigma}^{mo}_{\bX \bX  } & \dot{\bSigma}^{mo}_{\bX \bY  } \end{bmatrix} \begin{bmatrix} \dot{\bSigma}^{oo}_{\bX \bX  } & \dot{\bSigma}^{oo}_{\bX \bY  } \\[1ex] \dot{\bSigma}^{oo}_{\bY \bX  } & \dot{\bSigma}^{oo}_{\bY \bY  } \end{bmatrix}^{-1} \left( \begin{bmatrix} \bx^o_i \\[1ex] \by^o_i \end{bmatrix} - \begin{bmatrix} \dot{\bmu}^o_{\bX  } \\[1ex] \dot{\tilde{\bmu}}^o_{\bY  } \end{bmatrix} \right)
\end{align*}
and
\begin{align*}
    \ddot{\tilde{\tilde{\bx}}}_i &= E \left( \bX^m_i \bX^{m \top}_i \mid \bx^o_i, \by^o_i , \dot{\bvartheta} \right) = \dot{\bSigma}^{mm}_{\bX \bX  } - \begin{bmatrix} \dot{\bSigma}^{mo}_{\bX \bX  } & \dot{\bSigma}^{mo}_{\bX \bY  } \end{bmatrix} \begin{bmatrix} \dot{\bSigma}^{oo}_{\bX \bX  } & \dot{\bSigma}^{oo}_{\bX \bY  } \\[1ex] \dot{\bSigma}^{oo}_{\bY \bX  } & \dot{\bSigma}^{oo}_{\bY \bY  } \end{bmatrix}^{-1} \begin{bmatrix} \dot{\bSigma}^{om}_{\bX \bX  } \\[1ex] \dot{\bSigma}^{om}_{\bY \bX  } \end{bmatrix} + \ddot{\tilde{\bx}}_i \ddot{\tilde{\bx}}^\top_i
\end{align*}
respectively.
\end{proposition}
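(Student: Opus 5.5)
The plan is to obtain both formulas as a direct instance of Proposition~\ref{prop:cond_X2_x1_y1}, applied to observation $i$ under the current parameter value $\dot{\bvartheta}$. For each $i$, I identify $\bX_1=\bX^o_i$, $\bX_2=\bX^m_i$, $\bY_1=\bY^o_i$ and $\bY_2=\bY^m_i$. This identification only requires permuting coordinates of $\bX$ and of $\bY$ so that the observed entries come first. A coordinate permutation is a linear map with a permutation matrix, so the permuted vector is still jointly Gaussian as in \eqref{eq:joint_X1_X2_Y1_Y2}. Its mean and covariance are simply the corresponding rows and columns of $\bmu_\bX$, $\tilde{\bmu}_\bY$ and of the blocks $\bSigma_{\bX\bX}=\bSigma_\bX$, $\bSigma_{\bX\bY}=\bSigma_\bX\bB^\top$, $\bSigma_{\bY\bY}=\bB\bSigma_\bX\bB^\top+\bSigma_\bY$.

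\textbf{Step 1.} I introduce notation for the sub-blocks at $\dot{\bvartheta}$. For example, $\dot{\bSigma}^{mo}_{\bX\bY}$ is the submatrix of $\dot{\bSigma}_\bX\dot{\bB}^\top$ with rows indexed by the missing covariates of unit $i$ and columns indexed by the observed responses. The other blocks are defined in the same way. Since the pattern depends on $i$, this notation is understood per observation, as the paper notes.

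\textbf{Step 2.} Substituting into Proposition~\ref{prop:cond_X2_x1_y1} gives
\[
\bX^m_i \mid \bx^o_i,\by^o_i,\dot{\bvartheta} \sim \mathcal{G}_{d^m_{\bX_i}}\left(\bmu_{\bX_2\mid \bx_1,\by_1},\bSigma_{\bX_2\mid\bx_1,\by_1}\right).
\]
The mean is exactly the stated $\ddot{\tilde{\bx}}_i$.

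\textbf{Step 3.} For the second moment I use the identity $E(\bZ\bZ^\top)=\text{Cov}(\bZ)+E(\bZ)E(\bZ)^\top$, which is the definition of $\text{Cov}(\bX^m_i\mid\cdot)$ already displayed before the statement, rearranged. Plugging in the conditional covariance from Proposition~\ref{prop:cond_X2_x1_y1} yields $\ddot{\tilde{\tilde{\bx}}}_i$. Here I check that the transposed cross block $[\dot{\bSigma}^{om}_{\bX\bX};\dot{\bSigma}^{om}_{\bY\bX}]$ matches $[\bSigma_{\bX_1\bX_2};\bSigma_{\bY_1\bX_2}]$, using symmetry of the joint covariance.

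The only point needing care, and the main obstacle, is that the observed block matrix
\[
\begin{bmatrix}\dot{\bSigma}^{oo}_{\bX\bX} & \dot{\bSigma}^{oo}_{\bX\bY}\\ \dot{\bSigma}^{oo}_{\bY\bX} & \dot{\bSigma}^{oo}_{\bY\bY}\end{bmatrix}
\]
must be invertible. It is a principal submatrix of the joint covariance. That joint covariance is positive definite whenever $\dot{\bSigma}_\bX$ and $\dot{\bSigma}_\bY$ are, because it factors as
\[
\begin{bmatrix}\bm I&\bm 0\\ \bB&\bm I\end{bmatrix}\operatorname{diag}(\bSigma_\bX,\bSigma_\bY)\begin{bmatrix}\bm I&\bm 0\\ \bB&\bm I\end{bmatrix}^\top.
\]
A principal submatrix of a positive definite matrix is positive definite, so the inverse exists.

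The degenerate cases are handled separately:
\begin{itemize}
  \item If $\bX_i$ has no missing entries, the statement is vacuous.
  \item If no entries of $\bx^o_i,\by^o_i$ are observed, the formulas reduce to the unconditional moments. This case is excluded anyway by the assumption that each row has at least one observed value.
\end{itemize}
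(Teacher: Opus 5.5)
Your proposal is correct and follows the paper's proof: both apply Proposition~\ref{prop:cond_X2_x1_y1} with $\bX_1=\bX^o_i$, $\bX_2=\bX^m_i$, $\bY_1=\bY^o_i$ to obtain the Gaussian conditional law of $\bX^m_i$, read off its mean, and then use $E(\bX^m_i\bX^{m\top}_i\mid\bx^o_i,\by^o_i,\dot{\bvartheta})=\text{Cov}(\bX^m_i\mid\bx^o_i,\by^o_i,\dot{\bvartheta})+\ddot{\tilde{\bx}}_i\ddot{\tilde{\bx}}_i^\top$. Your extra remarks on the coordinate permutation and on the invertibility of the observed block fill in details the paper leaves implicit; invertibility follows from factoring the joint covariance as $\operatorname{diag}(\bSigma_\bX,\bSigma_\bY)$ conjugated by a unit lower-triangular matrix.
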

\begin{proof}
Using Proposition~\ref{prop:cond_X2_x1_y1} 
with $\bX_1=\bX^o_i$, and $\bY_1=\bY^o_i$, we have the conditional distribution 
\begin{align*}
    \bX^m_i \mid \bx^o_i, \by^o_i , \dot{\bvartheta} \sim  \mathcal{G} \left( \bmu^\ast_i, \bSigma^\ast_i \right),
\end{align*}
where
\begin{align*}
    \bmu^\ast_i &= E \left( \bX^m_i \mid \bx^o_i, \by^o_i , \dot{\bvartheta} \right) = \dot{\bmu}^m_{\bX  } + \begin{bmatrix} \dot{\bSigma}^{mo}_{\bX \bX  } & \dot{\bSigma}^{mo}_{\bX \bY  } \end{bmatrix} \begin{bmatrix} \dot{\bSigma}^{oo}_{\bX \bX  } & \dot{\bSigma}^{oo}_{\bX \bY  } \\[1ex] \dot{\bSigma}^{oo}_{\bY \bX  } & \dot{\bSigma}^{oo}_{\bY \bY  } \end{bmatrix}^{-1} \left( \begin{bmatrix} \bx^o_i \\[1ex] \by^o_i \end{bmatrix} - \begin{bmatrix} \dot{\bmu}^o_{\bX  } \\[1ex] \dot{\tilde{\bmu}}^o_{\bY  } \end{bmatrix} \right)
\end{align*}
and
\begin{align*}
    \bSigma^\ast_i &= \text{Cov} \left( \bX^m_i \mid \bx^o_i, \by^o_i , \dot{\bvartheta} \right) = \dot{\bSigma}^{mm}_{\bX \bX  } - \begin{bmatrix} \dot{\bSigma}^{mo}_{\bX \bX  } & \dot{\bSigma}^{mo}_{\bX \bY  } \end{bmatrix} \begin{bmatrix} \dot{\bSigma}^{oo}_{\bX \bX  } & \dot{\bSigma}^{oo}_{\bX \bY  } \\[1ex] \dot{\bSigma}^{oo}_{\bY \bX  } & \dot{\bSigma}^{oo}_{\bY \bY  } \end{bmatrix}^{-1} \begin{bmatrix} \dot{\bSigma}^{om}_{\bX \bX  } \\[1ex] \dot{\bSigma}^{om}_{\bY \bX  } \end{bmatrix}.
\end{align*}
Note that
\begin{align*}
    \ddot{\tilde{\bx}}_i = E \left( \bX^m_i \mid \bx^o_i, \by^o_i , \dot{\bvartheta} \right) = \bmu^\ast_i \quad \text{and} \quad \ddot{\tilde{\tilde{\bx}}}_i = E \left( \bX^m_i \bX^{m \top}_i \mid \bx^o_i, \by^o_i , \dot{\bvartheta} \right) = \bSigma^\ast_{ij} + \ddot{\tilde{\bx}}_i \ddot{\tilde{\bx}}^\top_i.
\end{align*}
Accordingly, plugging the values of $\bmu^\ast_i$ and $\bSigma^\ast_i$ into $\ddot{\tilde{\bx}}_i$ and $\ddot{\tilde{\tilde{\bx}}}_i$ completes the proof.
\end{proof}

It is interesting to see that in the computation of both $\ddot{\tilde{\bx}}_i$ and $\ddot{\tilde{\tilde{\bx}}}_i$, the terms added to $\dot{\bmu}^{m}_{\bX}$ and $\dot{\bSigma}^{mm}_{\bX \bX}$, respectively, represent an adjustment for imputing the conditions in the expectation
computation.\\
Similarly, from Propositions \ref{prop:cond_Y2_X1_Y1}  
and \ref{prop:cond_Y2_X2_X1_Y1}, setting $\bX_2=\bX^m_i$, $\bY_2=\bY^m_i$, $\bx_1=\bx^o_i$, and $\by_1=\by^o_i$, the second expectation involving \eqref{eq:emaly} can be simplified as follows
\begin{align*}
    & E \left( \left( \begin{bmatrix} \by^o_i \\[1ex] \bY^m_i \end{bmatrix} - \dot{\bbeta}^\top \begin{bmatrix} 1 \\ \bx^o_i \\[1ex] \bX^m_i \end{bmatrix} \right)^\top \dot{\bSigma}^{-1}_{\bY} \left( \begin{bmatrix} \by^o_i \\[1ex] \bY^m_i \end{bmatrix} - \dot{\bbeta}^\top \begin{bmatrix} 1 \\ \bx^o_i \\[1ex] \bX^m_i \end{bmatrix} \right) \, \middle| \, \bx^o_i, \by^o_i, \dot{\bvartheta} \right) \\[1ex]
    =& \,  \text{Trace} \left[ \dot{\bSigma}^{-1}_{\bY} \left( \ddot{\bOmega}_{\bY \mid i} + \dot{\bDelta}_{i} \dot{\bbeta} + \dot{\bbeta}^\top \dot{\bDelta}^\top_{i} + \dot{\bbeta}^\top \ddot{\bOmega}^\ast_{\bX \mid i} \right) \right] + \left( \begin{bmatrix} \by^o_i \\[1ex] \ddot{\tilde{\by}}_{i} \end{bmatrix} - \dot{\bbeta}^\top \begin{bmatrix} 1 \\ \bx^o_i \\[1ex] \ddot{\tilde{\bx}}_{i} \end{bmatrix} \right)^\top \dot{\bSigma}^{-1}_{\bY} \left( \begin{bmatrix} \by^o_i \\[1ex] \ddot{\tilde{\by}}_{i} \end{bmatrix} - \dot{\bbeta}^\top \begin{bmatrix} 1 \\ \bx^o_i \\[1ex] \ddot{\tilde{\bx}}_{i} \end{bmatrix} \right).
\end{align*}
Note that this expectation involves many quantities, which are defined as follows
\begin{align*}
    \ddot{\bOmega}_{\bY \mid i} &= \text{Cov} \left( \begin{bmatrix} \by^o_i \\[1ex] \bY^m_i \end{bmatrix} \, \middle| \, \bx^o_i, \by^o_i, \dot{\bvartheta} \right) = \begin{bmatrix} \boldsymbol{0}_{d^o_\bX \times d^o_\bX} & \boldsymbol{0}_{d^o_\bY \times d^m_\bY} \\[2ex]  \boldsymbol{0}_{d^m_\bY \times d^o_\bY} & \ddot{\tilde{\tilde{\by}}}_{i} - \ddot{\tilde{\by}}_{i} \ddot{\tilde{\by}}^\top_{i} \end{bmatrix}, \\[2ex]
    \ddot{\tilde{\by}}_i &= E \left( \bY^m_i \mid \bx^o_i, \by^o_i, \dot{\bvartheta} \right), \quad  \ddot{ \tilde{\tilde{\by}}}_i = E \left( \bY^m_i \bY^{m \top}_i \mid \bx^o_i, \by^o_i, \dot{\bvartheta} \right),  \\[2ex]
    \ddot{\bOmega}^\ast_{\bX \mid i} &= \text{Cov} \left( \begin{bmatrix} 1 \\ \bx^o_i \\[1ex] \bX^m_i \end{bmatrix} \, \middle| \, \bx^o_i, \by^o_i, \dot{\bvartheta} \right) = \begin{bmatrix} \boldsymbol{0}_{1 \times 1} & \boldsymbol{0}_{1 \times d^o_\bX} & \boldsymbol{0}_{1 \times d^m_\bX} \\[2ex] \boldsymbol{0}_{d^o_\bX \times 1} &  \boldsymbol{0}_{d^o_\bX \times d^o_\bX} & \boldsymbol{0}_{d^o_\bX \times d^m_\bX} \\[2ex]  \boldsymbol{0}_{d^m_\bX \times 1} & \boldsymbol{0}_{d^m_\bX \times d^o_\bX} & \ddot{\tilde{\tilde{\bx}}}_{i} - \ddot{\tilde{\bx}}_{i} \ddot{\tilde{\bx}}^\top_{i} \end{bmatrix} = \begin{bmatrix} \boldsymbol{0}_{1 \times 1} & \boldsymbol{0}_{1 \times d_\bX} \\[2ex] \boldsymbol{0}_{d_\bX \times 1} &  \ddot{\bOmega}_{\bX \mid i} \end{bmatrix}, \\[2ex]
    \dot{\bDelta}_{i} &= \text{Cov} \left( \begin{bmatrix} \by^o_i \\[1ex] \bY^m_i \end{bmatrix}, \begin{bmatrix} 1 \\ \bx^o_i \\[1ex] \bX^m_i \end{bmatrix} \, \middle| \, \bx^o_i, \by^o_i, \dot{\bvartheta} \right) = \begin{bmatrix}
\boldsymbol{0}_{d^o_\bY \times 1} & \boldsymbol{0}_{d^o_\bY \times d^o_\bX} & \boldsymbol{0}_{d^o_\bY \times d^m_\bX}  \\[2ex]
\boldsymbol{0}_{d^m_\bY \times 1} & \boldsymbol{0}_{d^m_\bY \times d^o_\bX} & \dot{\tilde{\bSigma}}^{mm}_{\bY \bX \mid i}
\end{bmatrix}, \\[2ex]
    \text{and} \quad \dot{\tilde{\bSigma}}^{mm}_{\bY \bX \mid i} &= \text{Cov} \left( \bY^m_i, \bX^m_i \mid \bx^o_i, \by^o_i, \dot{\bvartheta} \right) \\
    &= \begin{bmatrix} \dot{\bSigma}^{mm}_{\bY \bY} & \dot{\bSigma}^{mm}_{\bY \bX} \\[1ex] \dot{\bSigma}^{mm}_{\bX \bY} & \dot{\bSigma}^{mm}_{\bX \bX} \end{bmatrix} - \begin{bmatrix} \dot{\bSigma}^{mo}_{\bY \bY} & \dot{\bSigma}^{mo}_{\bY \bX} \\[1ex]
\dot{\bSigma}^{mo}_{\bX \bY} & \dot{\bSigma}^{mo}_{\bX \bX} \end{bmatrix} \begin{bmatrix} \dot{\bSigma}^{oo}_{\bY \bY} & \dot{\bSigma}^{oo}_{\bY \bX} \\[1ex] \dot{\bSigma}^{oo}_{\bX \bY} & \dot{\bSigma}^{oo}_{\bX \bX} \end{bmatrix}^{-1} \begin{bmatrix} \dot{\bSigma}^{om}_{\bY \bY} & \dot{\bSigma}^{om}_{\bY \bX} \\[1ex] \dot{\bSigma}^{om}_{\bX \bY} & \dot{\bSigma}^{om}_{\bX \bX} \end{bmatrix}.
\end{align*}

These expectations can be computed using the following propositions.
\begin{proposition}  
\label{prop:ytildeR} 
For missing responses, the conditional expectations of $\bY^m_{i}$ and $\bY^m_i \bY^{m\top}_i$ given observed covariates $\bx^o_i$, observed responses $\by^o_i$, and current parameter estimates $\dot{\bvartheta}$ are obtained by
\begin{align*}
    \ddot{\tilde{\by}}_{i} &= E \left( \bY^m_i \mid \bx^o_i, \by^o_i , \dot{\bvartheta} \right) = \dot{\tilde{\bmu}}^m_{\bY  } + \begin{bmatrix} \dot{\bSigma}^{mo}_{\bY \bX  } & \dot{\bSigma}^{mo}_{\bY \bY  } \end{bmatrix} \begin{bmatrix} \dot{\bSigma}^{oo}_{\bX \bX  } & \dot{\bSigma}^{oo}_{\bX \bY  } \\[1ex] \dot{\bSigma}^{oo}_{\bY \bX  } & \dot{\bSigma}^{oo}_{\bY \bY  } \end{bmatrix}^{-1} \left( \begin{bmatrix} \bx^o_i \\[1ex] \by^o_i \end{bmatrix} - \begin{bmatrix} \dot{\bmu}^o_{\bX  } \\[1ex] \dot{\tilde{\bmu}}^o_{\bY  } \end{bmatrix} \right)
\end{align*}
and
\begin{align*}
    \ddot{ \tilde{\tilde{\by}}}_{i} &= E \left( \bY^m_i \bY^{m \top}_i \mid \bx^o_i, \by^o_i , \dot{\bvartheta} \right) = \dot{\bSigma}^{mm}_{\bY \bY  } - \begin{bmatrix} \dot{\bSigma}^{mo}_{\bY \bX  } & \dot{\bSigma}^{mo}_{\bY \bY  } \end{bmatrix} \begin{bmatrix} \dot{\bSigma}^{oo}_{\bX \bX  } & \dot{\bSigma}^{oo}_{\bX \bY  } \\[1ex] \dot{\bSigma}^{oo}_{\bY \bX  } & \dot{\bSigma}^{oo}_{\bY \bY  } \end{bmatrix}^{-1} \begin{bmatrix} \dot{\bSigma}^{om}_{\bX \bY  } \\[1ex] \dot{\bSigma}^{om}_{\bY \bY  } \end{bmatrix} + \ddot{\tilde{\by}}_{i} \ddot{\tilde{\by}}^\top_{i}
\end{align*}
respectively.
\end{proposition}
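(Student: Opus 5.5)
The argument mirrors the proof of Proposition~\ref{prop:xtildeR}, with Proposition~\ref{prop:cond_Y2_X1_Y1} taking the place of Proposition~\ref{prop:cond_X2_x1_y1}. First fix observation $i$. Under the current estimate $\dot{\bvartheta}$, the joint vector $(\bX_i^\top,\bY_i^\top)^\top$ is Gaussian with mean $(\dot{\bmu}_\bX^\top,\dot{\tilde{\bmu}}_\bY^\top)^\top$, where $\dot{\tilde{\bmu}}_\bY=\dot{\bb}_0+\dot{\bB}\dot{\bmu}_\bX$. Its covariance blocks are $\dot{\bSigma}_{\bX\bX}=\dot{\bSigma}_\bX$, $\dot{\bSigma}_{\bX\bY}=\dot{\bSigma}_\bX\dot{\bB}^\top$ and $\dot{\bSigma}_{\bY\bY}=\dot{\bB}\dot{\bSigma}_\bX\dot{\bB}^\top+\dot{\bSigma}_\bY$. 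Reorder the coordinates of $\bX_i$ and $\bY_i$ so that the observed entries come first. This is a permutation of coordinates, so the joint vector stays Gaussian, and its mean and covariance are the corresponding permuted sub-blocks, written with superscripts $o$ and $m$. This places us exactly in the setting of \eqref{eq:joint_X1_X2_Y1_Y2} with $\bX_1=\bX_i^o$, $\bX_2=\bX_i^m$, $\bY_1=\bY_i^o$ and $\bY_2=\bY_i^m$.

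Next apply Proposition~\ref{prop:cond_Y2_X1_Y1} with this identification. By the MAR assumption the conditioning can be done on the observed values alone. This gives $\bY_i^m\mid\bx_i^o,\by_i^o,\dot{\bvartheta}\sim\mathcal{G}(\bmu^\ast_i,\bSigma^\ast_i)$. Here $\bmu^\ast_i=\dot{\tilde{\bmu}}^m_\bY+[\dot{\bSigma}^{mo}_{\bY\bX}\ \dot{\bSigma}^{mo}_{\bY\bY}]\,(\dot{\bSigma}^{oo})^{-1}\big((\bx_i^{o\top},\by_i^{o\top})^\top-(\dot{\bmu}^{o\top}_\bX,\dot{\tilde{\bmu}}^{o\top}_\bY)^\top\big)$, and $\bSigma^\ast_i$ is the corresponding Schur complement $\dot{\bSigma}^{mm}_{\bY\bY}-[\dot{\bSigma}^{mo}_{\bY\bX}\ \dot{\bSigma}^{mo}_{\bY\bY}](\dot{\bSigma}^{oo})^{-1}[\dot{\bSigma}^{om\top}_{\bX\bY}\ \dot{\bSigma}^{om\top}_{\bY\bY}]^\top$. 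We then set $\ddot{\tilde{\by}}_i=\bmu^\ast_i$.

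For the second moment, use the identity $E(\bZ\bZ^\top)=\text{Cov}(\bZ)+E(\bZ)E(\bZ)^\top$ for the conditional law. This yields $\ddot{\tilde{\tilde{\by}}}_i=\bSigma^\ast_i+\ddot{\tilde{\by}}_i\ddot{\tilde{\by}}_i^\top$, which is the stated formula.

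The only real obstacle is bookkeeping. The block ordering in the conditioning matrix must match the order of the stacked vector $(\bx_i^o,\by_i^o)$. The cross-covariance blocks must also be consistent, namely $\dot{\bSigma}^{om}_{\bX\bY}=(\dot{\bSigma}^{mo}_{\bY\bX})^\top$ and $\dot{\bSigma}^{om}_{\bY\bY}=(\dot{\bSigma}^{mo}_{\bY\bY})^\top$. Finally, invertibility of the observed block follows from positive definiteness of $\dot{\bSigma}_\bX$ and $\dot{\bSigma}_\bY$, since the full joint covariance is then positive definite.
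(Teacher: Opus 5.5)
Your proposal is correct and follows the paper's route. You apply Proposition~\ref{prop:cond_Y2_X1_Y1} with $\bX_1=\bX_i^o$, $\bY_1=\bY_i^o$ to get the Gaussian conditional law of $\bY_i^m$, read off its mean, and obtain the second moment from $E(\bZ\bZ^\top)=\text{Cov}(\bZ)+E(\bZ)E(\bZ)^\top$. Your coordinate reordering and the invertibility of the observed block (the joint covariance is positive definite since its Schur complement with respect to $\dot{\bSigma}_\bX$ is $\dot{\bSigma}_\bY$) are sound details that the paper leaves implicit.
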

\begin{proof}
Using proposition~\ref{prop:cond_Y2_X1_Y1} 
with $\bX_1=\bX^o_i$, and $\bY_1=\bY^o_i$, we have the conditional distribution 
\begin{align*}
    \bY^m_i \mid \bx^o_i, \by^o_i , \dot{\bvartheta} \sim  \mathcal{G} \left( \bmu^{\ast \ast}_{i}, \bSigma^{\ast \ast}_{i} \right),
\end{align*}
where
\begin{align*}
    \bmu^{\ast \ast}_{i} &= E \left( \bY^m_i \mid \bx^o_i, \by^o_i , \dot{\bvartheta} \right) = \dot{\tilde{\bmu}}^m_{\bY  } + \begin{bmatrix} \dot{\bSigma}^{mo}_{\bY \bX  } & \dot{\bSigma}^{mo}_{\bY \bY  } \end{bmatrix} \begin{bmatrix} \dot{\bSigma}^{oo}_{\bX \bX  } & \dot{\bSigma}^{oo}_{\bX \bY  } \\[1ex] \dot{\bSigma}^{oo}_{\bY \bX  } & \dot{\bSigma}^{oo}_{\bY \bY  } \end{bmatrix}^{-1} \left( \begin{bmatrix} \bx^o_i \\[1ex] \by^o_i \end{bmatrix} - \begin{bmatrix} \dot{\bmu}^o_{\bX  } \\[1ex] \dot{\tilde{\bmu}}^o_{\bY  } \end{bmatrix} \right) 
\end{align*}
and
\begin{align*}
    \bSigma^{\ast \ast}_{i} &= \text{Var} \left( \bY^m_i \mid \bx^o_i, \by^o_i , \dot{\bvartheta} \right) = \dot{\bSigma}^{mm}_{\bY \bY  } - \begin{bmatrix} \dot{\bSigma}^{mo}_{\bY \bX  } & \dot{\bSigma}^{mo}_{\bY \bY  } \end{bmatrix} \begin{bmatrix} \dot{\bSigma}^{oo}_{\bX \bX  } & \dot{\bSigma}^{oo}_{\bX \bY  } \\[1ex] \dot{\bSigma}^{oo}_{\bY \bX  } & \dot{\bSigma}^{oo}_{\bY \bY  } \end{bmatrix}^{-1} \begin{bmatrix} \dot{\bSigma}^{om}_{\bX \bY  } \\[1ex] \dot{\bSigma}^{om}_{\bY \bY  } \end{bmatrix} + \ddot{\tilde{\by}}_{i} \ddot{\tilde{\by}}^\top_{i}.
\end{align*}
Note that
\begin{align*}
    \ddot{\tilde{\by}}_{i} = E \left( \bY^m_i \mid \bx^o_i, \by^o_i , \dot{\bvartheta} \right) = \bmu^\ast_{i} \quad \text{and} \quad \ddot{\tilde{\tilde{\by}}}_{i} = E \left( \bY^m_i \bY^{m \top}_i \mid \bx^o_i, \by^o_i , \dot{\bvartheta} \right) = \bSigma^{\ast \ast}_{i} + \ddot{\tilde{\by}}_{i} \ddot{\tilde{\by}}^\top_{i}.
\end{align*}
Accordingly, plugging the values of $\bmu^{\ast \ast}_{i}$ and $\bSigma^{\ast \ast}_{i}$ into $\ddot{\tilde{\by}}_{i}$ and $\ddot{\tilde{\tilde{\by}}}_{i}$ completes the proof.
\end{proof}

\begin{proposition}  \label{prop:cov}
Given observed covariates $\bx^o_i$ and observed responses $\by^o_i$, the covariance between missing responses $\bY^m_i$ and missing covariates $\bX^m_i$ is given by
\begin{align*}
    \ddot{\tilde{\bSigma}}^{mm}_{\bY \bX \mid i } &= \text{Cov} \left( \bY^m_i, \bX^m_i \, \middle| \, \bx^o_i, \by^o_i , \dot{\bvartheta} \right) \\
    &= \begin{bmatrix} \dot{\bSigma}^{mm}_{\bY \bY  } & \dot{\bSigma}^{mm}_{\bY \bX  } \\[1ex] \dot{\bSigma}^{mm}_{\bX \bY  } & \dot{\bSigma}^{mm}_{\bX \bX  } \end{bmatrix} - \begin{bmatrix} \dot{\bSigma}^{mo}_{\bY \bY  } & \dot{\bSigma}^{mo}_{\bY \bX  } \\[1ex]
\dot{\bSigma}^{mo}_{\bX \bY  } & \dot{\bSigma}^{mo}_{\bX \bX  } \end{bmatrix} \begin{bmatrix} \dot{\bSigma}^{oo}_{\bY \bY  } & \dot{\bSigma}^{oo}_{\bY \bX  } \\[1ex] \dot{\bSigma}^{oo}_{\bX \bY  } & \dot{\bSigma}^{oo}_{\bX \bX  } \end{bmatrix}^{-1} \begin{bmatrix} \dot{\bSigma}^{om}_{\bY \bY  } & \dot{\bSigma}^{om}_{\bY \bX  } \\[1ex] \dot{\bSigma}^{om}_{\bX \bY  } & \dot{\bSigma}^{om}_{\bX \bX  } \end{bmatrix}.
\end{align*}
\end{proposition}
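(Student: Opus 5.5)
The plan is to apply Proposition~\ref{prop:cond_Y2_X2_X1_Y1}, in the same way Propositions~\ref{prop:xtildeR} and~\ref{prop:ytildeR} applied Propositions~\ref{prop:cond_X2_x1_y1} and~\ref{prop:cond_Y2_X1_Y1}. For observation $i$, I will set $\bY_2=\bY^m_i$, $\bX_2=\bX^m_i$, $\bX_1=\bX^o_i$ and $\bY_1=\bY^o_i$, and evaluate all parameters at the current estimate $\dot{\bvartheta}$. The joint Gaussian structure in~\eqref{eq:joint_X1_X2_Y1_Y2} applies after permuting coordinates so that, for this observation, the observed entries come first. The blocks $\bSigma_{\bX_1\bX_1},\bSigma_{\bX_2\bY_1},\dots$ then become $\dot{\bSigma}^{oo}_{\bX\bX},\dot{\bSigma}^{mo}_{\bX\bY},\dots$, each obtained from $\dot{\bSigma}_{\bX}$ and $\dot{\bB}$ through $\bSigma_{\bX\bY}=\bSigma_\bX\bB^\top$, $\bSigma_{\bY\bY}=\bB\bSigma_\bX\bB^\top+\bSigma_\bY$, and so on.

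Proposition~\ref{prop:cond_Y2_X2_X1_Y1} states that $(\bY^m_i,\bX^m_i)\mid\bx^o_i,\by^o_i$ is Gaussian. Its covariance is the Schur complement
\[
\begin{bmatrix} \dot{\bSigma}^{mm}_{\bY\bY} & \dot{\bSigma}^{mm}_{\bY\bX} \\ \dot{\bSigma}^{mm}_{\bX\bY} & \dot{\bSigma}^{mm}_{\bX\bX}\end{bmatrix} - \begin{bmatrix} \dot{\bSigma}^{mo}_{\bY\bY} & \dot{\bSigma}^{mo}_{\bY\bX} \\ \dot{\bSigma}^{mo}_{\bX\bY} & \dot{\bSigma}^{mo}_{\bX\bX}\end{bmatrix}\begin{bmatrix} \dot{\bSigma}^{oo}_{\bY\bY} & \dot{\bSigma}^{oo}_{\bY\bX} \\ \dot{\bSigma}^{oo}_{\bX\bY} & \dot{\bSigma}^{oo}_{\bX\bX}\end{bmatrix}^{-1}\begin{bmatrix} \dot{\bSigma}^{om}_{\bY\bY} & \dot{\bSigma}^{om}_{\bY\bX} \\ \dot{\bSigma}^{om}_{\bX\bY} & \dot{\bSigma}^{om}_{\bX\bX}\end{bmatrix},
\]
which is exactly the displayed expression in the statement. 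The conditional covariance between $\bY^m_i$ and $\bX^m_i$ is then read off as the corresponding off-diagonal block of this matrix. The statement, consistent with the paper's earlier definition of $\dot{\tilde{\bSigma}}^{mm}_{\bY\bX\mid i}$, records the full conditional covariance matrix of $(\bY^m_i,\bX^m_i)$. Its upper-right $d^m_{\bY_i}\times d^m_{\bX_i}$ block is the cross-covariance used in $\dot{\bDelta}_i$, so I will state both.

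The only step that needs care is the ordering of the conditioning variables. In Proposition~\ref{prop:cond_Y2_X2_X1_Y1}, the observed block is ordered as $(\bY_1,\bX_1)$ inside the inverse, while the deviation vector is written as $(\bx_1,\by_1)$. The conditional covariance does not involve the deviation vector, and the Schur complement is invariant under simultaneous permutation of the conditioning coordinates. For any permutation matrix $\bm{P}$, $\bm{A}\bm{P}^\top(\bm{P}\bm{C}\bm{P}^\top)^{-1}\bm{P}\bm{A}'=\bm{A}\bm{C}^{-1}\bm{A}'$, so the ordering $(\bY^o,\bX^o)$ used in the statement is legitimate. I expect this bookkeeping to be the main point to verify. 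Beyond it, the proof is a direct substitution, as in the proofs of Propositions~\ref{prop:xtildeR} and~\ref{prop:ytildeR}.
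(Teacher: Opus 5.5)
Your proposal is correct and follows the paper's proof. Both apply Proposition~\ref{prop:cond_Y2_X2_X1_Y1} with $\bX_1=\bX^o_i$, $\bY_1=\bY^o_i$, $\bX_2=\bX^m_i$, $\bY_2=\bY^m_i$ and read off the resulting Schur complement as the conditional covariance. Your extra remarks are accurate refinements of points the paper's proof glosses over: the displayed matrix is the full conditional covariance of $(\bY^m_i,\bX^m_i)$, its upper-right block is the cross-covariance that actually enters $\dot{\bDelta}_i$, and the ordering of the conditioning block does not affect the covariance.
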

\begin{proof}
Using proposition~\ref{prop:cond_Y2_X2_X1_Y1}
with $\bX_1=\bX^o_i$, and $\bY_1=\bY^o_i$, we have the conditional distribution 
\begin{align*}
    \bY^m_i, \bX^m_i \mid \bx^o_i, \by^o_i , \dot{\bvartheta} \sim  \mathcal{G} \left( \bmu^{\ast \ast \ast}_{i }, \bSigma^{\ast \ast \ast}_{i } \right),
\end{align*}
where
\begin{align*}
    \bmu^{\ast \ast \ast}_{i } &= \begin{bmatrix} \tilde{\dot{\bmu}}^m_{\bY  } \\ \dot{\bmu}^m_{\bX  } \end{bmatrix} + \begin{bmatrix} \dot{\bSigma}^{mo}_{\bY \bY  } & \dot{\bSigma}^{mo}_{\bY \bX  } \\
\dot{\bSigma}^{mo}_{\bX \bY  } & \dot{\bSigma}^{mo}_{\bX \bX  } \end{bmatrix} \begin{bmatrix} \dot{\bSigma}^{oo}_{\bY \bY  } & \dot{\bSigma}^{oo}_{\bY \bX  } \\ \dot{\bSigma}^{oo}_{\bX \bY  } & \dot{\bSigma}^{oo}_{\bX \bX  } \end{bmatrix}^{-1} \left( \begin{bmatrix} \bx^o_i \\ \by^o_i \end{bmatrix} - \begin{bmatrix} \dot{\bmu}^o_{\bX  } \\ \dot{\tilde{\bmu}}^o_{\bY  } \end{bmatrix} \right) \\[1ex]
    \bSigma^{\ast \ast \ast}_{i } &= \begin{bmatrix} \dot{\bSigma}^{mm}_{\bY \bY  } & \dot{\bSigma}^{mm}_{\bY \bX  } \\[1ex] \dot{\bSigma}^{mm}_{\bX \bY  } & \dot{\bSigma}^{mm}_{\bX \bX  } \end{bmatrix} - \begin{bmatrix} \dot{\bSigma}^{mo}_{\bY \bY  } & \dot{\bSigma}^{mo}_{\bY \bX  } \\[1ex]
\dot{\bSigma}^{mo}_{\bX \bY  } & \dot{\bSigma}^{mo}_{\bX \bX  } \end{bmatrix} \begin{bmatrix} \dot{\bSigma}^{oo}_{\bY \bY  } & \dot{\bSigma}^{oo}_{\bY \bX  } \\[1ex] \dot{\bSigma}^{oo}_{\bX \bY  } & \dot{\bSigma}^{oo}_{\bX \bX  } \end{bmatrix}^{-1} \begin{bmatrix} \dot{\bSigma}^{om}_{\bY \bY  } & \dot{\bSigma}^{om}_{\bY \bX  } \\[1ex] \dot{\bSigma}^{om}_{\bX \bY  } & \dot{\bSigma}^{om}_{\bX \bX  } \end{bmatrix}
\end{align*}
Therefore,
\begin{align*}
    \ddot{\tilde{\bSigma}}^{mm}_{\bY \bX \mid i } = \text{Cov} \left( \bY^m_i, \bX^m_i \mid \bx^o_i, \by^o_i , \dot{\bvartheta} \right) = \bSigma^{\ast \ast \ast}_{i }.
\end{align*}
\end{proof}

Similarly to what we observed for the predictors, in the computation of both $\ddot{\tilde{\by}}_i$ and $\ddot{\tilde{\tilde{\by}}}_i$, the terms added to $\dot{\bmu}^{m}_{\bY}$ and $\dot{\bSigma}^{mm}_{\bY \bY}$, respectively, represent an adjustment for imputing the conditions in the expectation computation. While $\bY_i^m$ does not appear in the first expectation, both missing responses and covariates have a crucial role in the second expectation. This is evident from the fact that the covariance matrix $\text{Cov} (\bY_i^m, \bX_i^m | \bx_i^o, \by_i^o, \dot{\bvartheta})$ is needed for this computation. In other words, while the computation of the expectation of $l_{c1}(\bvartheta_{\bX})$ involves only $\ddot{\bOmega}_{\bX \mid i}$, the expectation of $l_{c2}(\bvartheta_{\bY})$ depends on $\ddot{\bOmega}^\ast_{\bX \mid i}$,  $\ddot{\bOmega}_{\bY\mid i}$, and the covariance $\text{Cov} (\bY_i^m, \bX_i^m | \bx_i^o, \by_i^o, \dot{\bvartheta})$, emphasizing the different role played by responses and covariates.

In the M-step, the parameters are updated as
\begin{align*}
    \ddot{\bmu}_{\bX } &= \dfrac{1}{n} \sum_{i = 1}^n \begin{bmatrix} \bx^o_i \\[1ex] \ddot{\tilde{\bx}}_i \end{bmatrix}, \\[2ex]
    \ddot{\bSigma}_{\bX } &= \dfrac{1}{n} \sum_{i = 1}^n \tilde{z}_i \left[ \ddot{\bOmega}_{\bX \mid i}  + \left( \begin{bmatrix} \bx^o_i \\[1ex] \ddot{\tilde{\bx}}_i \end{bmatrix} - \ddot{\bmu}_{\bX } \right) \left( \begin{bmatrix} \bx^o_i \\[1ex] \ddot{\tilde{\bx}}_i \end{bmatrix} - \ddot{\bmu}_{\bX } \right)^\top \right], \\[2ex]
    \ddot{\bbeta }&= \left( \sum_{i = 1}^n \begin{bmatrix} 1 \\ \bx^o_i \\[1ex] \ddot{\tilde{\bx}}_i \end{bmatrix} \begin{bmatrix} 1 & \bx^o_i & \ddot{\tilde{\bx}}_i \end{bmatrix} + \sum_{i = 1}^n \ddot{\bOmega}^\ast_{\bX \mid i} \right)^{-1} \left( \sum_{i = 1}^n \begin{bmatrix} 1 \\ \bx^o_i \\[1ex] \ddot{\tilde{\bx}}_i \end{bmatrix} \begin{bmatrix} \by^o_i & \ddot{\tilde{\by}}_i \end{bmatrix} + \sum_{i = 1}^n \dot{\bDelta}^\top_i \right), \\[2ex]
   \ddot{ \bSigma}_{\bY } &= \dfrac{1}{n} \sum_{i = 1}^n \left[ \left( \ddot{\bOmega}_{\bY \mid i} + \dot{\bDelta}_i \ddot{\bbeta} + \ddot{\bbeta}^\top \dot{\bDelta}^\top_i + \ddot{\bbeta}^\top \ddot{\bOmega}^\ast_{\bX \mid i} \ddot{\bbeta} \right)^\top + \left( \begin{bmatrix} \by^o_i \\[1ex] \ddot{\tilde{\by}}_i \end{bmatrix} - \ddot{\bbeta}^\top \begin{bmatrix} 1 \\ \bx^o_i \\[1ex] \ddot{\tilde{\bx}}_i \end{bmatrix} \right) \left( \begin{bmatrix} \by^o_i \\[1ex] \ddot{\tilde{\by}}_i \end{bmatrix} - \ddot{\bbeta}^\top \begin{bmatrix} 1 \\ \bx^o_i \\[1ex] \ddot{\tilde{\bx}}_i \end{bmatrix} \right)^\top \right].
\end{align*}
As in the E-step, we again observe that covariates and responses play different roles in the M-step. Specifically, the computation of $\ddot{\bmu}_{\bX }$ and $\ddot{\bSigma}_{\bX }$ depends on the covariates, whereas the computation of $\ddot{\bbeta}$ and $\ddot{\bSigma}_{\bY }$ involves both covariates and responses.

\section{Cluster-wise multivariate linear regression with multiple random covariates and missing values}\label{sec:clusterwise_missing}


Model-based clustering is a statistical methodology that assumes observations arise from a finite mixture of probability distributions, with each component typically corresponding to a distinct group or cluster in the data. 
A natural extension of model~\eqref{eq:jointreg} within this framework is the mixture of regressions with random covariates (MRRC), which models the joint distribution of $\left(\bX^\top, \bY^\top\right)^\top$ as follows:
\begin{equation}
    p_\text{MRRC} (\bx, \by; \bvartheta) = \sum_{j = 1}^k \pi_j \, \phi\left( \by; \bmu_{\bY \mid j} (\bx; \bbeta_j), \bSigma_{\bY \mid j} \right) \, \phi\left( \bx; \bmu_{\bX \mid j}, \bSigma_{\bX \mid j} \right),
    \label{eq:G-MRRC}
\end{equation}
where $\pi_j > 0$ and $\sum_{j = 1}^k \pi_j = 1$ are the mixing proportions, with all the other parameters defined as in \eqref{eq:jointreg} but separately for each mixture component.
For a detailed treatment of its multivariate-multiple formulation, see \citet{dang2017multivariate}. 
This model is also referred to in the literature as the cluster-weighted model.

\subsection{Maximum likelihood estimates of MRRC}

To obtain the ML estimates of $\bvartheta$ for model~\eqref{eq:G-MRRC}, it is a common practice, in the mixture framework, to employ the expectation-maximization (EM) algorithm. 
Under this approach, the observed data $\left(\bx_1^\top,\by_1^\top\right)^\top,\ldots,\left(\bx_n^\top,\by_n^\top\right)^\top$ are treated as incomplete. 
This incompleteness stems from the unknown cluster memberships of the observations. Specifically, for the $i$th unit, this missing information is captured by the unobserved component membership vector $\bZ_i = \left( Z_{i1}, \ldots, Z_{ik} \right)^\top$, where $Z_{ij} = 1$ if $\left(\bx_i^\top,\by_i^\top\right)^\top$ originates from the $j$th mixture component, and $Z_{ij} = 0$ otherwise.
Therefore, the complete data become $\left(\bx_1^\top,\by_1^\top,\bz_1^\top\right)^\top,\ldots,\left(\bx_n^\top,\by_n^\top,\bz_n^\top\right)^\top$.
When MAR values are introduced, the complete-data log-likelihood function that the EM algorithm iterates on becomes
\begin{align*}
    l_c (\bvartheta) = l_{c1}(\bpi)  + l_{c2}(\bvartheta_\bX)
    + l_{c3}(\bvartheta_\bY),
\end{align*}
with $\bvartheta_\bX = ( \bvartheta_{\bX \mid 1}, \dots, \bvartheta_{\bX \mid k})$ and $\bvartheta_\bY = ( \bvartheta_{\bY \mid 1}, \dots, \bvartheta_{\bY \mid k})$  where 
\begin{align*}
    l_{c1}(\bpi)&= \sum_{i = 1}^n \sum_{j = 1}^k z_{ij} \log (\pi_j),\\[1ex]
    l_{c2}(\bvartheta_\bX)&= -\dfrac{1}{2} \sum_{i = 1}^n \sum_{j = 1}^k z_{ij} \left[ d_\bX \log (2 \pi) + \log \lvert \bSigma_{\bX \mid j} \rvert + \left( \begin{bmatrix} \bx^o_i \\[1ex] \bx^m_i \end{bmatrix} - \begin{bmatrix} {\bmu}^o_{\bX \mid j} \\ \bmu^m_{\bX \mid j} \end{bmatrix} \right)^\top \bSigma^{-1}_{\bX \mid j} \left( \begin{bmatrix} \bx^o_i \\[1ex] \bx^m_i \end{bmatrix} - \begin{bmatrix} {\bmu}^o_{\bX \mid j} \\ \bmu^m_{\bX \mid j} \end{bmatrix} \right) \right], \\[1ex]
    l_{c3}(\bvartheta_\bY)  &= -\dfrac{1}{2}  \sum_{i = 1}^n \sum_{j = 1}^k z_{ij} \left[ d_\bY \log (2 \pi) + \log \lvert \bSigma_{\bY \mid j} \rvert + \left( \begin{bmatrix} \by^o_i \\[1ex] \by^m_i \end{bmatrix} - \bbeta^\top_j \begin{bmatrix} 1 \\ \bx^o_i \\[1ex] \bx^m_i \end{bmatrix} \right)^\top \bSigma^{-1}_{\bY \mid j} \left( \begin{bmatrix} \by^o_i \\[1ex] \by^m_i \end{bmatrix} - \bbeta^\top_j \begin{bmatrix} 1 \\ \bx^o_i \\[1ex] \bx^m_i \end{bmatrix} \right) \right].\\
\end{align*}
Details on the steps of the EM algorithm are given below.
\begin{itemize}
\item \textbf{E-Step:} Compute the expectation $Q \left( \bvartheta ; \dot{\bvartheta} \right) = E \left[ \, l_c \left(\bvartheta \right) \mid \bx^o_1, \ldots, \bx^o_n, \by^o_1, \ldots, \by^o_n , \dot{\bvartheta} \right]$, where $l_c \left( \bvartheta\right)$ is the complete-data log-likelihood function which is now based on $\{ \bZ_i, \bX^o_i, \bX^m_i, \bY^o_i, \bY^m_i \}_{i = 1}^n$ and $\dot{\bvartheta} $ is the estimate of $\bvartheta$ at the previous iteration, being $\bvartheta$ the vector containing all the model parameters.    
\item \textbf{M-Step:} Update the current parameters $\dot{\bvartheta}$ with the new parameters $\ddot{\bvartheta}$ that maximize the expectation $Q \left( \bvartheta ; \dot{\bvartheta} \right)$.
\end{itemize}

The E-step involves the following expectations
\begin{align*}
       \ddot{\tilde{z}}_{ij} & = E \left( Z_{ij} \mid \bx^o_i, \by^o_i, \dot{\bvartheta} \right), \\
    \ddot{\tilde{\bx}}_{ij} &= E \left( \bX^m_i \mid \bx^o_i, \by^o_i, Z_{ij} = 1, \dot{\bvartheta} \right), \quad \ddot{\tilde{\tilde{\bx}}}_{ij} = E \left( \bX^m_i \bX^{m \top}_i \mid \bx^o_i, \by^o_i, Z_{ij} = 1, \dot{\bvartheta} \right), \\
    \ddot{\tilde{\by}}_{ij} &= E \left( \bY^m_i \mid \bx^o_i, \by^o_i, Z_{ij} = 1, \dot{\bvartheta} \right), \quad \ddot{ \tilde{\tilde{\by}}}_{ij} = E \left( \bY^m_i \bY^{m \top}_i \mid \bx^o_i, \by^o_i, Z_{ij} = 1, \dot{\bvartheta} \right) .
\end{align*}
In the next propositions, we derive closed-form solutions for the above expectations

\begin{proposition} The conditional expectation of $Z_{ij}$ given observed covariates $\bx^o_i$, observed responses $\by^o_i$, and current parameter estimates $\dot{\bvartheta}$ is obtained by
    \begin{align}
       \ddot{\tilde{z}}_{ij} & = E \left( Z_{ij} \mid \bx^o_i, \by^o_i, \dot{\bvartheta} \right) = \dfrac{\pi_j \phi \left( \begin{bmatrix} \bx^o_i \\[1ex] \by^o_i \end{bmatrix}; \begin{bmatrix} \dot{\bmu}^o_{\bX \mid j} \\[1ex] \dot{\tilde{\bmu}}^o_{\bY \mid j} \end{bmatrix}, \begin{bmatrix}
\dot{\bSigma}^{oo}_{\bX \bX \mid j} & \dot{\bSigma}^{oo}_{\bX \bY \mid j}  \\[1ex]
\dot{\bSigma}^{oo}_{\bY \bX \mid j} & \dot{\bSigma}^{oo}_{\bY \bY \mid j}
\end{bmatrix} \right)}{\displaystyle\sum_{j' = 1}^k \pi_{j'} \phi \left( \begin{bmatrix} \bx^o_i \\[1ex] \by^o_i \end{bmatrix}; \begin{bmatrix} \dot{\bmu}^o_{\bX \mid j'} \\[1ex] \dot{\tilde{\bmu}}^o_{\bY \mid j'} \end{bmatrix}, \begin{bmatrix}
\dot{\bSigma}^{oo}_{\bX \bX \mid j'} & \dot{\bSigma}^{oo}_{\bX \bY \mid j'}  \\[1ex]
\dot{\bSigma}^{oo}_{\bY \bX \mid j'} & \dot{\bSigma}^{oo}_{\bY \bY \mid j'}
\end{bmatrix} \right)}.     \label{eq:z_tilde}
     \end{align}
\end{proposition}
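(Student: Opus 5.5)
Since $Z_{ij}$ is binary, $E(Z_{ij}\mid \bx^o_i,\by^o_i,\dot{\bvartheta}) = P(Z_{ij}=1\mid \bx^o_i,\by^o_i,\dot{\bvartheta})$. The plan is therefore to compute this posterior probability with Bayes' rule. The prior is $P(Z_{ij}=1)=\pi_j$, and the likelihood factor is the density of the observed sub-vector $(\bX^{o\top}_i,\bY^{o\top}_i)^\top$ conditional on $Z_{ij}=1$:
\begin{align*}
P(Z_{ij}=1\mid \bx^o_i,\by^o_i,\dot{\bvartheta}) = \dfrac{\pi_j\, p(\bx^o_i,\by^o_i\mid Z_{ij}=1;\dot{\bvartheta})}{\sum_{j'=1}^k \pi_{j'}\, p(\bx^o_i,\by^o_i\mid Z_{ij'}=1;\dot{\bvartheta})}.
\end{align*}
The denominator is the marginal density of the observed data under the mixture~\eqref{eq:G-MRRC}, obtained by summing over components. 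We may do this because the mixture is a finite sum, so integrating out the missing parts can be done term by term.

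The second step is to identify $p(\bx^o_i,\by^o_i\mid Z_{ij}=1;\dot{\bvartheta})$ explicitly. Conditional on $Z_{ij}=1$, the vector $(\bX_i^\top,\bY_i^\top)^\top$ has the $j$th component density $\phi(\by;\bmu_{\bY\mid j}(\bx;\bbeta_j),\bSigma_{\bY\mid j})\,\phi(\bx;\bmu_{\bX\mid j},\bSigma_{\bX\mid j})$. Applying the argument of Section~\ref{sec:backgraund} componentwise, this is the $d$-variate Gaussian~\eqref{eq:joint_X1_X2_Y1_Y2} with parameters $\bmu_{\bX\mid j}$, $\tilde{\bmu}_{\bY\mid j}=\bb_{0j}+\bB_j\bmu_{\bX\mid j}$ and the corresponding blocks $\bSigma_{\cdot\cdot\mid j}$. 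We then set $\bX_1=\bX^o_i$, $\bX_2=\bX^m_i$, $\bY_1=\bY^o_i$, $\bY_2=\bY^m_i$ and invoke Proposition~\ref{prop:joint_X1_Y1}. This gives the marginal density of $(\bX^o_i,\bY^o_i)$ as Gaussian with mean $(\dot{\bmu}^{o\top}_{\bX\mid j},\dot{\tilde{\bmu}}^{o\top}_{\bY\mid j})^\top$ and covariance equal to the $oo$ blocks, evaluated at $\dot{\bvartheta}$. Substituting into the Bayes formula yields exactly~\eqref{eq:z_tilde}.

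The main point needing care is justifying the marginalization under MAR. The E-step conditions only on observed values, and we need the missing-data indicators to be ignorable. Under MAR, the missingness mechanism factors out of both the numerator and the denominator and cancels, so the observed-data density is just the integral of the complete-data density over $(\bx^m_i,\by^m_i)$. For each component this integral is precisely the Gaussian marginalization of Proposition~\ref{prop:joint_X1_Y1}. A further bookkeeping point is that the observed/missing partition depends on $i$. Proposition~\ref{prop:joint_X1_Y1} holds for any partition after permuting coordinates, so it applies to every observation with that observation's own pattern. Finally, we note that $\pi_j$ in~\eqref{eq:z_tilde} should be read as the current estimate $\dot{\pi}_j$.
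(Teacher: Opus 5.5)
Your proposal is correct and follows essentially the same route as the paper. It identifies $E(Z_{ij}\mid \bx^o_i,\by^o_i,\dot{\bvartheta})$ with $P(Z_{ij}=1\mid \bx^o_i,\by^o_i,\dot{\bvartheta})$, takes the component-wise Gaussian density of the observed sub-vector from Proposition~\ref{prop:joint_X1_Y1} with $\bX_1=\bX^o_i$ and $\bY_1=\bY^o_i$, and applies Bayes' theorem; your remarks on ignorability under MAR, the observation-specific missingness pattern, and reading $\pi_j$ as $\dot{\pi}_j$ spell out points the paper leaves implicit.
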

\begin{proof}
Using Proposition \ref{prop:joint_X1_Y1} 
with $\bX_1=\bX^o_i$, and $\bY_1=\bY^o_i$, we have the joint distribution
\begin{align*}
    \begin{bmatrix} \bX^o_i \\ \bY^o_i \end{bmatrix} \, \Bigg| \, Z_{ij} = 1 \sim  \mathcal{G} \left( \begin{bmatrix} \bmu^o_{\bX \mid j} \\ \tilde{\bmu}^o_{\bY \mid j} \end{bmatrix}, \begin{bmatrix}
\bSigma^{oo}_{\bX \bX \mid j} & \bSigma^{oo}_{\bX \bY \mid j}  \\
\bSigma^{oo}_{\bY \bX \mid j} & \bSigma^{oo}_{\bY \bY \mid j}
\end{bmatrix} \right).
\end{align*}
Since $E \left( Z_{ij} \mid \bx^o_i, \by^o_i, \dot{\bvartheta} \right) = P \left( Z_{ij} = 1 \mid \bx^o_i, \by^o_i, \dot{\bvartheta} \right)$, using Bayes' theorem, we obtain
\begin{align*}
   E \left( Z_{ij} \mid \bx^o_i, \by^o_i, \dot{\bvartheta} \right) &= \dfrac{P (Z_{ij} = 1; \dot{\bvartheta}) f \left( \bx^o_i, \by^o_i \mid Z_{ij} = 1, \dot{\bvartheta} \right)}{\displaystyle\sum_{j' = 1}^k P (Z_{ij'} = 1; \dot{\bvartheta}) f \left( \bx^o_i, \by^o_i \mid Z_{ij'} = 1, \dot{\bvartheta} \right)} \\
   &= \dfrac{\pi_j \phi \left( \begin{bmatrix} \bx^o_i \\ \by^o_i \end{bmatrix}; \begin{bmatrix} \bmu^o_{\bX \mid j} \\ \tilde{\bmu}^o_{\bY \mid j} \end{bmatrix}, \begin{bmatrix}
\bSigma^{oo}_{\bX \bX \mid j} & \bSigma^{oo}_{\bX \bY \mid j}  \\
\bSigma^{oo}_{\bY \bX \mid j} & \bSigma^{oo}_{\bY \bY \mid j}
\end{bmatrix} \right)}{\displaystyle\sum_{j' = 1}^k \pi_{j'} \phi \left( \begin{bmatrix} \bx^o_i \\ \by^o_i \end{bmatrix}; \begin{bmatrix} \bmu^o_{\bX \mid j'} \\ \tilde{\bmu}^o_{\bY \mid j'} \end{bmatrix}, \begin{bmatrix}
\bSigma^{oo}_{\bX \bX \mid j'} & \bSigma^{oo}_{\bX \bY \mid j'}  \\
\bSigma^{oo}_{\bY \bX \mid j'} & \bSigma^{oo}_{\bY \bY \mid j'}
\end{bmatrix} \right)}.
\end{align*}
\end{proof}

Using the same reasoning as in Propositions  \ref{prop:xtildeR} and \ref{prop:ytildeR} it can be shown that
\begin{align}
    \ddot{\tilde{\bx}}_{ij} &= E \left( \bX^m_i \mid \bx^o_i, \by^o_i, Z_{ij} = 1, \dot{\bvartheta} \right) \nonumber \\ 
    &= \dot{\bmu}^m_{\bX \mid j} + \begin{bmatrix} \dot{\bSigma}^{mo}_{\bX \bX \mid j} & \dot{\bSigma}^{mo}_{\bX \bY \mid j} \end{bmatrix} \begin{bmatrix} \dot{\bSigma}^{oo}_{\bX \bX \mid j} & \dot{\bSigma}^{oo}_{\bX \bY \mid j} \\[1ex] \dot{\bSigma}^{oo}_{\bY \bX \mid j} & \dot{\bSigma}^{oo}_{\bY \bY \mid j} \end{bmatrix}^{-1} \left( \begin{bmatrix} \bx^o_i \\[1ex] \by^o_i \end{bmatrix} - \begin{bmatrix} \dot{\bmu}^o_{\bX \mid j} \\[1ex] \dot{\tilde{\bmu}}^o_{\bY \mid j} \end{bmatrix} \right),  \label{eq:x_tilde} \\[1.5ex]
    \ddot{\tilde{\tilde{\bx}}}_{ij} &= E \left( \bX^m_i \bX^{m \top}_i \mid \bx^o_i, \by^o_i, Z_{ij} = 1, \dot{\bvartheta} \right) \nonumber \\ 
    &= \dot{\bSigma}^{mm}_{\bX \bX \mid j} - \begin{bmatrix} \dot{\bSigma}^{mo}_{\bX \bX \mid j} & \dot{\bSigma}^{mo}_{\bX \bY \mid j} \end{bmatrix} \begin{bmatrix} \dot{\bSigma}^{oo}_{\bX \bX \mid j} & \dot{\bSigma}^{oo}_{\bX \bY \mid j} \\[1ex] \dot{\bSigma}^{oo}_{\bY \bX \mid j} & \dot{\bSigma}^{oo}_{\bY \bY \mid j} \end{bmatrix}^{-1} \begin{bmatrix} \dot{\bSigma}^{om}_{\bX \bX \mid j} \\[1ex] \dot{\bSigma}^{om}_{\bY \bX \mid j} \end{bmatrix} + \ddot{\tilde{\bx}}_{ij} \ddot{\tilde{\bx}}^\top_{ij}, \label{eq:xx_tilde} \\[1.5ex] 
    \ddot{\tilde{\by}}_{ij} &= E \left( \bY^m_i \mid \bx^o_i, \by^o_i, Z_{ij} = 1, \dot{\bvartheta} \right) \nonumber \\
    &= \dot{\tilde{\bmu}}^m_{\bY \mid j} + \begin{bmatrix} \dot{\bSigma}^{mo}_{\bY \bX \mid j} & \dot{\bSigma}^{mo}_{\bY \bY \mid j} \end{bmatrix} \begin{bmatrix} \dot{\bSigma}^{oo}_{\bX \bX \mid j} & \dot{\bSigma}^{oo}_{\bX \bY \mid j} \\[1ex] \dot{\bSigma}^{oo}_{\bY \bX \mid j} & \dot{\bSigma}^{oo}_{\bY \bY \mid j} \end{bmatrix}^{-1} \left( \begin{bmatrix} \bx^o_i \\[1ex] \by^o_i \end{bmatrix} - \begin{bmatrix} \dot{\bmu}^o_{\bX \mid j} \\[1ex] \dot{\tilde{\bmu}}^o_{\bY \mid j} \end{bmatrix} \right), \label{eq:y_tilde} \\[1.5ex]
   \ddot{ \tilde{\tilde{\by}}}_{ij} &= E \left( \bY^m_i \bY^{m \top}_i \mid \bx^o_i, \by^o_i, Z_{ij} = 1, \dot{\bvartheta} \right) \nonumber \\
    &= \dot{\bSigma}^{mm}_{\bY \bY \mid j} - \begin{bmatrix} \dot{\bSigma}^{mo}_{\bY \bX \mid j} & \dot{\bSigma}^{mo}_{\bY \bY \mid j} \end{bmatrix} \begin{bmatrix} \dot{\bSigma}^{oo}_{\bX \bX \mid j} & \dot{\bSigma}^{oo}_{\bX \bY \mid j} \\[1ex] \dot{\bSigma}^{oo}_{\bY \bX \mid j} & \dot{\bSigma}^{oo}_{\bY \bY \mid j} \end{bmatrix}^{-1} \begin{bmatrix} \dot{\bSigma}^{om}_{\bX \bY \mid j} \\[1ex] \dot{\bSigma}^{om}_{\bY \bY \mid j} \end{bmatrix} + \ddot{\tilde{\by}}_{ij} \ddot{\tilde{\by}}^\top_{ij}. \label{eq:yy_tilde}
\end{align}

To prepare for the M-step, the following covariance terms are also computed
\begin{align}
    \ddot{\bOmega}_{\bX \mid ij} &= \text{Cov} \left( \begin{bmatrix} \bx^o_i \\[1ex] \bX^m_i \end{bmatrix} \, \middle| \, \bx^o_i, \by^o_i, Z_{ij} = 1, \dot{\bvartheta} \right) = \begin{bmatrix} \boldsymbol{0}_{d^o_\bX \times d^o_\bX} & \boldsymbol{0}_{d^o_\bX \times d^m_\bX}  \\[1ex]  \boldsymbol{0}_{d^m_\bX \times d^o_\bX} & \ddot{\tilde{\tilde{\bx}}}_{ij} - \ddot{\tilde{\bx}}_{ij} \ddot{\tilde{\bx}}^\top_{ij} \end{bmatrix}, \label{eq:omega_X} \\[1.5ex]
    \ddot{\bOmega}^\ast_{\bX \mid ij} &= \text{Cov} \left( \begin{bmatrix} 1 \\ \bx^o_i \\[1ex] \bX^m_i \end{bmatrix} \, \middle| \, \bx^o_i, \by^o_i, Z_{ij} = 1, \dot{\bvartheta} \right) = \begin{bmatrix} 0 & \boldsymbol{0}_{1 \times d^o_\bX} & \boldsymbol{0}_{1 \times d^m_\bX} \\[1ex] \boldsymbol{0}_{d^o_\bX \times 1} &  \boldsymbol{0}_{d^o_\bX \times d^o_\bX} & \boldsymbol{0}_{d^o_\bX \times d^m_\bX} \\[1ex]  \boldsymbol{0}_{d^m_\bX \times 1} & \boldsymbol{0}_{d^m_\bX \times d^o_\bX} & \ddot{\tilde{\bx}}_{ij} - \ddot{\tilde{\bx}}_{ij} \ddot{\tilde{\bx}}^\top_{ij} \end{bmatrix} = \begin{bmatrix} 0 & \boldsymbol{0}_{1 \times d_\bX} \\[1ex] \boldsymbol{0}_{d_\bX \times 1} &  \ddot{\bOmega}_{\bX \mid ij} \end{bmatrix}, \label{eq:omega_X_star} \\[1.5ex]
    \ddot{\bOmega}_{\bY \mid ij} &= \text{Cov} \left( \begin{bmatrix} \by^o_i \\[1ex] \bY^m_i \end{bmatrix} \, \middle| \, \bx^o_i, \by^o_i, Z_{ij} = 1, \dot{\bvartheta} \right) = \begin{bmatrix} \boldsymbol{0}_{d^o_\bX \times d^o_\bX} & \boldsymbol{0}_{d^o_\bY \times d^m_\bY} \\[2ex]  \boldsymbol{0}_{d^m_\bY \times d^o_\bY} &\ddot{ \tilde{\tilde{\by}}}_{ij} - \ddot{\tilde{\by}}_{ij} \ddot{\tilde{\by}}^\top_{ij} \end{bmatrix}, \label{eq:omega_Y} \\[1.5ex]
    \ddot{\bDelta}_{ij} &= \text{Cov}\left( \begin{bmatrix} \by^o_i \\[1ex] \bY^m_i \end{bmatrix}, \begin{bmatrix} 1 \\ \bx^o_i \\[1ex] \bX^m_i \end{bmatrix} \, \middle| \, \bx^o_i, \by^o_i, Z_{ij} = 1, \dot{\bvartheta} \right) = \begin{bmatrix}
\boldsymbol{0}_{d^o_\bY \times 1} & \boldsymbol{0}_{d^o_\bY \times d^o_\bX} & \boldsymbol{0}_{d^o_\bY \times d^m_\bX}  \\[1ex]
\boldsymbol{0}_{d^m_\bY \times 1} & \boldsymbol{0}_{d^m_\bY \times d^o_\bX} & \ddot{\tilde{\bSigma}}^{mm}_{\bY \bX \mid ij}
\end{bmatrix}, \label{eq:delta_ij} \\[1.5ex]
    \ddot{\tilde{\bSigma}}^{mm}_{\bY \bX \mid ij} &= \begin{bmatrix} \dot{\bSigma}^{mm}_{\bY \bY \mid j} & \dot{\bSigma}^{mm}_{\bY \bX \mid j} \\[1ex] \dot{\bSigma}^{mm}_{\bX \bY \mid j} & \dot{\bSigma}^{mm}_{\bX \bX \mid j} \end{bmatrix} - \begin{bmatrix} \dot{\bSigma}^{mo}_{\bY \bY \mid j} & \dot{\bSigma}^{mo}_{\bY \bX \mid j} \\[1ex]
\dot{\bSigma}^{mo}_{\bX \bY \mid j} & \dot{\bSigma}^{mo}_{\bX \bX \mid j} \end{bmatrix} \begin{bmatrix} \dot{\bSigma}^{oo}_{\bY \bY \mid j} & \dot{\bSigma}^{oo}_{\bY \bX \mid j} \\[1ex] \dot{\bSigma}^{oo}_{\bX \bY \mid j} & \dot{\bSigma}^{oo}_{\bX \bX \mid j} \end{bmatrix}^{-1} \begin{bmatrix} \dot{\bSigma}^{om}_{\bY \bY \mid j} & \dot{\bSigma}^{om}_{\bY \bX \mid j} \\[1ex] \dot{\bSigma}^{om}_{\bX \bY \mid j} & \dot{\bSigma}^{om}_{\bX \bX \mid j} \end{bmatrix}. \label{eq:Sigma_mm_YX}
\end{align}
where $\ddot{\tilde{\bSigma}}^{mm}_{\bY \bX \mid ij} = \text{Cov} \left( \bY^m_i, \bX^m_i \, \middle| \, \bx^o_i, \by^o_i, Z_{ij} = 1, \dot{\bvartheta} \right)$. The value of $\ddot{\tilde{\bSigma}}_{\bY \bX \mid ij}$ can be obtained using the following proposition.

\begin{proposition}
Given observed covariates $\bx^o_i$, observed responses $\by^o_i$, and $Z_{ij} = 1$, the covariance between missing responses $\bY^m_i$ and missing covariates $\bX^m_i$ is given by
\begin{align*}
    \ddot{\tilde{\bSigma}}^{mm}_{\bY \bX \mid ij} &= \text{Cov} \left( \bY^m_i, \bX^m_i \, \middle| \, \bx^o_i, \by^o_i, Z_{ij} = 1, \dot{\bvartheta} \right) \\
    &= \begin{bmatrix} \dot{\bSigma}^{mm}_{\bY \bY \mid j} & \dot{\bSigma}^{mm}_{\bY \bX \mid j} \\[1ex] \dot{\bSigma}^{mm}_{\bX \bY \mid j} & \dot{\bSigma}^{mm}_{\bX \bX \mid j} \end{bmatrix} - \begin{bmatrix} \dot{\bSigma}^{mo}_{\bY \bY \mid j} & \dot{\bSigma}^{mo}_{\bY \bX \mid j} \\[1ex]
\dot{\bSigma}^{mo}_{\bX \bY \mid j} & \dot{\bSigma}^{mo}_{\bX \bX \mid j} \end{bmatrix} \begin{bmatrix} \dot{\bSigma}^{oo}_{\bY \bY \mid j} & \dot{\bSigma}^{oo}_{\bY \bX \mid j} \\[1ex] \dot{\bSigma}^{oo}_{\bX \bY \mid j} & \dot{\bSigma}^{oo}_{\bX \bX \mid j} \end{bmatrix}^{-1} \begin{bmatrix} \dot{\bSigma}^{om}_{\bY \bY \mid j} & \dot{\bSigma}^{om}_{\bY \bX \mid j} \\[1ex] \dot{\bSigma}^{om}_{\bX \bY \mid j} & \dot{\bSigma}^{om}_{\bX \bX \mid j} \end{bmatrix}.
\end{align*}
\end{proposition}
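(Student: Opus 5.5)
The plan is to follow the proof of Proposition~\ref{prop:cov} closely, with one change: condition on the event $Z_{ij}=1$ throughout. Under the MRRC model~\eqref{eq:G-MRRC}, given $Z_{ij}=1$, the pair $(\bX_i^\top,\bY_i^\top)^\top$ has density $\phi(\by;\bmu_{\bY\mid j}(\bx;\bbeta_j),\bSigma_{\bY\mid j})\,\phi(\bx;\bmu_{\bX\mid j},\bSigma_{\bX\mid j})$. This is exactly the single-component model~\eqref{eq:jointreg} with the $j$th component's parameters. By the argument from \citet[Ch.~2.3]{bishop2006pattern} used in Section~\ref{sec:backgraund}, this vector is jointly Gaussian. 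Its mean is $(\bmu_{\bX\mid j}^\top,\tilde{\bmu}_{\bY\mid j}^\top)^\top$ with $\tilde{\bmu}_{\bY\mid j}=\bb_{0j}+\bB_j\bmu_{\bX\mid j}$. Its covariance blocks are $\bSigma_{\bX\bX\mid j}=\bSigma_{\bX\mid j}$, $\bSigma_{\bY\bX\mid j}=\bB_j\bSigma_{\bX\mid j}$, and $\bSigma_{\bY\bY\mid j}=\bB_j\bSigma_{\bX\mid j}\bB_j^\top+\bSigma_{\bY\mid j}$, all evaluated at $\dot{\bvartheta}$.

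Next, I would partition each observation according to its own missingness pattern. Set $\bX_1=\bX_i^o$, $\bX_2=\bX_i^m$, $\bY_1=\bY_i^o$, $\bY_2=\bY_i^m$, and write the component-$j$ analogue of~\eqref{eq:joint_X1_X2_Y1_Y2}, with blocks labelled by the $o/m$ superscripts. Because the blocks are merely permuted, the vector $(\bY_i^{m\top},\bX_i^{m\top},\bY_i^{o\top},\bX_i^{o\top})^\top$ is still Gaussian given $Z_{ij}=1$. Its covariance is obtained by reordering the rows and columns of the component-$j$ covariance matrix, so positive definiteness of the observed block is inherited. This is needed for the inverse in the statement to exist.

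Then apply Proposition~\ref{prop:cond_Y2_X2_X1_Y1}, that is, Result~4.6 of \citet{johnson_applied_2007}, conditionally on $Z_{ij}=1$. The conditional law of $(\bY_i^m,\bX_i^m)$ given $\bx_i^o,\by_i^o$ and $Z_{ij}=1$ is Gaussian. Its covariance is the Schur complement: the $mm$ block, minus the $mo$ block times the inverse of the $oo$ block times the $om$ block, with every block carrying the subscript $j$ and evaluated at $\dot{\bvartheta}$. Reading off this matrix gives the displayed expression for $\ddot{\tilde{\bSigma}}^{mm}_{\bY\bX\mid ij}$, mirroring Proposition~\ref{prop:cov} with $\bSigma^{\ast\ast\ast}_i$ replaced by its component-specific version. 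Formally, the quantity defined as $\mathrm{Cov}(\bY_i^m,\bX_i^m\mid\cdot)$ in the statement is this whole joint conditional covariance, whose off-diagonal block is the cross-covariance that enters $\ddot{\bDelta}_{ij}$ in~\eqref{eq:delta_ij}. I would state this explicitly for consistency with Proposition~\ref{prop:cov}.

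The only step needing care is the conditioning on $Z_{ij}=1$. One must check that the within-component model is a genuine Gaussian reparameterization, so that the earlier propositions apply verbatim with component-indexed parameters. One must also check that conditioning on the latent label commutes with taking the Gaussian marginals and conditionals; it does, since given $Z_{ij}=1$ the mixture collapses to a single component. The remaining bookkeeping is consistent ordering of the $o/m$ blocks (the $\bY$-first ordering in the statement versus the $\bX$-first ordering in~\eqref{eq:joint_X1_X2_Y1_Y2}), which is routine.
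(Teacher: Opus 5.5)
Your proposal is correct and takes the same route as the paper, which proves this by analogy with Proposition~\ref{prop:cov}: apply the Gaussian conditioning result (Proposition~\ref{prop:cond_Y2_X2_X1_Y1}) with $\bX_1=\bX^o_i$ and $\bY_1=\bY^o_i$ to the component-$j$ joint Gaussian, and read off the Schur complement. Your extra checks are sound refinements of that same argument: conditioning on $Z_{ij}=1$ collapses the mixture to one Gaussian component, the $oo$ block is invertible, the block orderings are consistent, and the displayed matrix is the full joint conditional covariance of $(\bY^m_i,\bX^m_i)$ rather than only its cross block.
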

\begin{proof}
The proof is analogous to the one for Proposition \ref{prop:cov}.
\end{proof}

The M-step involves the update of the parameters: $ {\pi_j}, {\bmu}_{\bX \mid j}, {\bSigma}_{\bX \mid j}, {\bbeta}_j$ and ${\bSigma}_{\bY \mid j}$, with  $j=1, \ldots, k$. Specifically,
\begin{align}
    \ddot{\pi}_j &= \dfrac{1}{n}\displaystyle\sum_{i = 1}^n \ddot{\tilde{z}}_{ij}, \label{eq:pi_j} \\[2ex] 
    \ddot{\bmu}_{\bX \mid j} &= \dfrac{\displaystyle\sum_{i = 1}^n \ddot{\tilde{z}}_{ij} \begin{bmatrix} \bx^o_i \\[1ex] \ddot{\tilde{\bx}}_{ij} \end{bmatrix} }{\displaystyle\sum_{i = 1}^n \ddot{\tilde{z}}_{ij}}, \label{eq:mu_X_j} \\[2ex]
    \ddot{\bSigma}_{\bX \mid j} &= \dfrac{\displaystyle\sum_{i = 1}^n \ddot{\tilde{z}}_{ij} \left[ \ddot{\bOmega}_{\bX \mid ij}  + \left( \begin{bmatrix} \bx^o_i \\[1ex] \ddot{\tilde{\bx}}_{ij} \end{bmatrix} - \ddot{\bmu}_{\bX \mid j} \right) \left( \begin{bmatrix} \bx^o_i \\[1ex] \ddot{\tilde{\bx}}_{ij} \end{bmatrix} - \ddot{\bmu}_{\bX \mid j} \right)^\top \right]}{\displaystyle\sum_{i = 1}^n \ddot{\tilde{z}}_{ij}}, \label{eq:Sigma_X_j} \\[2ex]
    \ddot{\bbeta}_j &= \left( \displaystyle\sum_{i = 1}^n \ddot{\tilde{z}}_{ij} \begin{bmatrix} 1 \\ \bx^o_i \\[1ex] \ddot{\tilde{\bx}}_{ij} \end{bmatrix} \begin{bmatrix} 1 & \bx^o_i & \ddot{\tilde{\bx}}_{ij} \end{bmatrix} + \displaystyle\sum_{i = 1}^n \ddot{\tilde{z}}_{ij} \ddot{\bOmega}^\ast_{\bX \mid ij} \right)^{-1} \left( \sum_{i = 1}^n \ddot{\tilde{z}}_{ij} \begin{bmatrix} 1 \\ \bx^o_i \\[1ex] \ddot{\tilde{\bx}}_{ij} \end{bmatrix} \begin{bmatrix} \by^o_i & \ddot{\tilde{\by}}_{ij} \end{bmatrix} + \displaystyle\sum_{i = 1}^n \ddot{\tilde{z}}_{ij} \ddot{\bDelta}^\top_{ij} \right), \label{eq:beta_j} \\[2ex]
    \ddot{\bSigma}_{\bY \mid j} &= \dfrac{\displaystyle\sum_{i = 1}^n \ddot{\tilde{z}}_{ij} \left[ \ddot{\bSigma}^\ast_{\bY \mid ij} + \left( \begin{bmatrix} \by^o_i \\[1ex] \ddot{\tilde{\by}}_{ij} \end{bmatrix} - \ddot{\bbeta}^\top_j \begin{bmatrix} 1 \\ \bx^o_i \\[1ex] \ddot{\tilde{\bx}}_{ij} \end{bmatrix} \right) \left( \begin{bmatrix} \by^o_i \\[1ex] \ddot{\tilde{\by}}_{ij} \end{bmatrix} - \ddot{\bbeta}^\top_j \begin{bmatrix} 1 \\ \bx^o_i \\[1ex] \ddot{\tilde{\bx}}_{ij} \end{bmatrix} \right)^\top \right]}{\displaystyle\sum_{i = 1}^n \ddot{\tilde{z}}_{ij}}, \label{eq:Sigma_Y_j}
\end{align}
where $\ddot{\bSigma}^\ast_{\bY \mid ij} = \left( \ddot{\bOmega}_{\bY \mid ij} + \ddot{\bDelta}_{ij} {\ddot{\bbeta}}_j + \ddot{\bbeta}^\top_j \ddot{\bDelta}^\top_{ij} + \ddot{\bbeta}^\top_j \ddot{\bOmega}^\ast_{\bX \mid ij} \ddot{\bbeta}_j \right)^\top$.
One important note about these results is that all quantities in the E-step and the parameter updates in the M-step are available in closed form, enabling faster computations without any numerical procedures. Parameters can be initialized by applying global mean imputation on the original dataset and obtaining initial group memberships from manual specification or a heuristic clustering procedure. On the now complete dataset, using the obtained membership the $M$-step given in  \citet{dang2017multivariate} is used to obtain the initial parameters.
Algorithm convergence is measured using the Aitken acceleration \citep{aitken_series_1926}; see \citet{mcnicholas_mixture_2020} for further details on the use of the Aitken acceleration in cluster analysis. Algorithm~\ref{alg:EM_MRRC} summarizes the E-step and M-step for fitting the MRRC with missing values to incomplete datasets.

\begin{algorithm}[!t]
\caption{The EM Algorithm for MRRC  with missing values}
\label{alg:EM_MRRC}
\begin{algorithmic}[1]
\State Initialize $ \dot{\bvartheta}$
\Repeat

  \State \textbf{E-step:} For $j = 1, \ldots, k$:
  \State \quad Calculate $\ddot{\tilde{z}}_{ij}$ according to~\eqref{eq:z_tilde} 
  \State \quad Calculate $\ddot{\tilde{\bx}}_{ij}$ and $\ddot{\tilde{\tilde{\bx}}}_{ij}$ according to~\eqref{eq:x_tilde} and ~\eqref{eq:xx_tilde}, respectively
  \State \quad Calculate $\ddot{\tilde{\by}}_{ij}$ and $\ddot{ \tilde{\tilde{\by}}}_{ij}$ according to ~\eqref{eq:y_tilde} and ~\eqref{eq:yy_tilde}, respectively 
  \State \quad Calculate $\ddot{\bOmega}_{\bX \mid ij}$, $\ddot{\bOmega}^\ast_{\bX \mid ij}$, and $\ddot{\bOmega}_{\bY \mid ij}$ according to ~\eqref{eq:omega_X}, ~\eqref{eq:omega_X_star}, and ~\eqref{eq:omega_Y}, respectively
  \State \quad Calculate $\ddot{\bDelta}_{ij}$ and $\ddot{\tilde{\bSigma}}^{mm}_{\bY \bX \mid ij}$ according to~\eqref{eq:delta_ij} and ~\eqref{eq:Sigma_mm_YX}, respectively

  \State \textbf{M-step:} For $j = 1, \ldots, k$:
  \State \quad Update $\ddot{\pi}_j$ according to~\eqref{eq:pi_j}
  \State \quad Update $\ddot{\bmu}_{\bX \mid j}$ and $\ddot{\bSigma}_{\bX \mid j}$ according to~\eqref{eq:mu_X_j} and ~\eqref{eq:Sigma_X_j}, respectively
  \State \quad Update $\ddot{\bbeta}_j$ and $\ddot{\bSigma}_{\bY \mid j}$ according to~\eqref{eq:beta_j} and ~\eqref{eq:Sigma_Y_j}, respectively

\Until{the Aitken's convergence criterion is met or a maximum number of iterations is reached.}

\end{algorithmic}
\end{algorithm}

\subsection{Further aspects}
This section discusses the issues of identifiability and scalability of the proposed model.
The general conditions for the identifiability of a mixture of Gaussian regressions with random covariates have been established by \citet{dang2017multivariate}. When the data contain missing values, a necessary and sufficient condition for the missing-data mechanism to be ignorable in clustering is that the probabilities of missingness are equal across mixture components \citep{sportisse2024model}. By definition, under a MAR mechanism, the probability of missingness depends on the observed data but remains independent of the unobserved data, thereby satisfying the required condition for being ignorable.
Identifiability of the model parameters further requires that (i) the parameters of the marginal mixture distribution are identifiable, and (ii) a total ordering of the mixture densities holds \citep{sportisse2024model}, where the marginal mixture is the model without missing values. Since the marginal mixture is a mixture of Gaussian distributions with random covariates, it is identifiable given conditions in \citet{dang2017multivariate}, which includes condition (ii); thus, the identifiability conditions reduce to those in \citet{dang2017multivariate}. 

The computational scalability of the proposed model is equivalent to that of a Gaussian mixture model. In the Gaussian mixture model, the dominant computational cost arises from updating the component-specific covariance matrices during the EM iterations. In the proposed model, the main computational burden similarly stems from the component-specific covariance matrices—those associated with the covariates and the response—as well as from updating the coefficients $\bbeta_j$. Consequently, the overall computational complexity remains proportional to $k \times d^2$, where $k$ is the number of mixture components and $d = d_{\bX} + d_{\bY}$ denotes the joint dimensionality of the covariates and responses.


\section{Simulation studies}
\label{sec:Simulation_Study}

In this section, we describe the results of two simulation studies conducted in \textsf{R} to investigate the impact of missing values on the MRRC with missing values in clustering and parameter recovery. To evaluate the clustering performance, we compute the adjusted Rand index \citep[ARI;][]{hubert_comparing_1985} of its resulting partition and the true group memberships used when simulating data. 
The ARI is a corrected version of the Rand index \citep{rand_objective_1971} with an expected value of $0$ under random partitions and a value of $1$ under perfect agreement. 
To evaluate the parameter recovery performance, we compute the absolute error for each parameter estimate.
The absolute error, say $\Delta \btheta$, in an estimate $\hat{\btheta}$ of a parameter $\btheta$ with dimension $d_R \times d_C$ is calculated as
\begin{align*}
    \Delta \btheta = \sum_{l = 1}^{d_R} \sum_{s = 1}^{d_C} \lvert \hat{\theta}_{ls} - \theta_{ls} \rvert.
\end{align*}
Here, the MRRC is fitted using two approaches. 
First, we employ the proposed extension that handles MAR values through the EM algorithm, referred to as the MRRC-EM model. 
Second, we apply the MRRC of \citet{dang2017multivariate} to the dataset after global mean imputation, referred to as the MRRC-GMI model.

For both simulation studies, synthetic datasets consisting of $n = 500$ observations are simulated from a two-component mixture ($k = 2$), with component sizes of exactly $n_1 = 150$ and $n_2 = 350$ ($\pi_1 = 0.3$ and $\pi_2 = 0.7$). Two scenarios for the marginal and conditional distributions are considered:
\begin{enumerate}    
\item \textbf{Gaussian:} Both marginal and conditional distributions in each mixture component are Gaussian.    
\item \textbf{Student's $\boldsymbol{t}$:} The marginal distribution, in each mixture component, is Student's $t$ with $7$ degrees of freedom. The conditional distribution in each mixture component is also Student's $t$ but with $5$ degrees of freedom.
\end{enumerate}
Then, for each scenario, $20$ complete datasets are generated. The functions \texttt{rmvtnorm()} and \texttt{rmvt()} from the \prog{R} package \textbf{mvtnorm} \citep{genz_mvtnorm_2021} were used to simulate Gaussian and Student's $t$ realizations, respectively. For each complete dataset above, missing values are then introduced separately to the covariates and responses under the MAR mechanism, using the function \texttt{ampute()} from the \prog{R} package \textbf{mice} \citep{buuren_mice_2011}. Here, we denote the missing rates by $r_\bY$ and $r_\bX$, where $r_\bY$ is the percentage of observations with at least one missing response, and $r_\bX$ is the percentage of observations with at least one missing covariate. Both $r_\bY$ and $r_\bX$ can take values in $\{ 10\%, 30\%, 50\%, 70\% \}$. 
Under each pair $\left( r_\bY, r_\bX \right)$, $10$ incomplete datasets are generated, each with a different arbitrary missing pattern. In total, there are $2 \times 20 \times 4 \times 4 \times 10 = 6400$ incomplete datasets per simulation study.

\subsection{Simulation study 1}

We first consider situations where each observation has both covariates and responses to be bivariate ($d_\bX = d_\bY = 2$). The covariates are generated using the following mean vectors and covariance matrices:
\begin{align*}
    \bmu_{\bX \mid 1} = \begin{bmatrix} 2 \\[1ex] 4 \end{bmatrix} \quad \bmu_{\bX \mid 2} = \begin{bmatrix} 0 \\[1ex] 0 \end{bmatrix}, \quad \bSigma_{\bX \mid 1} = \begin{bmatrix} 2 & 0 \\ 0 & 2 \end{bmatrix}, \quad \text{and}\quad \bSigma_{\bX \mid 2} = \begin{bmatrix} 1 & 0 \\ 0 & 1 \end{bmatrix}.
\end{align*}
The responses are generated using the following regression coefficient and covariance matrices:
\begin{align*}
    \bbeta_1 = \begin{bmatrix} 2 & -2 \\ -0.5 & 1.5 \\ -1 & 2 \end{bmatrix}, \quad \bbeta_2 = \begin{bmatrix} 0 & 1 \\ 2 & 2 \\ -1 & 1.5 \end{bmatrix}, \quad \bSigma_{\bY \mid 1} = \begin{bmatrix} 2 & 1 \\ 1 & 3 \end{bmatrix}, \quad \text{and}\quad \bSigma_{\bY \mid 2} = \begin{bmatrix} 2 & -1 \\ -1 & 3 \end{bmatrix}.
\end{align*}
Figure 1 in the Supplementary Material provides examples of a complete dataset in simulation study 1.

\figurename~\ref{fig:Mean_ARI} summarizes the average ARI under different scenarios and missing rates in the responses ($r_\bX$) and covariates ($r_\bY$). Full results are available in Table 1 in the Supplementary Material. Regarding the MRRC-EM model, for a fixed missing rate in the responses, the ARI decreases linearly as more observations have missing covariates (larger $r_\bX$). Likewise, for a fixed missing rate in the covariates, the ARI is also lower as more observations have missing responses (larger $r_\bY$). 
Under the Student's $t$ scenarios, these trends are maintained, albeit with lower ARI values than those observed in the Gaussian case. 
For the MRRC-GMI model, a similar linearly decreasing trend in the ARI can be observed. The MRRC-GMI model shows better clustering performance in most Gaussian and Student’s t scenarios. 
The MRRC-EM model only performs better in the Student’s $t$ scenarios where the missing rates in the covariates are $r_{\bX} = 50\%$ and $r_{\bX} = 70\%$. 
One explanation for this phenomenon is that global mean imputation can pull incomplete observations closer to the group centroids, making it easier to assign them to the correct groups. 
However, when describing group characteristics effectively, accurate parameter estimation, in addition to clustering performance, should also be emphasized.

The results regarding parameter recovery of both models are given in Tables 3, 4, 5, and 6 in the Supplementary Material. 
Overall, the absolute error in the parameter estimate tends to increase with larger missing rates. When marginal and conditional distributions are not Gaussian, the absolute error also increases. It is important to note that the MRRC-GMI model produces noticeably larger absolute errors than the MRRC-EM model, especially for the covariance matrices and regression coefficients. These results highlight the advantage of the proposed model compared to using imputation as a pre-processing step.

\subsection{Simulation study 2}

We consider higher-dimensional situations where each observation has $d_{\bX} = 6$ covariates and $d_{\bY} = 4$ responses. The mean vectors, covariance matrices, and regression coefficients used to generate the covariates and responses can be found in the Supplementary Material. Figure 2 in the Supplementary Material provides examples of a complete dataset in this simulation study.

figurename~\ref{fig:Mean_ARI10} summarizes the average ARI under different scenarios and missing rates in the responses ($r_\bX$) and covariates ($r_\bY$). Full results are available in Table 2 in the Supplementary Material. The results regarding parameter recovery of both models are given in Tables 7, 8, 9, and 10 in the Supplementary Material. The ARI is generally higher than in Simulation study 1 because more information from the covariates is available. However, the MRRC-EM model achieves much higher ARIs and much lower absolute errors than the MRRC-GMI model, which illustrates the disadvantage of imputation in higher dimensions.


\begin{figure}[!t]
    \centering
    \includegraphics[width=2.55in]{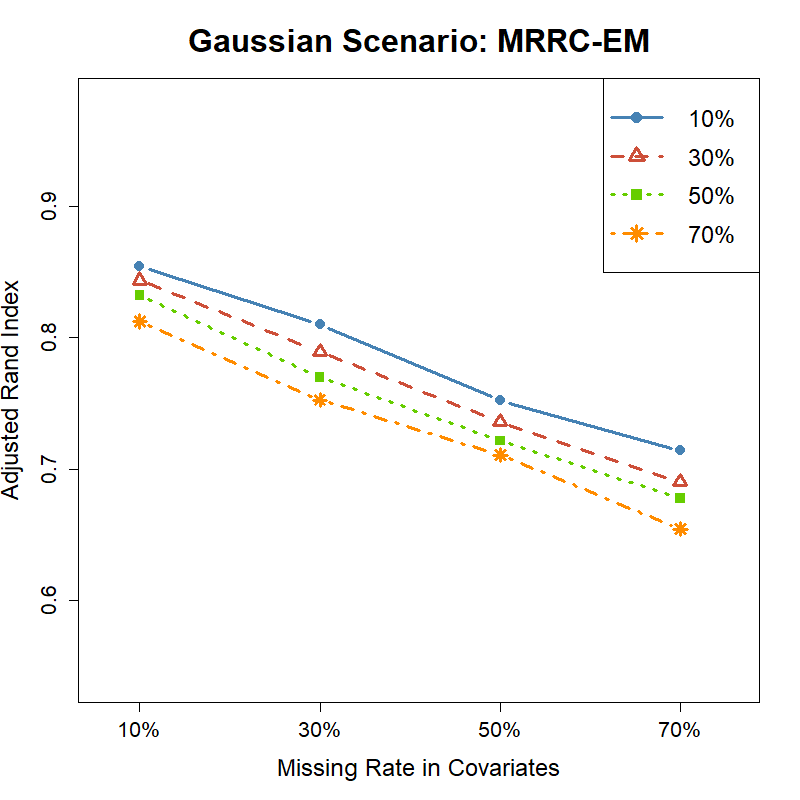}
    \includegraphics[width=2.55in]{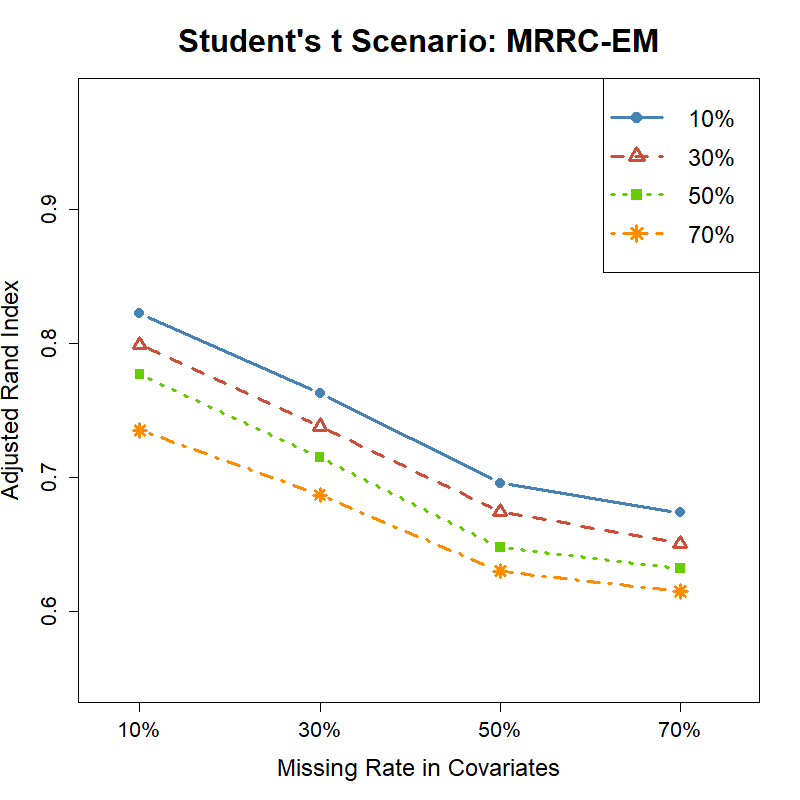}
        \includegraphics[width=2.55in]{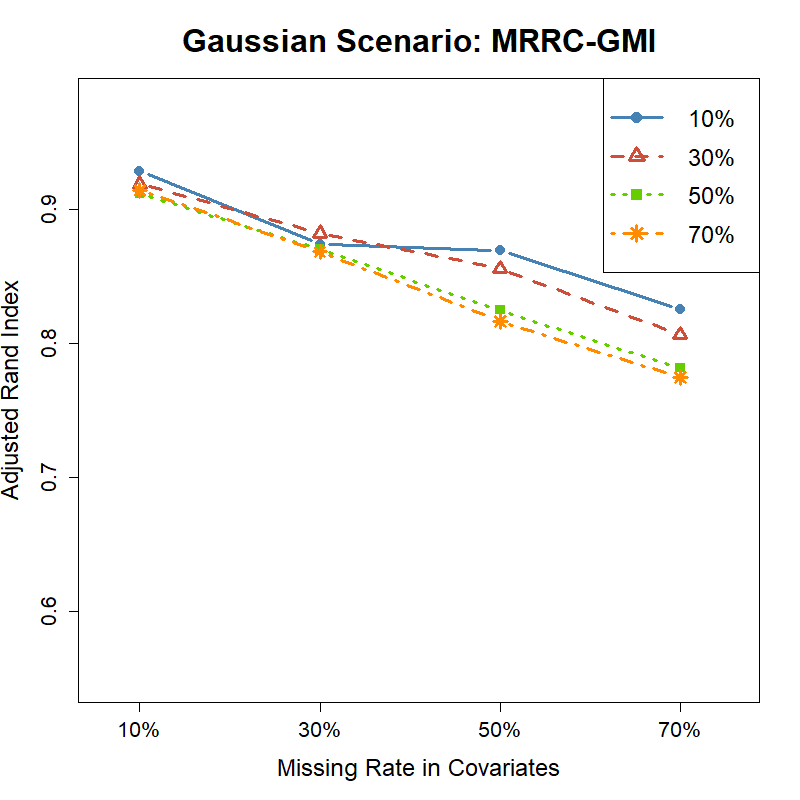}
    \includegraphics[width=2.55in]{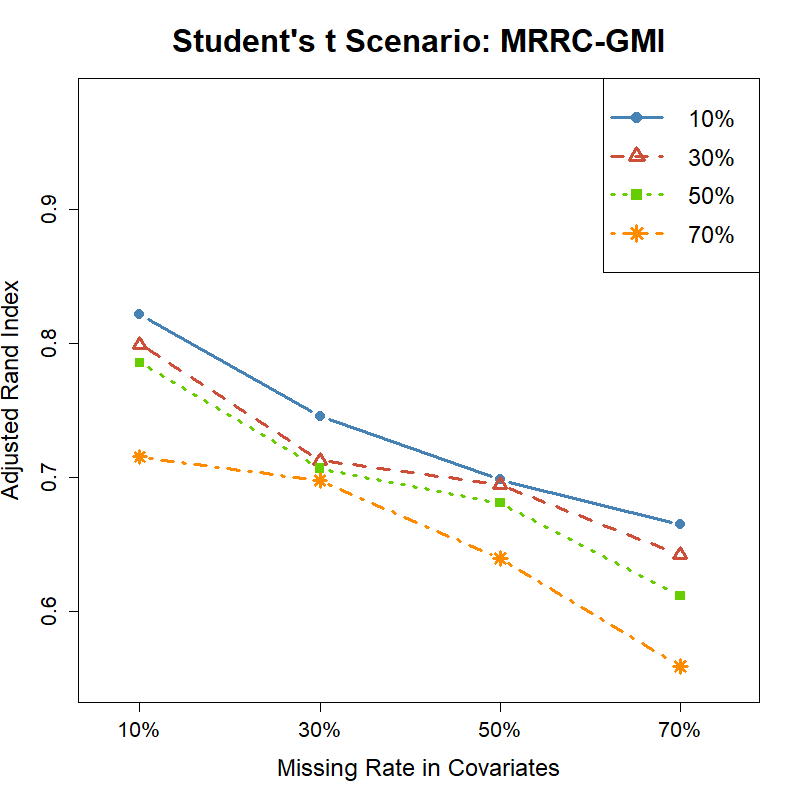}
    \caption{Line plots of the average adjusted Rand index as a function of the covariate missing rate $r_\mathbf{X}$ under different scenarios, based on simulation study 1 ($d_{\mathbf{Y}} = 2$, $d_{\mathbf{X}} = 2$). Colors, line styles, and point shapes represent the response missing rates $r_\mathbf{Y}$.}
    \label{fig:Mean_ARI}
\end{figure}

\begin{figure}[!t]
    \centering
    \includegraphics[width=2.55in]{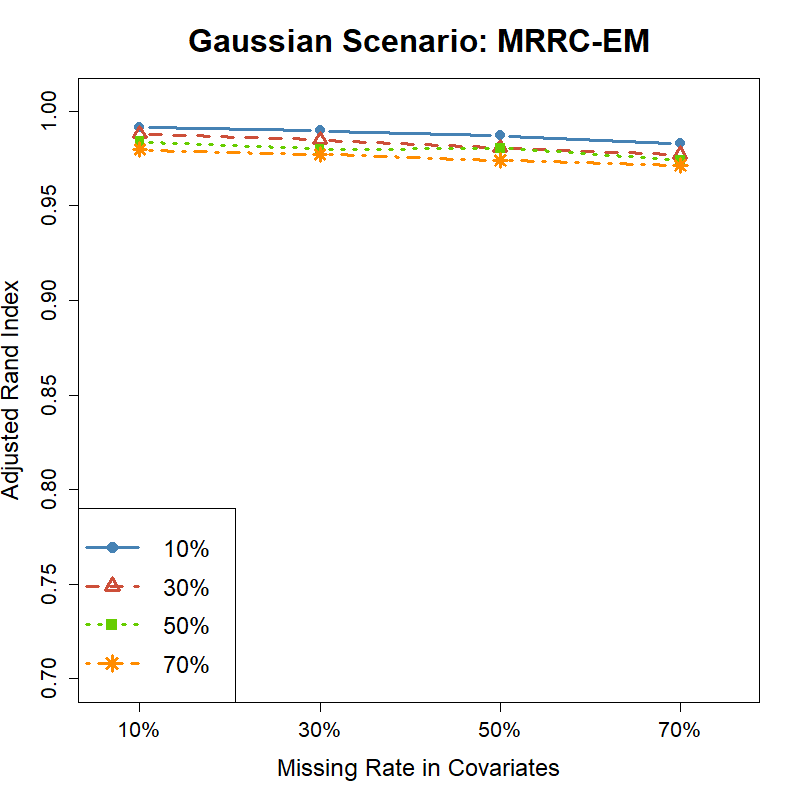}
    \includegraphics[width=2.55in]{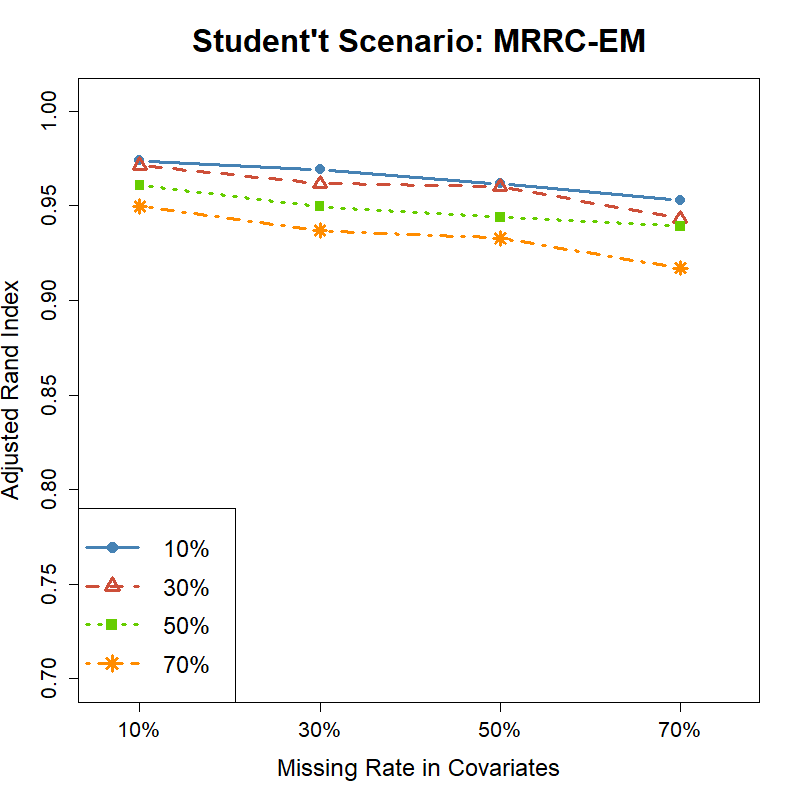}
        \includegraphics[width=2.55in]{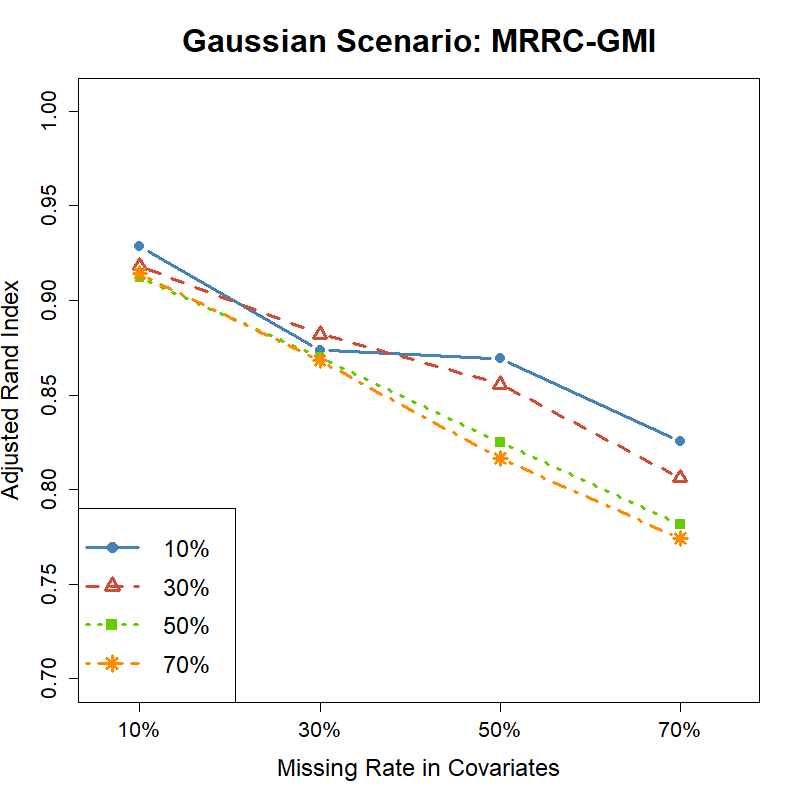}
    \includegraphics[width=2.55in]{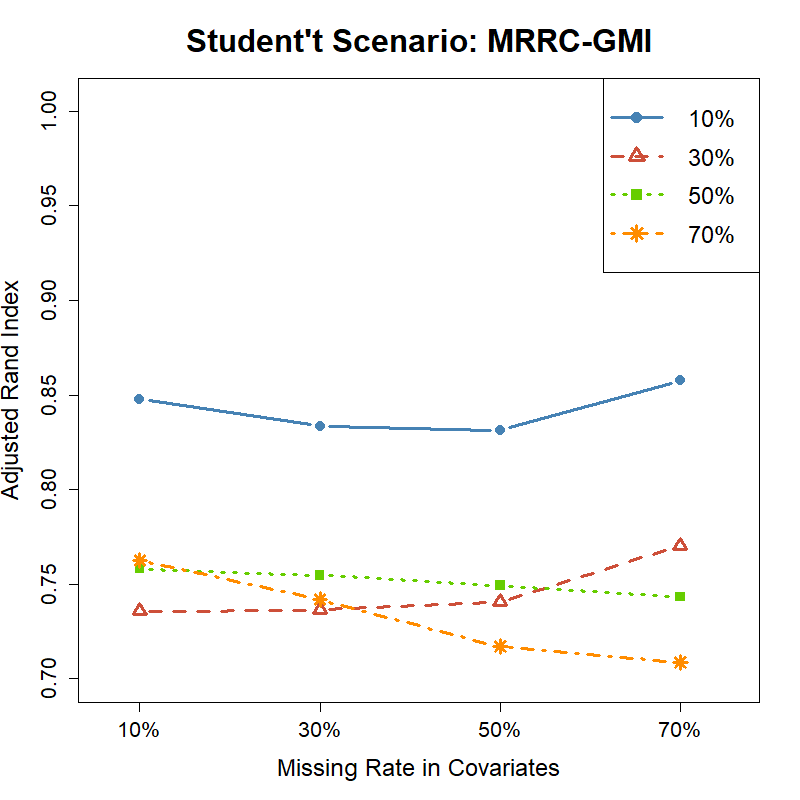}
    \caption{Line plots of the average adjusted Rand index  as a function of the covariate missing rate $r_\mathbf{X}$ under different scenarios, based on simulation study 1 ($d_{\mathbf{Y}} = 4$, $d_{\mathbf{X}} = 6$). Colors, line styles, and point shapes represent the response missing rates $r_\mathbf{Y}$.}
    \label{fig:Mean_ARI10}
\end{figure}


\section{Real data example: Automobile dataset}
\label{sec:real data}

In this section, we demonstrate the proposed methodology using the well-known \textit{Automobile} dataset \citep{schlimmer1985automobile}, which is freely accessible from the UCI Machine Learning Repository at \url{https://archive.ics.uci.edu/dataset/10/automobile}.
The dataset comprises $n = 205$ observations on cars, each described by a set of specifications, including key attributes such as \code{Price}, \code{Symboling rating}, and \code{Normalized losses} (in use when compared to comparable cars). 
\code{Symboling rating} is a categorical variable represented as integer scores ranging from $-3$ to $3$, where higher values indicate greater insurance risk.
We will later use this variable for external validation and interpretation. 
For our analysis, we consider all $d = 15$ continuous variables available in the dataset. 
Among these, \code{Price} ($Y_1$) and \code{Normalized losses} ($Y_2$) are interpreted as response variables, as they are typically seen as outcomes influenced by the intrinsic characteristics of the vehicles. 
The remaining $d_{\bX} = 13$ continuous variables, which describe features such as \code{Engine size}, \code{Horsepower}, dimensions (\code{Length}, \code{Length}, and \code{Width}), and fuel consumption (\code{City MPG} and \code{Highway MPG}), are treated as covariates.
In this context, it is natural to evaluate how these car characteristics influence price and normalized losses through a regression framework, rather than analyzing all variables jointly without distinguishing between predictors and responses. 
The reader should be aware that, in the literature, \code{Price} and \code{Normalized losses} are typically considered separately as response variables to be predicted from the remaining car characteristics, usually within the framework of univariate regression models (see, e.g., \citealp{selvaratnam2021feature} and \citealp{torgo1997functional}).
In contrast, our approach aims to enhance this practice by jointly modeling the two response variables, which allows us to account for potential dependence between them and, crucially, to let such dependence vary across latent clusters that may exist in the data.
Finally, it is also worth noting that both the response variables and several covariates contain missing values, which adds complexity to the analysis and justifies the use of robust methodologies capable of handling incomplete data.
\begin{table}[!ht]
\centering
\caption{Description of the $15$ continuous variables in the Automobile dataset before standardization.}
\label{tab:Automobile_continuous}
\begin{tabular}{llrrrrr}
\toprule
Type                   & Variable                  & No.~of Missing Values & Min     & Max      & Mean & Std.~Dev. \\ 
\midrule
Response  & \code{Normalized losses} & 41     & 65.00   & 256.00   & 122.00      & 35.44            \\
                   & \code{Price}             & 4      & 5118.00 & 45400.00 & 13207.13    & 7947.07          \\ 
                   \midrule
Covariate & \code{Wheel base}         & 0      & 86.60   & 120.90   & 98.76       & 6.02             \\
                   & \code{Length}            & 0      & 141.10  & 208.10   & 174.05      & 12.34            \\
                   & \code{Width}             & 0      & 60.30   & 72.30    & 65.91       & 2.15             \\
                   & \code{Height}           & 0      & 47.80   & 59.80    & 53.72       & 2.44             \\
                   & \code{Curb weight}       & 0      & 1488.00 & 4066.00  & 2555.57     & 520.68           \\
                   & \code{Engine size}       & 0      & 61.00   & 326.00   & 126.91      & 41.64            \\
                   & \code{Bore}              & 4      & 2.54    & 3.94     & 3.33        & 0.27             \\
                   & \code{Stroke}            & 4      & 2.07    & 4.17     & 3.26        & 0.32             \\
                   & \code{Compression ratio} & 0      & 7.00    & 23.00    & 10.14       & 3.97             \\
                   & \code{Horsepower}        & 2      & 48.00   & 288.00   & 104.26      & 39.71            \\
                   & \code{Peak RPM}          & 2      & 4150.00 & 6600.00  & 5125.37     & 479.33           \\
                   & \code{City MPG}          & 0      & 13.00   & 49.00    & 25.22       & 6.54             \\
                   & \code{Highway MPG}       & 0      & 16.00   & 54.00    & 30.75       & 6.89             \\ 
                   \bottomrule
\end{tabular}
\end{table}

In our analysis, each variable is standardized by subtracting its sample mean and dividing by its sample standard deviation. 
This step is essential because, as we can see from \tablename~\ref{tab:Automobile_continuous}, the variables are measured on different scales; standardization ensures that all variables have zero mean and unit variance, making them comparable.
\figurename~\ref{fig:Automobile_scatter1} displays pairwise scatterplots of the standardized data. While both responses are considered, only $6$ out of the $13$ covariates are shown for aesthetic reasons. 
It is also worth noting that, due to the presence of missing values in the dataset (cf.~\tablename~\ref{tab:Automobile_continuous}), fewer than 205 data points are actually plotted in \figurename~\ref{fig:Automobile_scatter1}.
The displayed subset of scatterplots is enough to reveal several key features that motivate our modeling approach. Specifically, a dependence structure is evident between the two responses, and a clear group structure emerges in the data. 
This is particularly evident in the relationship between \code{Price} and several covariates, which supports the inclusion of a regression component in the mixture model. Moreover, associations are also present among the covariates themselves, such as between \code{Horsepower} and \code{City MPG}. These observations justify the use of a random covariate approach, which accounts for assignment dependence in the clustering.
Finally, the presence of missing data highlights the need for appropriate handling methods to preserve as much statistical information as possible.



\begin{figure}[th]
    \centering
    \includegraphics[width=4.30in]{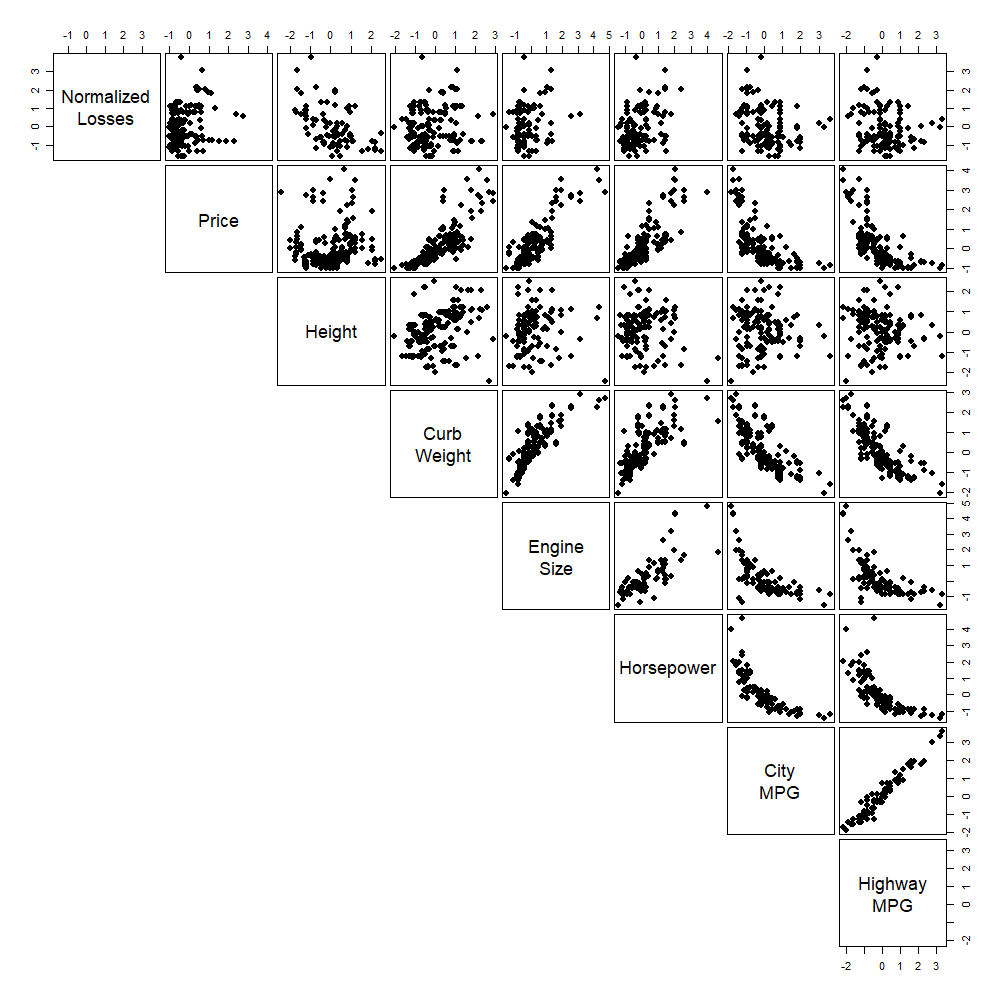}
    \caption{Subset of pairwise scatterplots of the normalized Automobile data, with missing values ignored.}
    \label{fig:Automobile_scatter1}
\end{figure}

We start by estimating the parameters of a Gaussian mixture model (GMM) on all 15 standardized variables, accounting for missing data via ML estimation. 
This is carried out using the EM algorithm proposed by \citet{ghahramani_learning_1994}, as implemented in the \texttt{MGHM()} function from the \textbf{MixtureMissing} package \citep{TongJSS}.
The resulting hard partition is then used to initialize the proposed EM algorithm to find ML estimates of the Gaussian mixture of regressions with random covariates (G-MRRC) model in the presence of MAR values. 
For both the GMM and G-MRRC, the number of components, $k$, is allowed to vary between 1 and 4.

\begin{table}[!ht]
\centering
\caption{Fitting results of the G-MRRC fitted to the normalized Automobile data.}
\label{tab:Automobile_fitting}
\begin{tabular}{lrrr}
\toprule
  $k$      & Number of parameters (\#par) & Log-likelihood & BIC     \\ \midrule
$1$ & $135$               & $-2412.10$       & $5764.55$ \\
$2$ & $271$               & $-1765.49$       & $5119.78$ \\
$3$ & $407$               & $-1429.54$       & $5370.84$ \\
$4$ & $543$               & $-1049.03$       & $5354.86$ \\ \bottomrule
\end{tabular}
\end{table}

In real applications, the number of components $k$ is usually not known \textit{a priori} and thus needs to be chosen using the given data. One common strategy is to fit the assumed mixture model multiple times over a range of values for $k$ and select the one that minimizes the Bayesian information criterion \citep[BIC;][]{schwarz_estimating_1978}. 
Mathematically, $\text{BIC} = -2 l ( \hat{\bvartheta} ) +  \text{\#par} \log n,
$ where $\hat{\bvartheta}$ is the ML estimate of $\bvartheta$ at convergence of the EM algorithm, $l( \hat{\bvartheta} )$ is the associated observed-data log-likelihood, and \#par is the number of parameters in the model, i.e., the cardinality of $\bvartheta$. 
\tablename~\ref{tab:Automobile_fitting} shows the fitting results of the G-MRRC, including the number of parameters, log-likelihood values at convergence, and BIC. 
The optimal solution is $k = 2$, corresponding to the minimum BIC. 
Note that GMM and G-MRRC give the same partition, but G-MRRC offers an advantage in interpretability, as illustrated in the remainder of this section.

There are $81$ and $124$ cars in clusters $1$ and $2$, respectively. 
Tables~\ref{tab:Automobile_make} and \ref{tab:Automobile_symboling} explore \code{Make} and \code{Symboling rating}, which were not used to obtain the clusters, of cars in each cluster.
Cluster $1$ is smaller in size but appears to contain more premium and luxury brands such as Audi, BMW, Jaguar, Mercedes-Benz, Porsche, and Volvo. 
In contrast, cluster $2$ is characterized by more affordable brands, including Chevrolet, Dodge, Honda, Isuzu, Mitsubishi, Nissan, Plymouth, Renault, Subaru, and Toyota. 
In terms of riskiness, cluster $1$ has more low-risk and fewer high-risk cars, which aligns with the presence of luxury brands. 
On the other hand, cluster 2 contains more cars with higher risk ratings, suggesting a potential trade-off between affordability and safety.



\begin{table}[!ht]
\centering
\caption{Makes of cars in each cluster obtained by the G-MRRC with $k = 2$ components of sizes $81$ and $124$, respectively.}
\label{tab:Automobile_make}
\begin{tabular}{rrrlrrr}
  \toprule
\code{Make}          & Cluster 1 & Cluster 2 &  & \code{Make}       & Cluster 1 & Cluster 2 \\ \cline{1-3} \cline{5-7} 
Alfa-Romeo    & 3         & 0         &  & Mitsubishi & 3         & 10        \\
Audi          & 7         & 0         &  & Nissan     & 6         & 12        \\
BMW           & 8         & 0         &  & Peugeot    & 11        & 0         \\
Chevrolet     & 0         & 3         &  & Plymouth   & 1         & 6         \\
Dodge         & 1         & 8         &  & Porsche    & 5         & 0         \\
Honda         & 0         & 13        &  & Renault    & 0         & 2         \\
Isuzu         & 1         & 3         &  & Saab       & 3         & 3         \\
Jaguar        & 3         & 0         &  & Subaru     & 0         & 12        \\
Mazda         & 2         & 15        &  & Toyota     & 5         & 27        \\
Mercedes-Benz & 8         & 0         &  & Volkswagen & 2         & 10        \\
Mercury       & 1         & 0         &  & Volvo      & 11        & 0         \\ \bottomrule
\end{tabular}
\end{table}


\begin{table}[!ht]
\centering
\caption{Symboling ratings of cars in each cluster obtained by the G-MRRC.
A symboling rating ranges from $-3$ to $3$ in which the higher end implies a riskier auto.}
\label{tab:Automobile_symboling}
\begin{tabular}{rrr}
  \toprule
\code{Symboling rating} & Cluster 1 & Cluster 2 \\ 
  \midrule
  $-3$ & $0$ & $0$ \\
$-2$ &   $3$ &   $0$ \\ 
  $-1$ &  $14$ &   $8$ \\ 
  $0$ &  $30$ &  $37$ \\ 
  $1$ &   $9$ &  $45$ \\ 
  $2$ &   $6$ &  $26$ \\ 
  $3$ &  $19$ &   $8$ \\ 
   \midrule
   Total & $81$ & $124$ \\ \bottomrule
\end{tabular}
\end{table}


\figurename~\ref{fig:Autombile_scatter MRRC} presents the ``model-based'' counterpart of \figurename~\ref{fig:Automobile_scatter1}, with points colored according to the maximum \textit{a posteriori} partition obtained from the best-fitting two-component G-MRRC model. In these pairwise scatterplots, missing values are now imputed based on the model estimates from the E-step of the EM algorithm, and the number of observations displayed thus matches the full sample size ($n = 205$). 
A clear separation between the two groups is evident across all displayed variables. Among the response variables, \code{Price} demonstrates greater discriminative power than \code{Normalized~Losses}. 
Furthermore, all covariates---both individually and in pairwise combinations---contribute meaningfully to the resulting partition, providing strong evidence of assignment dependence in the clustering process.
\begin{figure}[!ht]
    \centering
    \includegraphics[width=4.30in]{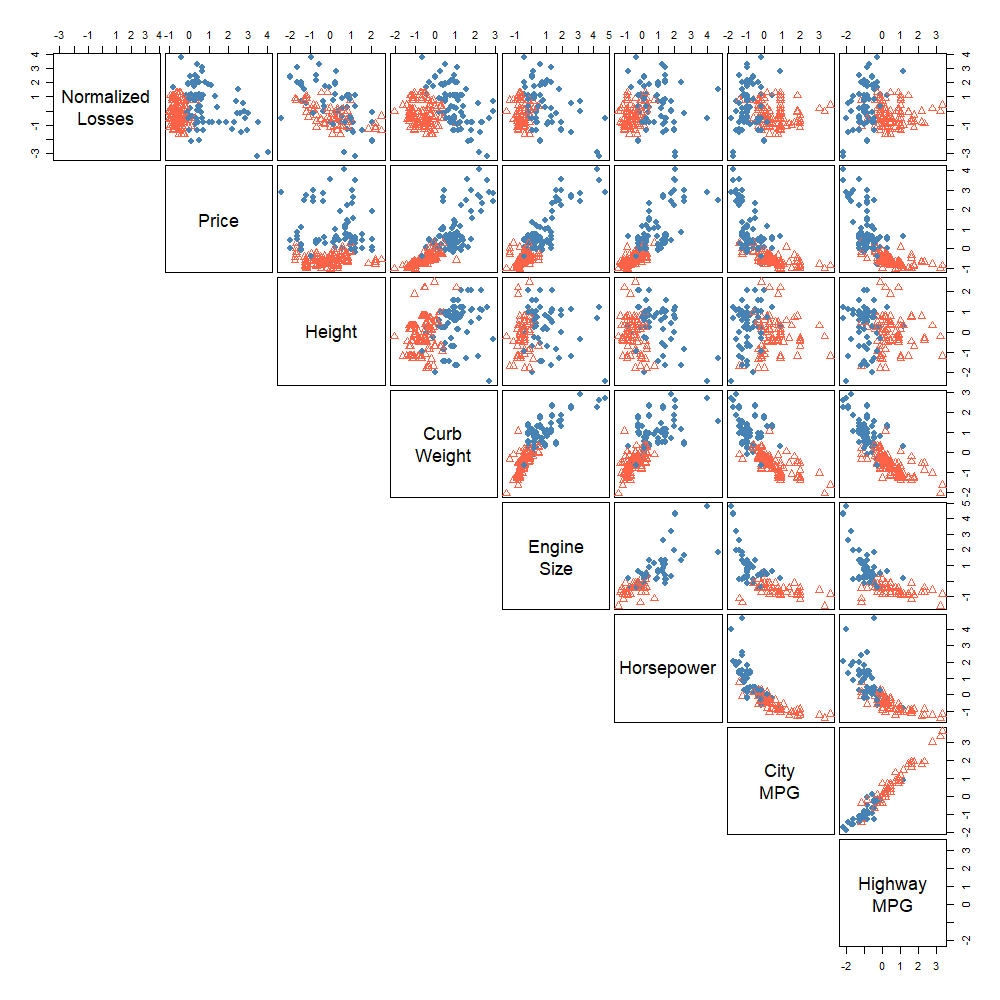}
    \caption{Subset of pairwise scatterplots of the normalized Automobile data, with missing values imputed and points colored according to the best-fitting two-component G-MRRC model.}
    \label{fig:Autombile_scatter MRRC}
\end{figure}


\figurename~\ref{fig:auto_line_plot} visualizes the mean values of various standardized covariates for each cluster. 
The plot reveals noticeable differences between the two clusters across most covariates, except for \code{Height}, \code{Stroke}, \code{Compression ratio}, and \code{Peak RPM}. 
The similarity in these covariates may reflect common design standards and engineering constraints rather than factors like cost or riskiness. 
As expected, cluster $1$, which includes luxury cars, consists of more powerful vehicles with larger engines. 
Cluster $2$ contains smaller cars with better fuel consumption.
\begin{figure}[!ht]
    \centering
    \includegraphics[width=3.5in]{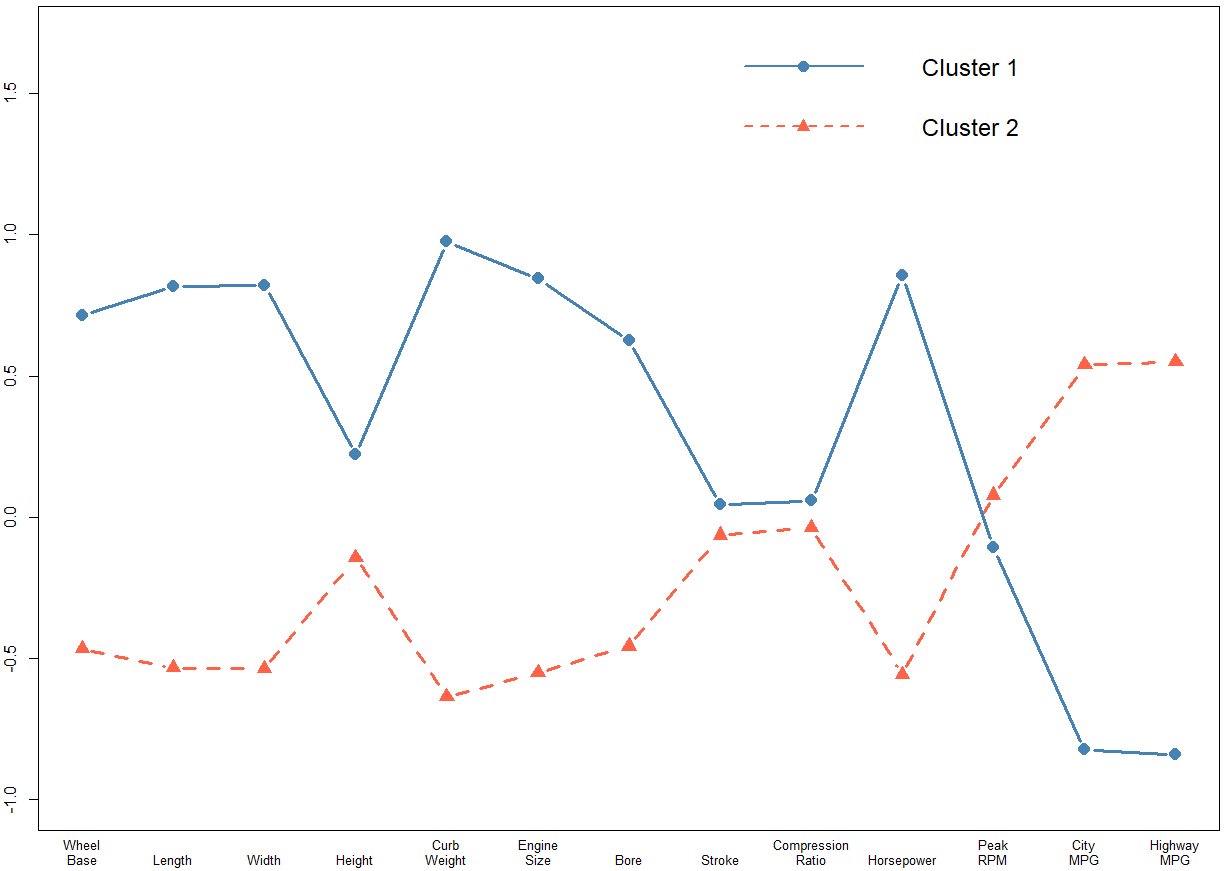}
    \caption{Line plot showing mean values of covariates according to the G-MRRC with $k = 2$.}
    \label{fig:auto_line_plot}
\end{figure}

\tablename~\ref{tab:Automobile_beta} presents coefficient estimates of standardized covariates for each cluster. 
Each coefficient estimate describes the average change in the response for a one-unit change in the covariate while holding other covariates constant. 
In cluster $1$, higher \code{Curb weight} and \code{Height} are associated with higher \code{Normalized losses}, while \code{Engine size} decreases \code{Normalized losses} and increases \code{Price}. 
In cluster $1$, among luxury cars, \code{City MPG} lowers both \code{Normalized losses} and \code{Price}. 
In cluster $2$, engine size contributes to higher \code{Normalized losses} and lower \code{Price}; \code{Highway MPG} is also associated with higher \code{Normalized losses}, while \code{Horsepower} is linked to higher \code{Price}. 
\code{Curb weight} slightly increases the \code{Price} in this cluster as well.
\begin{table}[!ht]
\centering
\caption{Coefficient of standardized covariates in each cluster obtained by the G-MRRC with $k = 2$.}
\label{tab:Automobile_beta}
\begin{tabular}{rrrlrr}
\toprule
\multicolumn{1}{l}{} & \multicolumn{2}{r}{Cluster $1$} & \multicolumn{1}{r}{} & \multicolumn{2}{r}{Cluster $2$} \\ 
\cline{2-3} \cline{5-6} 
                     & \code{Normalized Losses}   & \code{Price}   &                      & \code{Normalized Losses}   & \code{Price}   \\ 
                     \midrule 
Intercept            & $1.90$                & $-0.27$   &                      & $-0.13$               & $-0.42$   \\
\code{Wheel base}           & $0.42$                & $-0.09$   &                      & $-0.82$               & $-0.02$   \\
\code{Length}               & $-0.93$               & $0.17$    &                      & $0.53$                & $0.06$    \\
\code{Width}                & $0.18$                & $0.14$    &                      & $-0.18$               & $0.10$    \\
\code{Height}               & $-0.81$               &$ 0.11$    &                      & $-0.37$               & $0.01$    \\
\code{Curb weight}          & $0.66$                & $-0.30$   &                      & $-0.06$               & $0.26$    \\
\code{Engine size}          & $-0.51$               & $0.80$    &                      & $0.99$                & $-0.49$   \\
\code{Bore}                 & $-0.57$               & $0.12$    &                      & $-0.38$               & $0.08$    \\
\code{Stroke}               & $0.08$                & $-0.06$   &                      & $-0.15$               & $0.08$    \\
\code{Compression ratio}    & $-0.27$               & $0.38$    &                      & $0.03$                & $0.04$    \\
\code{Horsepower}           & $-0.27$               & $-0.12$   &                      & $-0.02$               & $0.28$    \\
\code{Peak RPM}             & $-0.11$               & $0.12$    &                      & $0.18$                & $-0.02$   \\
\code{City MPG}             & $-1.15$               & $-1.48$   &                      & $-0.28$               & $0.03$    \\
\code{Highway MPG}          & $1.80$                & $0.73$    &                      & $0.21$                & $-0.03$   \\ \bottomrule
\end{tabular}
\end{table}

\tablename~\ref{tab:Automobile_SigmaY} shows the estimated correlation matrix of the response variables within each cluster. 
In both clusters, there is a mild negative correlation between \code{Normalized losses} and \code{Price}. 
\figurename~\ref{fig:Automobile_SigmaX} shows the correlation matrix of the covariates for each cluster, indicating that the linear relationships among most covariates are consistent across clusters. 
Specifically, \code{Horsepower} is more positively correlated with other covariates in Cluster $2$.
\begin{table}[!ht]
\centering
\caption{Correlation matrix of the responses in each cluster obtained by the G-MRRC with $k = 2$.}
\label{tab:Automobile_SigmaY}
\begin{tabular}{rrrlrr}
\toprule
\multicolumn{1}{l}{} & \multicolumn{2}{r}{Cluster $1$} &  & \multicolumn{2}{r}{Cluster $2$} \\ \cline{2-3} \cline{5-6} 
                     & \code{Normalized Losses}  & \code{Price}    &  & \code{Normalized Losses}  & \code{Price}    \\ \midrule
\code{Normalized Losses}    & $1.000$            & $-0.055$ &  & $1.000$            & $-0.092$ \\
\code{Price}                & $-0.055$           & $1.000$  &  & $-0.092$           & $1.000$  \\ \bottomrule
\end{tabular}
\end{table}


\begin{figure}[!t]
    \centering
    \includegraphics[width=3.2in]{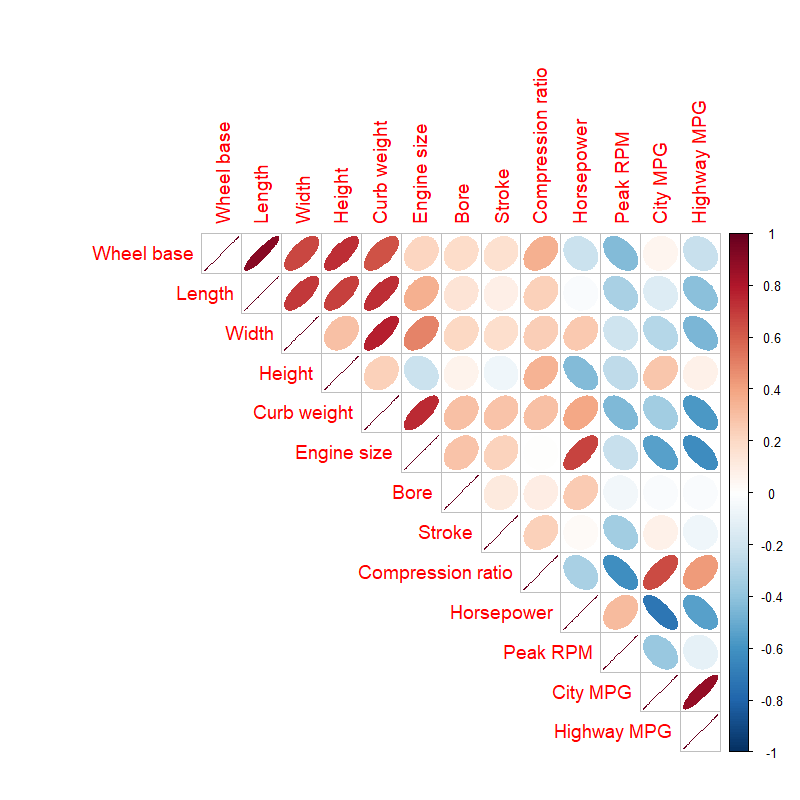}
    \includegraphics[width=3.2in]{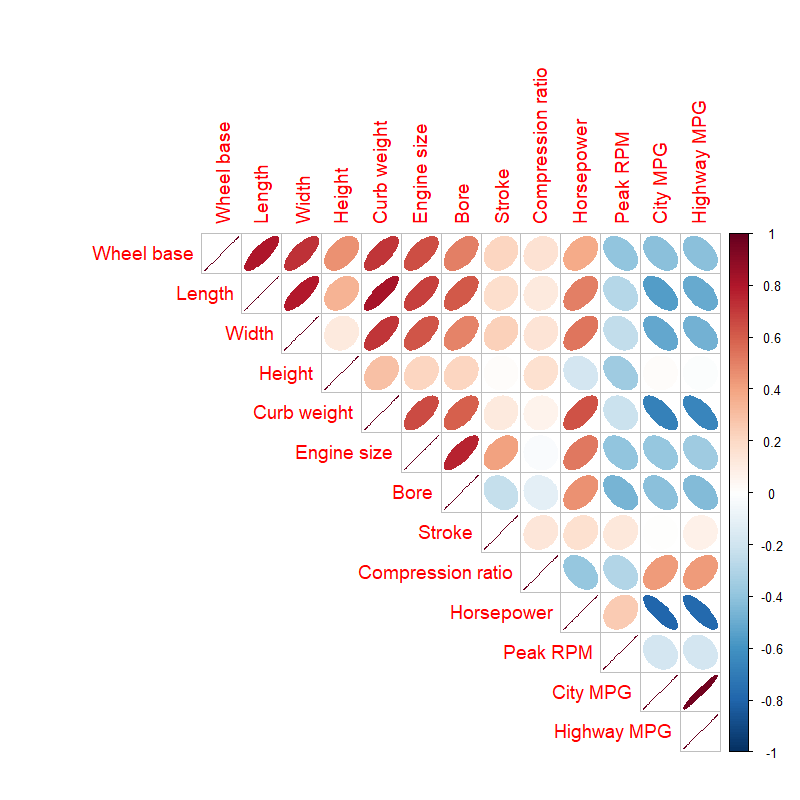}
    \caption{Correlation plots of the covariates in each cluster obtained by the G-MRRC with $k = 2$.}
    \label{fig:Automobile_SigmaX}
\end{figure}




\section{Conclusion and future developments}
\label{sec:conclusion}

This paper addressed the challenge of estimating multivariate linear regression models with multiple random covariates in the presence of missing-at-random (MAR) values affecting both response and covariate spaces. We proposed a maximum likelihood (ML) framework grounded in the conditional-marginal factorization of a multivariate Gaussian distribution, which enhances interpretability by preserving the distinct roles of responses and covariates. The expectation-maximization (EM) algorithm was employed for parameter estimation and missing value imputation, offering a principled solution that makes full use of the available data.

We further extended this framework to model-based clustering by introducing a mixture of multivariate linear regressions with multiple random covariates. 
This formulation accommodates MAR values in both responses and covariates, enabling soft clustering under incomplete data and overcoming limitations of fixed-covariate mixture models. By allowing covariates to contribute to the clustering process—through assignment dependence—the method supports more realistic and informative partitions of heterogeneous data.

Empirical evaluations through simulation studies confirmed the robustness and accuracy of the proposed approach in terms of clustering quality and parameter recovery, even under moderate departures from the Gaussian distribution. The real-data application to the Automobile dataset further demonstrated the method’s effectiveness in practical scenarios with substantial missingness.

Future developments of this work will aim to extend the current framework in several directions. A particularly promising avenue involves relaxing the Gaussian assumption through the adoption of copula-based formulations or more flexible distributions, such as the contaminated normal, the  Student’s $t$, or other heavy-tailed and asymmetric distributions, to enhance robustness against outliers and non-Gaussian behaviors. The Gaussian assumption could be relaxed for the $\bX$, the $\bY$, or both.
Nevertheless, it is important to recognize that one of the main strengths of the proposed methodology lies in the availability of closed-form expressions for all E-step quantities and M-step updates. This property results from the analytical tractability of the Gaussian distribution, under which the joint distribution of $\bX$ and $\bY$ remains Gaussian. When the Gaussian assumption is relaxed, this convenient structure no longer holds, making the extension of the current model to non-Gaussian distributions a challenging task. A natural first step in this direction is to explore distributions that can be formulated as discrete-scale mixtures of normal distributions. Additionally, the incorporation of regularization techniques could facilitate high-dimensional settings where the number of variables is large relative to the sample size. Bayesian formulations may also be explored to provide posterior uncertainty quantification and integrate prior knowledge in both regression and clustering settings. Finally, algorithmic improvements—such as accelerated EM variants or variational approximations—may enhance computational efficiency, especially for large-scale or streaming datasets. These extensions would broaden the applicability of the proposed methodology to a wider range of complex, real-world problems involving incomplete and heterogeneous data.


\section*{Acknowledgments}

This work was supported by \textit{NSF grant N0. 2209974} (Dr. Cristina Tortora). Dr. Antonio Punzo was funded by the European Union - NextGenerationEU, Mission 4, Component 2, in the framework of the GRINS - Growing Resilient, INclusive and Sustainable project (GRINS PE00000018 – CUP E63C22002120006). The views and opinions expressed are solely those of the authors and do not necessarily reflect those of the European Union, nor can the European Union be held responsible for them.


\bibliography{bib}

\end{document}